\documentclass[reprint, superscriptaddress,longbibliography]{revtex4-2}

\usepackage[T1]{fontenc}
\usepackage[utf8]{inputenc}

\usepackage{float}
\usepackage{graphicx}
\usepackage{epsfig,epstopdf}
\usepackage[table]{xcolor}

\usepackage[english]{babel}

\PassOptionsToPackage{hyphens}{url}
\usepackage[hyphenbreaks]{breakurl}

\usepackage{amsmath}
\usepackage{amsfonts}
\usepackage{amssymb,amsthm}
\usepackage{bbm}
\usepackage{mathtools}
\usepackage{tikz}
\usetikzlibrary{backgrounds,intersections, calc,through} 
\usepackage{quantikz}
\usepackage{relsize}

\usepackage{physics}
\usepackage{xprintlen}

\usepackage{complexity}

\usepackage{enumerate}

\PassOptionsToPackage{pdftex,hyperfootnotes=false,pdfpagelabels}{hyperref}
\usepackage{hyperref}
\hypersetup{
    colorlinks=true, linktocpage=true, pdfstartpage=3, pdfstartview=FitV,
    breaklinks=true, pdfpagemode=UseNone, pageanchor=true, pdfpagemode=UseOutlines,
    plainpages=false, bookmarksnumbered, bookmarksopen=true, bookmarksopenlevel=1,
    hypertexnames=true, pdfhighlight=/O,
    urlcolor=blue!60!black, linkcolor=blue!60!black, citecolor=gray, 
}   

\usepackage[capitalise,compress]{cleveref}

\makeatletter 
\hypersetup{
      pdftitle = {
            On spoofing deep thermalization
            },
      pdfauthor = {
            Shantanav Chakraborty, 
            Soumik Ghosh, 
            Tudor Giurgica-Tiron 
            },
        }
\makeatother

%%%%%%%%%%%%%%%%%%%%%%%%%%%%%%%%%%%%%%%%%%%%%%%%%%%%%%%%%%%%%%
% Environments
%%%%%%%%%%%%%%%%%%%%%%%%%%%%%%%%%%%%%%%%%%%%%%%%%%%%%%%%%%%
\newtheorem{theorem}{Theorem}%[section]
%[section]
\newtheorem{definition}[theorem]{Definition}
\newtheorem{lemma}[theorem]{Lemma}

\newtheorem*{remark}{Remark}

%%%%%%%%%%%%%%%%%%%%%%%%%%%%%%%%%%%%%%%%%%%%%%%%%%%%%%%%%%%%%%
% Other definitions
%%%%%%%%%%%%%%%%%%%%%%%%%%%%%%%%%%%%%%%%%%%%%%%%%%%%%%%%%%%
% \DeclareMathOperator{\tr}{Tr}
% \newcommand{\ket}[1]{\vert{#1}\rangle}
% \newcommand{\bra}[1]{\langle{#1}\vert}
% \newcommand{\braket}[2]{\langle{#1}\vert #2 \rangle}
%\newcommand{\proj}[1]{\ket{#1}\bra{#1}}

% \newcommand{\abs}[1]{\vert #1 \vert}

% \newcommand{\norm}[1]{\Vert #1 \Vert}

% \DeclareMathOperator{\var}{Var}

\renewcommand{\l}[0]{\langle}

%%%%%%%%%%%%%%%%%%%%%%%%%%%%%%%%%%%%%%%%%%%%%%%%%%%%%%%%%%%%%%
%Some shortcuts added for edits
%%%%%%%%%%%%%%%%%%%%%%%%%%%%%%%%%%%%%%%%%%%%%%%%%%%%%%%%%%%%%%

%%% -------------------------complexity

\newclass{\np}{NP}
\newclass{\sP}{\#P}
\newclass{\postBQP}{postBQP}
\newclass{\sampBPP}{sampBPP}

\usepackage[all]{hypcap}
\usepackage{url}
\usepackage{textcomp, complexity}
\usepackage{tikz}
\usetikzlibrary{decorations.pathreplacing}
\usepackage{physics}
\usepackage{tabularx,environ}
\usepackage{multirow}
\usepackage{mathpazo}
\usepackage{mathrsfs}

\let\oldnl\nl
\newcommand{\nonl}{\renewcommand{\nl}{\let\nl\oldnl}}
\newcommand{\ccol}[2]{\color{#1}#2\color{black}}

\usepackage[capitalise,compress]{cleveref}
\crefname{section}{Section}{Section} 
\crefname{subsection}{Section}{Section}

\newtheoremstyle{mystyle}
{3pt}
{3pt}
{\upshape}
{}
{\bfseries}
{.}
{.5em}
{}

\theoremstyle{theorem}
\crefname{thm}{Theorem}{Theorems}
\Crefname{thm}{Theorem}{Theorems}

\crefname{corll}{Corollary}{Corollaries}
\theoremstyle{remark}

\theoremstyle{definition}

\theoremstyle{mystyle}

\NewDocumentEnvironment{todo}{o}
 {\IfNoValueTF{#1}
   {\todoaux\addcontentsline{toc}{subsection}{\protect\numberline{\thesubsection}\ccol{red}{To-Do item}}}
   {\todoaux[#1]\addcontentsline{toc}{subsection}{\protect\numberline{\thesubsection}{\ccol{red}{To-Do item}} (#1)}}
   \ignorespaces}
 {\endtodoaux}

\makeatletter

\newcolumntype{\expand}{}
\long\@namedef{NC@rewrite@\string\expand}{\expandafter\NC@find}

\usepackage{environ}

\NewEnviron{task}[2][]{
  \def\problem@arg{#1}
  \def\problem@framed{framed}
  \def\problem@lined{lined}
  \def\problem@doublelined{doublelined}
  \ifx\problem@arg\@empty
    \def\problem@hline{}
  \else
    \ifx\problem@arg\problem@doublelined
      \def\problem@hline{\hline\hline}
    \else
      \def\problem@hline{\hline}
    \fi
  \fi
  \ifx\problem@arg\problem@framed
    \def\problem@tablelayout{|>{\bfseries}lX|c}
    \def\problem@title{\multicolumn{2}{|l|}{
        \raisebox{-\fboxsep}{\textsc{#2}}
      }}
  \else
    \def\problem@tablelayout{>{\bfseries}lXc}
    \def\problem@title{\multicolumn{2}{l}{
        \raisebox{-\fboxsep}{\textsc{#2}}
      }}
  \fi
\par
\noindent
  
  \begin{tabularx}{\columnwidth}{\expand\problem@tablelayout}
    \problem@hline
    \problem@title\\[\fboxsep]
    \BODY 
  \end{tabularx}
    \\
}

\usepackage{amsthm}
% instead of \newtheorem{fact}{Fact.}
\newtheorem{fact}{Fact}

\newtheorem{construction}{Construction}[section]

\theoremstyle{remark}

\theoremstyle{definition}
\newtheorem*{observation}{Observation}

\renewenvironment{proof}{\emph{\bfseries Proof.}}{\qed\\\vspace{10pt}}

% \newtheorem{innercustomthm}{Theorem}
% \newenvironment{customthm}[1]
%   {\renewcommand\theinnercustomthm{#1}\innercustomthm}
%   {\endinnercustomthm}

%%%%%%%%%%%%%%%%%%%%%%%%%%%%%%%%%%%%%%%%%%%%%%%%%%%%%%%%%%%%%%%%%%%%%%%%%
% customized symbols
%%%%%%%%%%%%%%%%%%%%%%%%%%%%%%%%%%%%%%%%%%%%%%%%%%%%%%%%%%%%%%%%%%%%%%%%%

% standard complexity classes.

% \newcommand{\NP}{\complexityclass{NP}}

% \newcommand{\BPP}{\complexityclass{BPP}}
% \newcommand{\BQP}{\complexityclass{BQP}}

% \newcommand{\PH}{\complexityclass{PH}}
% \newcommand{\RP}{\complexityclass{RP}}

% \newcommand{\ZPP}{\complexityclass{ZPP}}
% \newcommand{\MIP}{\complexityclass{MIP}}
% \newcommand{\AM}{\complexityclass{AM}}

% \newcommand{\ACC}{\complexityclass{ACC^0}}

% \newcommand{\NQP}{\complexityclass{NQP}}

% \newcommand{\MA}{\complexityclass{MA}}

% \newcommand{\RL}{\complexityclass{RL}}

% \newcommand{\QMA}{\complexityclass{QMA}}
% \newcommand{\QCMA}{\complexityclass{QCMA}}

 % The Reals
 % Generic Field, Finite Field
\def\C{{\mathbb{C}}} % Complex Numbers
 % Natural Numbers
 % Integers
 % Rational Numbers

\renewcommand{\Pr}{\mathop{\mathbb{P}\/}}
\renewcommand{\E}{\mathop{\mathbb{E}\/}}

\renewcommand{\abs}[1]{\left|#1\right|}

% Distributions
\newcommand{\Ber}{{\mathrm{Ber}}} %Bernoulli
 %Binomial

% Boolean Functions

% \newcommand{\IP}{\mathrm{IP}}

% \newcommand{\negl}{\mathrm{negl\/}}

 % p-random restriction

% \newcommand{\poly}{\mathrm{poly}}
% \newcommand{\qpoly}{\mathrm{quasipoly}}
% \newcommand{\polylog}{\mathrm{polylog}}

 % Decision Tree depth
% \newcommand{\W}{\mathbf{W}} % Fourier-Weight
 % Influence
 % Effect
 % Effect
 % Effect

 % nicer-looking than star
\NewCommandCopy{\dashl}{\l} % Polish character ł
\renewcommand{\l}{\ell} % nicer-looking than l

\renewcommand{\tilde}{\widetilde}

\newcommand{\B}{\{0,1\}}

\newcommand{\calD}{\mathcal{D}}

\newcommand{\calF}{\mathcal{F}}

\newcommand{\calH}{\mathcal{H}}

\newcommand{\calP}{\mathcal{P}}

\newlang{\StateHSP}{StateHSP}
\newlang{\HSP}{HSP}
\newlang{\negl}{negl}

% combinatorial quantities

\newcommand{\genstirlingII}[3]{%
  \genfrac{\{}{\}}{0pt}{#1}{#2}{#3}%
}

\newcommand{\stirlingII}[2]{\genstirlingII{}{#1}{#2}}

\newcommand{\Sym}[2]{\mathrm{Sym}^{#1}\left(#2\right)}

\newcommand\powerset[1]{\mathcal{P}\left(#1\right)}
\newcommand{\Haar}[1]{\mathop{\text{Haar}\/}\left(#1\right)}

\newcommand{\PRP}{\mathsf{PRP}}
\newcommand{\PRF}{\mathsf{PRF}}

\begin{document}

\title{Fast computational deep thermalization} %Fastest route to deep thermalization via shallow pseudorandom circuits}
\author{Shantanav Chakraborty}
\email{shchakra@iiit.ac.in}
\affiliation{CQST and CSTAR, International Institute of Information Technology Hyderabad, Telangana 500032, India}

\author{Soonwon Choi}
\email{soonwon@mit.edu}
\affiliation{Department of Physics, Massachusetts Institute of Technology, Cambridge, MA 02139, USA}

\author{Soumik Ghosh}
\email{soumikghosh@uchicago.edu}
\affiliation{Department of Computer Science, University of Chicago, Chicago, Illinois 60637, USA}

\author{Tudor Giurgic\u{a}-Tiron}
\email{tudorgt@umd.edu}
\affiliation{QuICS, University of Maryland, College Park, Maryland 20742, USA}

\begin{abstract}
Deep thermalization refers to the emergence of Haar-like randomness from  quantum systems upon partial measurements.
As a generalization of quantum thermalization, it is often associated with high complexity and entanglement.
Here, we introduce \emph{computational deep thermalization} and construct the fastest possible dynamics exhibiting it at infinite effective temperature.
Our circuit dynamics produce quantum states with low entanglement in polylogarithmic depth that are indistinguishable from Haar random states to any computationally bounded observer. Importantly, the observer is allowed to request many copies of the same residual state obtained from partial projective measurements on the state --- this condition is beyond the standard settings of quantum pseudorandomness, but natural for deep thermalization.
%Our approach is optimal in two ways: the dynamics only needs $1$D geometrically local circuits of polylogarithmic depth, and the output states display near area-law scaling entanglement both with or without conditioning measurements.
%Consequently, the resultant state, both with or without conditioning measurements, has the minimal possible amount of entanglement.
In cryptographic terms, these states are pseudorandom, pseudoentangled, and crucially, retain these properties under local measurements. 
Our results demonstrate a new form of computational thermalization, where thermal-like behavior arises from structured quantum states endowed with cryptographic properties, instead of from highly unstructured ensembles. The low resource complexity of preparing these states suggests scalable simulations of deep thermalization using quantum computers. Our work also motivates the study of computational quantum pseudorandomness beyond $\mathsf{BQP}$ observers.
\end{abstract}
\maketitle

Quantum thermalization is a universal phenomenon occurring in generic interacting systems and underlies the foundation of quantum statistical physics. 
Its key feature is the emergence of universal randomness; despite being in pure states with no entropy, quantum systems may locally appear as if they are in maximally entropic thermal states~\cite{deutsch1991quantum, srednicki1994chaos, rigol2008thermalization, nandkishore2015review}.

Recent experiments enabled by a high-fidelity Rydberg quantum simulator discovered an intriguing possibility: in fact, quantum thermalization might be a special consequence of a stronger form of universal phenomena~\cite{choi2023preparingrandomstates,cotler2023emergent}.
\textit{Deep thermalization} refers to the behavior of highly entangled pure states that remain Haar-like even under partial projective measurements~\cite{Ippoliti_2022, Ho_2022, cotler2023emergent, ippoliti2023dynamical, yu2025mixedstate}.
Specifically, when one subsystem is measured in some fixed local basis, the resulting ensemble of conditional states on the remaining qubits still looks maximal entropic, forming approximate $t$-designs \cite{renes2004symmetric, ambainis2007quantumtdesigns}. That is, $t$-moments of these projected ensembles are indistinguishable from $t$-moments of a Haar-random state, also referred to as a \textit{projected $t$-design}. 
In this viewpoint, quantum thermalization reduces to the special case $t=1$.
Better understanding deep thermalization is an important problem with immediate implications in the study of quantum statistical physics as well as practical applications in  benchmarking~\cite{choi2023preparingrandomstates}, certifying state preparation~\cite{mcginley2023shadow,huang2024certifying}, and generating random states~\cite{mok2025optimalconversionclassicalquantum,zhang2025holographic}.

While the conjectured behavior of deep thermalization has been explored through numerical simulations, rigorous theoretical results remain scarce. Existing analyses apply only in highly restricted regimes, such as when the bath (i.e.\, the measured sub-system) size is infinite-dimensional \cite{choi2023preparingrandomstates, Ho_2022, mok2025optimalconversionclassicalquantum}, or for fixed, low values of $t$ \cite{cotler2023emergent}. Meanwhile, both numerical and experimental observations suggest that projected designs emerge far more quickly (at shallower depth circuits) than what current mathematical proofs account for~\cite{cotler2023emergent,choi2023preparingrandomstates,Ippoliti_2022}. 
These observations motivate a fundamental open question: \emph{What is the fastest route to generating quantum states that appear deeply thermal to any physically reasonable observer?}

We leverage quantum complexity theory to provide new perspectives on this question. The key insight is that, in realistic settings, observers do not have unrestricted access to all properties of a quantum state, but are instead constrained by the limits of efficient computation. A computationally bounded observer is one that can perform only polynomial-time quantum computations. In this context, two ensembles of quantum states are said to be \emph{computationally indistinguishable} if no such observer can efficiently tell them apart, even though they may be statistically distinct. Often, it turns out that these limitations help us in mimicking Haar randomness very fast \cite{JiLiuSong2018}, with very low quantities of entanglement and other resources \cite{aaronson_et_al:LIPIcs.ITCS.2024.2, giurgicatiron2023pseudorandomnesssubsetstates,jeronimo2024pseudorandompseudoentangledstatessubset,bouland_et_al:LIPIcs.CCC.2024.21, Gu_2024,gu2024simulatingquantumchaoschaos}. This raises a compelling possibility: perhaps the fastest route to prepare deeply thermalized states is not via chaotic dynamics, maximal entropy production, or exact unitary designs, but rather through computational pseudorandomness. However, it remains open whether existing constructions of pseudorandom states achieve the stronger notion of quantum randomness required for deep thermalization.
%The key insight is that, in realistic settings, one does not have direct access to all properties of quantum states, but is instead subject to \emph{computational} limitations on what can be efficiently learned \footnote{A computationally bounded observer is restricted to running a polynomial time quantum computation. There are ensembles of states which are distinguishable statistically, but cannot be distinguished in polynomial time}. Often, it turns out that these limitations help us in mimicking Haar randomness very fast \cite{JiLiuSong2018}, with very low quantities of entanglement and other resources \cite{aaronson_et_al:LIPIcs.ITCS.2024.2, giurgicatiron2023pseudorandomnesssubsetstates,jeronimo2024pseudorandompseudoentangledstatessubset,bouland_et_al:LIPIcs.CCC.2024.21, Gu_2024,gu2024simulatingquantumchaoschaos}. 
%This raises a compelling possibility: perhaps the fastest route to prepare deeply thermalized states is via computational pseudorandomness, rather than maximal entropy, chaotic dynamics, or quantum state designs. However, it remains open whether any existing construction of pseudorandom states satisfy the stronger notion of quantum randomness required for deep thermalization.

In this work, we construct an ensemble of $n$-qubit states using only polylogarithmic-depth circuits and minimal entanglement, which are nevertheless \textit{computationally indistinguishable} from deeply thermalized states at infinite temperatures. Specifically, our construction satisfies the following three properties:
\begin{itemize}
\item[(i)] The states are \textit{computationally indistinguishable} from Haar random states, upto $t$ copies, for \emph{any} $t=\poly(n)$ (arbitrary polynomial in $n$). 
\item[(ii)] The states need $1$D geometrically local circuits of depth $\sim \log^2n$. The states have low entanglement across geometric cuts, of the order $\sim \log^2n$.
\item[(iii)] Across almost all bipartitions of the system, the projected ensemble, obtained by measuring a part of the system in the computational basis, is \emph{also} computationally indistinguishable from Haar random states upto $t = \poly(n)$ copies and have a similar entanglement profile to that of the pre-measurement state. 
\end{itemize}

We note that these conditions are chosen to be as stringent as possible. Condition (i) ensures that the global state looks (pseudo-)thermal at infinite temperature.
The logarithmic circuit depth and low entanglement requirement in Condition (ii) saturate the known lower bound for pseudorandom states~\cite{JiLiuSong2018}.
Condition (iii) describes an additional requirement specific to deep thermalization: the projected ensemble must also appear pseudorandom while retaining low entanglement. Crucially, this holds even when the bath (i.e., the measured subsystem) is finite-sized (as small as $\log^2 n$ qubits). Therefore, Conditions (i) - (iii) combined together ask for quantum circuits that make the most drastic departure from all known analytical models of deep thermalization \cite{cotler2023emergent,choi2023preparingrandomstates,Ippoliti_2022, Ho_2022, ippoliti2023dynamical}, and instead aligns more closely with what has been experimentally and numerically observed \cite{choi2023preparingrandomstates,cotler2023emergent}. Indeed, existing solvable models for deep thermalization or preparing pseudorandom states only partially satisfy these conditions, but to the best of our knowledge, no known construction satisfies all three conditions ~\footnote{In 1D, the requirements in Conditions (ii) and (iii) are optimal due to geometric constraints. In higher dimensions, e.g., 2D, Condition (iii) can be achieved in constant depth via measurement-based computation~\cite{Briegel_2009}, but such states violate (i). Pseudorandom phase states~\cite{JiLiuSong2018} satisfy Conditions (i) and (iii), but exhibit high entanglement, violating Condition (ii). Finally, Subset states (or subset phase states)~\cite{aaronson_et_al:LIPIcs.ITCS.2024.2,jeronimo2024pseudorandompseudoentangledstatessubset} satisfy Conditions (i) and (ii), but not (iii), as shown later.}.
~\\~\\
\textbf{Deep thermalization:~}Consider a composite quantum system in an $N$-dimensional Hilbert space $\mathcal{H}_{AB} = \mathcal{H}_A \otimes \mathcal{H}_B$, prepared in a pure state $\ket{\psi}_{AB}$, where subsystems $A$ and $B$ have dimensions $N_A$ and $N_B$, respectively. The phenomenon of \emph{deep thermalization} refers to the emergence of approximate \emph{projected designs} in subsystem $A$ upon performing a projective measurement on subsystem $B$ in a fixed local basis (e.g., the computational basis $\{ \ket{z_k} \}_{k \in \{0,1\}^{N_B}}$). 

For instance, conditioned on outcome $k$, the post-measurement state on subsystem $A$ is given by
\[
\ket{\psi_A^{(k)}} := \frac{1}{\sqrt{p_k}} \left(I\otimes \ket{z_k}\bra{z_k}_B\right) \ket{\psi}_{AB},
\]
where $p_k = \left\| (I \otimes \ket{z_k}\bra{z_k})\ket{\psi}_{AB} \right\|^2$. This defines an ensemble $\{ p_k, \ket{\psi_A^{(k)}} \}$ on subsystem $A$. The principle of deep thermalization posits that this ensemble behaves, in a precise sense, like an ensemble of Haar-random states. Specifically, the $t$-th moment operator of the projected ensemble on $A$ approximates that of a Haar-random pure state $\ket{\phi}$ on $\mathbb{C}^{N_A}$:
\begin{align*}
   \left\|\mathbb{E}_k\dyad{\psi^{(k)}_A}^{\otimes t}-\E_{\phi\sim\mathrm{Haar}(\C^{N_A})}\dyad{\phi}^{\otimes t}\right\|_1\leq \mathrm{negl}(n), 
\end{align*}
where $\mathrm{negl}(n)$ is a function that decays with the number of qubits $n$, and $\|.\|_1$ is the 1-norm. 
~\\~\\
\textbf{Our work in context:~}Deep thermalization has been primarily explored through numerical simulations, especially in chaotic many-body systems, while rigorous theoretical results exist only in restricted settings. In particular, Haar randomness of the projected ensemble has been proven in certain special models of shallow 1D circuits, but only in the asymptotic limit $N_B\rightarrow\infty$ \cite{Ho_2022, ippoliti2023dynamical}. For finite $N_B$, sufficient conditions for deep thermalization were established in \cite{cotler2023emergent}. The authors prove that if a global state $\ket{\psi}_{AB}$ is drawn from an approximate $t$-design, then the projected ensemble in $A$ forms an approximate $t'$-design, provided $t' < t/N_B$ and $N_B\gg N_A$. Thus, the projected ensemble inherits only ``lower order" designs, with $t'\ll t$. Moreover, the proof only works for a fixed $t$, and not for arbitrary polynomially large moments. The infusion of classical randomness can, in principle, increase $t'$, but only by a constant factor \cite{mok2025optimalconversionclassicalquantum}. 

In contrast, a Haar random state remains Haar-like, with high probability, under projective measurements, even when $N_B$ is finite. In this sense, $t$-designs do not serve as faithful models of deep thermalization beyond extremely low-order moments, underscoring the need for new frameworks to capture this phenomenon. Importantly, the existence of bounded-depth state ensembles exhibiting deep thermalization was posed as an open question in \cite{cotler2023emergent}, where preliminary numerical evidence suggested that shallow brickwork circuits might suffice. Our work resolves this question affirmatively: we provide the first rigorous construction showing that deep thermalization can emerge from ensembles generated by polylogarithmic-depth circuits, even when the underlying states have minimal entanglement and arise from highly structured, pseudorandom phase patterns.

Prior constructions of pseudorandom or pseudoentangled states also do not exhibit deep thermalization, a fact also observed in \cite{feng2025dynamics}. Consider the following $n$ qubit state, known as a subset phase state, defined as 
\begin{equation}
\label{phase states2}
\ket{\psi_{f,S}} = \frac{1}{\sqrt{M}} \sum_{x \in S} (-1)^{f(x)} \ket{x},
\end{equation}
where $f$ is a pseudorandom function and $S\subset\{0,1\}^n$ is a (pseudo-)randomly chosen subset of size $M$. These states are pseudorandom and have very low entanglement  (pseudoentangled) across any cut \cite{aaronson_et_al:LIPIcs.ITCS.2024.2}. However, the corresponding projected ensemble leads to trivial standard basis states and not a $t$-design, violating property (iii), as sketched in Sec.~\ref{section: subset states are not projected designs} of the Supplemental Material (SM) \cite{supmat}. 
%We formalize this in the Supplemental Material. 
The same argument also holds for the so-called subset states \cite{jeronimo2024pseudorandompseudoentangledstatessubset, giurgicatiron2023pseudorandomnesssubsetstates}, which are of the same form as Eq.~\eqref{phase states2}, with the phase function removed.  On the other hand, phase states can be obtained when $S=\{0,1\}^n$, i.e.\ $M=2^n$ \cite{JiLiuSong2018, aaronson_et_al:LIPIcs.ITCS.2024.2}. Measuring such a state across any bipartition of qubits yields another phase state that retains its pseudorandomness. However, phase states exhibit maximal entanglement both before and after measurement, thereby violating Conditions (ii) and (iii). 
~\\~\\
\textbf{Projected designs across a fixed bipartition of qubits:~} Note that even though subset phase states do not satisfy deep thermalization, it is possible to ``patch" two subset phase states (or two subset states) together so that the resulting state forms a projected design across a specific bipartition of qubits. While this property is significantly weaker than full deep thermalization, which requires such behavior across most bipartitions, it serves as a useful intermediate step toward our final construction. Furthermore, this illustrates the limitations of using existing pseudoentangled state constructions in simulating deep thermalization.

Indeed, we can ``glue" two subset phase states of $n/2$ qubits, assigning the first $n/2$ qubits to subsystem $A$, and the last $n/2$ qubits to subsystem $B$. Consider two sequences of disjoint subsets: $A_1, A_2, \cdots A_R$ and $B_1, B_2, \cdots, B_R$, such that each $A_i, B_i \subset \{0,1\}^{n/2}$ of size $M$. Define the \textit{Schmidt-patch subset phase states} as:
\begin{equation}
    \label{eq:schmidt-patch}
    \ket{\psi_{M,R,f}}=\dfrac{1}{M\sqrt{R}} \sum_{j=1}^{R} \sum_{x\in A_j,~y\in B_j} (-1)^{f(x,y)}\ket{x}_A\ket{y}_B,
\end{equation}
where $f:\{0,1\}^n\mapsto \{0,1\}$ is a global pseudorandom function. Omitting the phase yields the corresponding Schmidt-patch subset state. 

These states can be efficiently prepared from the all-zero state $\ket{0}^{\otimes n}$ by (1) preparing $r=\log_2 R$ Bell pairs between qubits $(1,n/2+1), (2, n/2+2), \cdots (r, n/2+r)$, thus entangling subsystems $A$ and $B$ with $r$ bits of entanglement across the A-B cut, and (2) applying subset phase circuits on both halves (see Sec.~\ref{app:schmidtpatchprep} of the SM \cite{supmat}).  For $r \sim \log^2 n$, the Schmidt patch construction ensures that the global state has very low entanglement across the fixed cut, while all other cuts inherit low entanglement from the underlying subset phase states. 

By construction, a projective measurement on one half (say, subsystem $B$) in the computational basis collapses the global state on to a subset phase state on $A$. Since subset phase states are known to be pseudoentangled for $M=2^{\Omega(\log^2 n)}$, this projected ensemble is trivially pseudoentangled \footnote{We use standard complexity-theoretic notation: $f(n) = O(g(n))$ if $f(n) \leq c \cdot g(n)$ for all $n \geq n_0$, and $f(n) = \Omega(g(n))$ if $f(n) \geq c \cdot g(n)$ for all $n \geq n_0$, for constants $c, n_0 > 0$. A function $f(n)$ is negligible, denoted $f(n) = \negl(n)$, if it decays with $n$ ( e.g.\ $f(n) = 1/\mathsf{poly}(n)$) for all sufficiently large $n$.}. The non-trivial claim is that the global state $\ket{\psi_{M,R,f}}$ is computationally indistinguishable from a Haar-random pure state on $n$-qubits. This is formalized as follows:
\begin{theorem}
\label{fact:schmidt-patch-haar}
    Suppose $M^{3/2}R/N < 1$, where $N=2^n$. Then for any $t=\poly(n)$, 
    \begin{align*}
    \left\|\mathbb{E}_{A, B, f} \ket{\psi_{M,R,f}}\bra{\psi_{M,R,f}}^{\otimes t} - \E_{\phi\sim\mathrm{Haar}(\C^{N})}\ket{\phi}\bra{\phi}^{\otimes t}\right\|_1\\
    \leq O\left(\dfrac{t^3}{M}\right)+O\left(\dfrac{t^2}{\sqrt{N}}\right).
    \end{align*}
\end{theorem}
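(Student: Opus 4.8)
The plan is to prove the statement for a \emph{truly} random phase function $f:\{0,1\}^n\to\{0,1\}$; the stated trace-distance bound is information-theoretic, and the efficiently computable (pseudorandom) version then follows from standard $\PRF$ security together with pseudorandomness of the subset sampling. Writing $N_A=N_B=2^{n/2}=\sqrt N$ and $K=\Abs{S}=M^2R$ for the size of the patched support $S=\bigcup_{j=1}^R A_j\times B_j$, the problem reduces to estimating the $t$-th moment operator $\rho\deff\E_{A,B,f}\dyad{\psi_{M,R,f}}^{\otimes t}$ and comparing it, in trace norm, to the Haar moment $\rho_{\mathrm{Haar}}=\ProjSym/\binom{N+t-1}{t}$. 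First I would average over $f$: since each $f(x)$ is an independent fair coin, $\E_f(-1)^{\sum_i f(x_i)+\sum_i f(y_i)}$ equals $1$ when every label occurs with even multiplicity in the combined multiset $\{x_i\}\cup\{y_i\}$ and $0$ otherwise. This collapses $\rho$ onto even-multiplicity configurations, whose dominant contribution is the \emph{permutation} terms $\vec y=\pi(\vec x)$ with $\vec x$ ranging over $t$-tuples of distinct elements of $S$; for a fixed distinct content $\vec v$, $\sum_{\vec x\text{ ordering }\vec v}\sum_{\pi}\dyad{\vec x}{\pi(\vec x)}=t!\,\dyad{\mathrm{sym}_{\vec v}}$ is $t!$ times a rank-one projector onto the symmetrized state $\ket{\mathrm{sym}_{\vec v}}$, and these projectors are mutually orthogonal across distinct contents. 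The trace-norm error then becomes an $\ell_1$ sum of coefficient discrepancies over contents, which is what makes the estimate tractable.

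The key computation is the patch average of the support indicator. For a content $\vec v=\{(a_i,b_i)\}_{i=1}^t$ of distinct pairs, encode its coordinate collisions by the bipartite ``pair graph'' whose left vertices are the distinct $a$-values, right vertices the distinct $b$-values, and edges the pairs. Because the $A_j$ (resp.\ $B_j$) are disjoint, membership $(a,b)\in S$ is equivalent to $J_A(a)=J_B(b)\neq\perp$, where $J_A(a)$ denotes the patch index containing $a$; hence $\E_{A,B}\id[\vec v\subseteq S]$ factorizes over connected components of the pair graph, a component with $\alpha$ distinct $a$-values and $\beta$ distinct $b$-values contributing (in the near-independent regime) a factor $R\,(M/N_A)^{\alpha+\beta}$. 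When all coordinates are distinct the graph is $t$ isolated edges, giving $\bigl(R(M/N_A)^2\bigr)^t=(K/N)^t$, and the coefficient $\tfrac{t!}{K^t}(K/N)^t=t!/N^t$ matches $\rho_{\mathrm{Haar}}$ up to the factorial ratio $N^t/N^{\overline{t}}=1-O(t^2/N)$. This identifies the fully-distinct-coordinate permutation terms as $\rho_{\mathrm{Haar}}$ to leading order.

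It remains to bound the corrections, which I would organize by the ``defect'' $d\deff 2t-(p+q)$, the number of coordinate coincidences ($p,q$ being the numbers of distinct $a$- and $b$-values). A forest pair graph of defect $d$ has $\kappa=t-d$ components, so its patch-averaged coefficient is $\tfrac{t!}{N^t}\bigl(\sqrt N/(RM)\bigr)^{d}$, while the number of such contents is $O\!\bigl(t^{2d}/d!\bigr)\cdot N^{t-d/2}/t!$; multiplying and cancelling the $N^{\pm d/2}$ factors, the defect-$d$ contribution to $\Norm{\rho-\rho_{\mathrm{Haar}}}_1$ is $O\!\bigl((t^2/(RM))^d/d!\bigr)$, and summing the series over $d\ge1$ gives $O(t^2/(RM))$, which since $R\ge1,t\ge1$ is in particular $O(t^3/M)$, matching the claim (the $1/M$ reflects a within-patch birthday factor: each coordinate collision forces two pairs into a common patch, whose register $A_j$ has only $M$ elements). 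The remaining $O(t^2/\sqrt N)$ collects the register-level ($N_A=N_B=\sqrt N$) birthday and symmetric-subspace normalization corrections---the gap between the distinct-content projectors and the full $\ProjSym$, the factorial ratios, and the negative-correlation corrections to the factorization induced by the disjointness of the $A_j,B_j$---together with the standard repeated-pair terms, which here are even smaller, $O(t^2/K)=O\!\bigl(t^2/(M^2R)\bigr)$. Assembling these via the triangle inequality and the orthogonality above yields the stated bound.

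I expect the main obstacle to be the defect analysis just sketched: unlike uniform subset(-phase) states, where collisions are single-index events suppressed at rate $t^2/K$, the product structure $A_j\times B_j$ couples the two registers, so a single coordinate collision is \emph{enhanced} by the factor $\sqrt N/(RM)\ge1$, and it is precisely the hypothesis $M^{3/2}R/N<1$---equivalently $K<N\sqrt M$, together with the packing bound $RM\le\sqrt N$---that keeps the enhanced terms summable. Making the bipartite-graph enumeration rigorous beyond the forest case, and controlling the non-independence of the indicators $J_A(a_i)$ arising from sampling \emph{disjoint} equal-size subsets rather than i.i.d.\ colors, are the two delicate points; I would dispatch the latter by coupling to the i.i.d.-color model at a total-variation cost $O(t^2/N_A)=O(t^2/\sqrt N)$, consistent with the second error term.
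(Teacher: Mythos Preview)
Your approach is correct and shares the central combinatorial insight with the paper: the inclusion probability of a $t$-subset $T$ in a random Schmidt-patch support factorizes over the connected components of the bipartite ``pair graph'' on $T$, each component with $\alpha$ left and $\beta$ right vertices contributing (up to birthday corrections) a factor $R\,(M/\sqrt N)^{\alpha+\beta}$. The paper obtains exactly this formula (Fact~\ref{fact:counting1}) by direct enumeration of the family $\calF_{M,R,N}$, whereas you reach it via the i.i.d.-color model and then correct for disjointness by a $O(t^2/\sqrt N)$ coupling cost; both yield the same leading term and the same $O(t^3/M)+O(t^2/\sqrt N)$ error from the falling-factorial ratios.

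The one genuine methodological difference is in how the ``collision'' (defect $d\ge 1$) terms are dispatched. You bound them explicitly, defect by defect, and correctly flag the non-forest case as delicate. The paper instead first proves a general reduction (Lemma~\ref{lemma:subsetstatistics}): for \emph{any} distribution $\calD$ over large subsets, the trace distance between the subset-phase ensemble and Haar equals the classical TV distance between the induced $t$-subset distribution $\mathbb P_\calD(T)=\E_{S\sim\calD}\mathbf 1[T\subset S]/\binom{|S|}{t}$ and the uniform $\binom{N}{t}^{-1}$. Since both sides are probability measures, once the doubly-unique $T$ (your $d=0$ terms) are shown to carry mass $1-O(t^3/M)-O(t^2/\sqrt N)$ on the $\calD$-side, the non-doubly-unique remainder is automatically bounded by normalization, with no need to enumerate forests versus cycles or to invoke the hypothesis $M^{3/2}R/N<1$ at all. (That hypothesis is only used in the paper for the phaseless subset-state analogue, Fact~\ref{fact:subset-state-cube}, where cross-terms between distinct $T,T'$ survive.) So your route works, but the paper's abstraction buys a cleaner endgame: you can drop the entire defect-and-cycle analysis and simply verify the per-$T$ estimate on the doubly-unique set.
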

The proof is given in Sec.~\ref{app:SchmidtPatchHaarProof} of the SM \cite{supmat}. An analogous result also holds for subset states. Thus, even when $\log_2 M$ and $r$ are low ($\sim \log^2 n$),  the global Schmidt-patch ensemble is pseudorandom in a strong sense: its $t$-fold moments match Haar up to negligible error while its entanglement is minimal across all cuts. The proof follows from careful combinatorial analysis of the coverage properties of the special family of sets that correspond to Schmidt-patched states.
We also extend this construction to variants with unequal patch sizes and other, possibly asymmetric (but still specific) bipartitions.

To spoof deep thermalization across most cuts, a natural place to gain intuition from is the glued unitary construction of \cite{schuster2025randomunitariesextremelylow}, which shows how to prepare PRUs (unitaries that take any state to a pseudorandom state) in low depth with low entanglement. However, analyzing the projected design properties of states prepared by such circuits is technically challenging. The complication arises because projected ensembles are dependent on the choice of measurement basis. A more direct approach is to design a minimal family of states which are tailored to computational basis measurements. The idea is to modify the standard binary phase state such that the Boolean function is structured according to a two-layer large-brick brickwork circuit, taking inspiration from the glued brickwork model introduced in \cite{schuster2025randomunitariesextremelylow}.
~\\~\\
\textbf{Large-brick phase states:~}We consider a “glued brickwork” quantum circuit made from two staggered layers of local phase gates, where each gate (or ``brick”), acts on a small block of $2b$ adjacent qubits, as shown in Figure \ref{fig:largebricks1}. Within each layer, the bricks are non-overlapping and tile the qubit chain uniformly. The second layer is shifted by exactly $b$ qubits relative to the first, so that every brick overlaps with bricks from the other layer on exactly half of its qubits. The circuit wraps around the ends of the chain, forming a ring-like structure with periodic boundary conditions. 
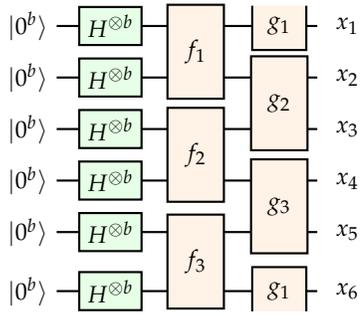
\begin{figure}[h!]
    \centering
    \begin{tikzpicture}[thick]
\tikzset{
operator/.style = {draw,fill=white!90!green,minimum size=1.5em},
operator2/.style = {draw,thick,fill=white!90!orange,minimum height=1.25cm, minimum width=0.75cm}
}
\matrix[row sep=0.1cm, column sep=0.4cm] (circuit) { % 9 columns
    % First row.
\node (q1) {$|0^b\rangle$};  
& [-0.5cm] 
& \node[operator] (H11) {$H^{\otimes b}$}; 
& [0.2cm]\node[] (U11) {};                                     
& [0.5cm]\node[] (U12) {};   
& 
\coordinate (end1); \\
    % Second row.
\node (q2) {$|0^b\rangle$};    
&                                     
&\node[operator] (H21) {$H^{\otimes b}$}; 
&\node[](U21){};                                      
&\node[] (U22) {};   
&\coordinate (end2);\\
    % Third row.
\node (q3) {$|0^b\rangle$};    
&                                      
&\node[operator] (H31) {$H^{\otimes b}$}; 
& \node[](U31){};                                      
& \node[] (U32) {};   
&\coordinate (end3);\\

\node (q4) {$|0^b\rangle$};    
&                                      
&\node[operator] (H41) {$H^{\otimes b}$}; 
& \node[](U41){};                                      
& \node[] (U42) {};   
&\coordinate (end4);\\

\node (q5) {$|0^b\rangle$};    
&                                      
&\node[operator] (H51) {$H^{\otimes b}$}; 
&\node[](U51){};                                      
&\node[] (U52) {};   
&\coordinate (end5);\\

$\vdots$ & $\vdots$ & $\vdots$ & $\vdots$ & $\vdots$;\\

\node (q6) {$|0^b\rangle$};    
&                                      
&\node[operator] (H61) {$H^{\otimes b}$}; 
&\node[](U61){};                                      
&\node[] (U62) {};   
&\coordinate (end6);\\
};

\begin{pgfonlayer}{background}
\draw[thick] (q1) -- (end1)  
(q2) -- (end2) 
(q3) -- (end3)
(q4) -- (end4)
(q5) -- (end5)
(q6) -- (end6);

\node[operator2] at ($(U11)!0.5!(U21)$) {$f_1$};
\node[operator2] at ($(U31)!0.5!(U41)$) {$f_2$};
\node[operator2] at ($(U51)!0.5!(U61)$) {$f_3$};

\node[operator2] at ($(U22)!0.5!(U32)$) {$g_2$};
\node[operator2] at ($(U42)!0.5!(U52)$) {$g_3$};

\draw[draw=black, thick, fill=white!90!orange]  ($(U62.south west)+(-2.4mm,-1.5mm)$) 
-- ($(U62.north west)+(-2.4mm,2mm)$)
-- ($(U62.north east)+(2.4mm,2mm)$)
-- ($(U62.south east)+(2.4mm,-1.5mm)$);

\draw[draw=black, thick, fill=white!90!orange]  ($(U12.north west)+(-2.4mm,1.5mm)$)
-- ($(U12.south west)+(-2.4mm,-2mm)$) 
-- ($(U12.south east)+(2.4mm,-2mm)$)
-- ($(U12.north east)+(2.4mm,1.5mm)$);

\node[] at ($(U12)+(0mm,0mm)$) {$g_1$};
\node[] at ($(U62)+(0mm,0mm)$) {$g_1$};

\node[] at ($(end1)+(4mm,0mm)$) {$x_1$};
\node[] at ($(end2)+(4mm,0mm)$) {$x_2$};
\node[] at ($(end3)+(4mm,0mm)$) {$x_3$};
\node[] at ($(end4)+(4mm,0mm)$) {$x_4$};
\node[] at ($(end5)+(4mm,0mm)$) {$x_5$};
\node[] at ($(end6)+(4mm,0mm)$) {$x_{6}$};
\end{pgfonlayer}
\end{tikzpicture}
    \caption{\smaller Two-layered brickwork circuit for preparing a large-brick phase state. The circuit applies $m$ bricks to the equal superposition state of $n$ qubits. Each brick is of size $2b$ and applies a phase gate in the computational basis, constructed out of pseudorandom functions $f_i, g_i \in \{0,1\}^{2b}\mapsto\{0,1\}$, such that each $x_i\in \{0,1\}^b$. In this figure, $m=3$ and the final state is of $n=2mb=6b$ qubits.}
    \label{fig:largebricks1}
\end{figure}
Each brick applies a local phase gate to its input, determined by a pseudorandom function sampled independently for each brick: either $f_i:\{0,1\}^{2b}\mapsto \{0,1\}$ for bricks in the first layer or $g_i:\{0,1\}^{2b}\mapsto \{0,1\}$, for the second. These act as $U_{f_j} |x\rangle = (-1)^{f_j(x)} \ket{x}$, on any $2b$-bit string $x$. Here $j\in [1,m]$, as each layer contains exactly $m=n/(2b)$ bricks \footnote{Without loss of generality, we assume $n$ is even and $b$ is a divisor of $n/2$}. Starting from the uniform superposition state of $n$-qubits, this circuit prepares a binary phase state on each basis string, built from a structured sum of local, pseudorandom contributions. For $b=\Omega(\log^2 n)$, we call these states \emph{large-brick phase states}, formally define as
\begin{equation}\label{eq:largebrickPRS}
        \ket{\psi_{f_{1:m},g_{1:m}}} = \dfrac{1}{2^{n/2}}\sum_{x_1,\dots,x_{2m}\in\B^b}(-1)^{\Phi(x)}\ket{x_1.x_2.\;\cdots\;.x_{2m}},
\end{equation}
where the phase function,
$$\Phi(x)=\sum _{j\in[1,m]}f_j(x_{2j-1}.x_{2j}) + \sum_{j\in[1,m]}g_j(x_{2j-2}.x_{2j-1}).
$$
Here, each $x_i\in \{0,1\}^b$ with $x_0:=x_{2m}$ (periodic boundary conditions), and $x.y$ denotes the concatenation of the two bit strings $x$ and $y$. Each brick acting on $2b\sim \log^2 n$ qubits can be efficiently compiled into a quantum circuit of depth approximately $\sim \log \log n$, using gates of constant locality. While these gates may be geometrically non-local, the overall circuit remains sparse and efficient to implement. The pseudorandom functions used to define the bricks are constructed using standard techniques from cryptographic hardness assumptions—such as the presumed difficulty of lattice problems—and can be instantiated from any quantum-secure one-way function \cite{peikert2008trapdoor}. In particular, the existence of quantum-secure one-way functions implies the existence of pseudorandom functions that we use in the circuit in Fig.~\ref{fig:largebricks1}. We now prove that large-brick phase states satisfy Conditions (i)-(iii). 

We first show that these states are computationally indistinguishable from Haar-random states and are pseudoentangled. Formally, we prove \Cref{fact:LargeBrickGlobalPseudorandom}.
\begin{theorem}[Large-brick phase states are pseudorandom and pseudoentangled]\label{fact:LargeBrickGlobalPseudorandom}
    Suppose, for large-brick phase states defined in Eq.~\eqref{eq:largebrickPRS}, the functions $f_1,\dots,f_m$ and $g_1, \dots, g_m$ are chosen to be genuinely random. Then the ensemble moments match the moments of the Haar measure up to a negligible difference:
    \begin{align*}
\left\|\E_{f_{1:m},g_{1:m}}\dyad{\psi_{f_{1:m},g_{1:m}}}^{\otimes t} - \E_{\phi\sim\mathrm{Haar}(\C^N)}\dyad{\phi}^{\otimes t}\right\|_1 \\
\leq O\left(\frac{t^2n^2}{2^b}\right).
    \end{align*}
    Moreover, the states are pseudoentangled, having entropy $O(b)$ along 1D geometric cuts.
\end{theorem}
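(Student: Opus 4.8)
The plan is to reduce the trace-norm bound to a single second-moment (purity) computation, and then to carry out a combinatorial count governed by the overlapping ``glued brickwork'' structure. Since $\ket{\psi_{f_{1:m},g_{1:m}}}^{\otimes t}$ lives in the symmetric subspace $\Sym{t}{\C^N}$, the ensemble moment $\rho \deff \E_{f_{1:m},g_{1:m}}\dyad{\psi_{f_{1:m},g_{1:m}}}^{\otimes t}$ is a density operator supported on $\Sym{t}{\C^N}$, while the Haar moment is exactly the maximally mixed state $\ProjSym/D$ on that space, with $D=\binom{N+t-1}{t}$. Using the elementary bound $\Norm{\rho-\ProjSym/D}_1\le\sqrt{D\,\Tr(\rho^2)-1}$, which follows from $\Norm{\cdot}_1\le\sqrt{D}\,\Norm{\cdot}_2$ on the $D$-dimensional symmetric subspace together with $\Tr(\rho\cdot\ProjSym/D)=1/D$, the entire estimate collapses to controlling the purity $\Tr(\rho^2)$.

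Because the brick functions $f_j,g_j$ are independent and uniform, the phase average factorizes over bricks, and a direct expansion gives $\Tr(\rho^2)=N^{-2t}\,\#\mathcal{E}$, where $\mathcal{E}$ is the set of configurations of $2t$ strings in $\{0,1\}^n$ (the $t$ ``ket'' and $t$ ``bra'' strings) such that, restricted to every brick, the multiset of $2t$ local $2b$-bit values has all-even multiplicities. I would split $\mathcal{E}$ into \emph{genuine} configurations --- those whose full $2t$-string multiset is already all-even (equivalently, the bra strings are a permutation of the ket strings, up to lower-order internal collisions) --- and \emph{spurious} ones, which are even on every brick but not globally even. The genuine configurations number $t!\,N^t\bigl(1+O(t^2/N)\bigr)$ by the standard random-phase-state count, and multiplying by $D/N^{2t}\approx 1/(t!\,N^t)$ reproduces the ``$1$'' up to the benign $O(t^2/N)$ correction.

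The heart of the argument, and the step I expect to be hardest, is bounding the number of spurious configurations. Here I would exploit that the $f$- and $g$-bricks tile the $2m$ blocks in two staggered layers with periodic boundary conditions, so the bricks form a single cycle in which consecutive bricks share exactly one block. Reading the local even-parity constraints around this cycle as a transfer/consistency condition, a genuine configuration corresponds to a \emph{constant} effective pairing of the $2t$ strings, whereas any spurious configuration must change its pairing as one traverses the ring. On a cycle such ``pairing defects'' can only occur in pairs, and at each defect the shared block forces an otherwise-unconstrained coincidence of two block values, each coincidence costing a factor $2^{-b}$. Union-bounding over the $O(n^2)$ choices of the two defect locations and the $O(t^2)$ choices of which copies are transposed at each defect (absorbing the remaining lower-order polynomial factors crudely into the prefactor), I would obtain $\#\{\text{spurious}\}\le t!\,N^t\cdot O(t^4 n^4/2^{2b})$, whence $D\,\Tr(\rho^2)-1\le O(t^2/N)+O(t^4 n^4/2^{2b})$ and, after the square root, the claimed $O(t^2 n^2/2^b)$. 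The main technical care goes into making the ``defects come in pairs, each worth $2^{-b}$'' statement precise in the presence of higher multiplicities and coincidental collisions, which I would handle with the block-by-block transfer argument rather than ad hoc casework.

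For pseudoentanglement, I would argue directly from the Schmidt decomposition. Across any contiguous $1$D cut splitting the qubits as $x=(x_L,x_R)$, at most one $f$-brick and one $g$-brick straddle the cut, so the phase decomposes as $\Phi(x)=\alpha(x_L)+\beta(x_R)+\gamma(\partial)$, where $\gamma$ depends only on the $O(b)$ qubits of the straddling bricks near the boundary. The amplitude matrix $(-1)^{\Phi(x_L,x_R)}$ then factors up to the boundary term $(-1)^{\gamma(\partial)}$, so its rank --- and hence the Schmidt rank across the cut --- is at most $2^{O(b)}$, giving entanglement entropy $O(b)$ along every geometric cut, as claimed.
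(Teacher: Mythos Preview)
Your purity-based reduction breaks down at the very first counting step. You claim that the ``genuine'' configurations (those with globally all-even $2t$-string multiset) number $t!\,N^t\bigl(1+O(t^2/N)\bigr)$, so that $D\cdot\#\{\text{genuine}\}/N^{2t}\approx 1$. But the correct leading count is $(2t-1)!!\,N^t$, not $t!\,N^t$: a globally all-even multiset of $2t$ strings corresponds generically to a perfect matching on $[2t]$, and there are $(2t-1)!!$ such matchings, not $t!$. The matchings you are missing pair a ket with a ket or a bra with a bra --- e.g.\ for $t=2$, the configuration $x^{(1)}=x^{(2)}=a$, $y^{(1)}=y^{(2)}=b$ with $a\neq b$ is globally all-even, contributes $N(N-1)\approx N^2$ terms, and is \emph{not} lower order. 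Consequently $D\,\Tr(\rho^2)-1\to(2t-1)!!/t!-1$, which is already $1/2$ at $t=2$ and grows like $2^t/\sqrt{\pi t}$; the bound $\sqrt{D\,\Tr(\rho^2)-1}$ is therefore $\Omega(1)$ and gives no control. This is not a feature of the brickwork: even for the \emph{full} random phase state (a single global brick, hence no ``spurious'' terms at all), the purity bound is vacuous for exactly this reason, despite the trace distance being $O(t^2/N)$.

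The paper avoids purity entirely. It restricts both $\rho_{\mathrm{LBPS}}^{(t)}$ and the Haar moment to the ``birthday subspace'' $\Sigma_{\mathrm{bday}}\subset\Sym{t}{\C^N}$ spanned by $t$-tuples whose $2mt$ constituent $b$-bit blocks are all distinct. On this subspace the phase average $\E_{f,g}(-1)^{\varphi(x,y)}$ vanishes unless every $f_j$- and $g_j$-argument appears an even number of times; the distinct-block condition then forces a single \emph{global} permutation matching the $t$ kets to the $t$ bras (the ket-ket and bra-bra matchings that break your purity count are excluded precisely by block distinctness). Hence $\Pi_{\mathrm{bday}}\rho_{\mathrm{LBPS}}^{(t)}\Pi_{\mathrm{bday}}$ is exactly proportional to $\Pi_{\mathrm{bday}}$ and matches the Haar moment there. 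The entire error then comes from the complement: $\dim\Sigma_{\mathrm{bday}}=\dim\Sym{t}{\C^N}\cdot\bigl(1-O(t^2n^2/2^b)\bigr)$, so each density matrix leaks at most $O(t^2n^2/2^b)$ weight off $\Sigma_{\mathrm{bday}}$, and a close-on-a-large-subspace lemma converts this into the stated trace-distance bound. Your entanglement argument (at most two bricks straddle a 1D cut, giving Schmidt rank $2^{O(b)}$) is fine and matches the paper.
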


While the detailed proof of Theorem~\ref{fact:LargeBrickGlobalPseudorandom} can be found in Sec.~\ref{subsec-app:lbps-close-to-haar} of the SM \cite{supmat}, we provide a sketch here. Note that for these states, the global phase is generated by local functions acting on bricks of $2b$ bits. For genuinely random functions $f, g$, we analyze the expected $t$-moment operator and show that it matches the Haar moment on the \emph{birthday subspace}---the set of $t$-tuples where no $b$-bit substring repeats (collision-free subspace). This relies on the key observation that random phases contributed by each brick average to zero unless the local inputs match pairwise, which only occurs when the $t$-tuples are globally related by a permutation. Thus, the moment operator reproduces Haar structure on most of its weight, yielding trace distance closeness with error $O(t^2 n^2 / 2^b)$. This is $\negl(n)$ for large bricks, i.e.\ $b\sim \log^2 n$. 

To achieve computational pseudorandomness, we replace the truly random local functions with quantum-secure pseudorandom functions (PRFs). Since PRFs are indistinguishable from random functions to any efficient quantum algorithm, large-brick phase states are computationally indistinguishable from Haar random states. The existence of quantum-secure PRFs can be based on a variety of conjectures, for instance, Learning With Errors (LWE) or Learning Parity with Noise (LPN) \cite{PeikertWaters2008}.

Furthermore, these states satisfy a (quasi) area-law for entanglement. Specifically, the entanglement entropy across any cut is upper bounded by $O(|B| \cdot b)$, where $|B|$ is the number of bricks intersected by the cut (i.e., the ``area'' of the boundary), and $b$ is the brick  size. This bound follows from a simple geometric observation: only bricks that straddle the cut can contribute to the entanglement, and each such brick adds at most $O(b)$ entropy. So, the entanglement entropy across any 1D geometric cut is $O(b)$, which is $\sim \log^2 n$, for large-brick phase states. 
~\\ ~\\ 
\textbf{Projected designs across almost all cuts:~}Large-brick phase states yield projected designs across all but a negligible fraction of the cuts of the state. The formal proof is given in Sec.~\ref{subsec-app:lbps-projected-design} of the SM \cite{supmat} while we outline the central ideas here.

We show that for a typical bipartition of the $n$ qubits, each $b$-qubit brick is split into two segments of lengths $k_j$ and $b - k_j$, where $k_j$ is sharply concentrated around $b/2$ by a standard binomial tail bound. A union bound shows that, with high probability, \emph{all} bricks are (roughly) evenly split up, i.e.\ each segment has length $[b/4, 3b/4]$. We then consider a computational basis measurement on one side of the cut. The post-measurement state on the remaining qubits has a phase determined by the measurement outcome, resulting in a new ensemble of states. 

Crucially, the post-measurement phase retains the brickwork structure, and the resulting state is still a large-brick phase state: the effective phase functions are obtained by fixing part of the input bits (those on the measured side) in the original pseudorandom functions. As long as the remaining brick sizes are $\geq \Omega(\log^2 n)$, which holds with high probability (from the binomial concentration bounds), these functions remain pseudorandom. So, following \Cref{fact:LargeBrickGlobalPseudorandom}, the projected ensemble continues to satisfy pseudorandomness (i.e., Haar indistinguishability) and low entanglement (i.e., pseudoentanglement) properties. Interestingly, this also includes all but a $1/\log^2 n$ fraction of the 1D geometric cuts. Since the entanglement entropy is low across precisely these cuts, we conclude that large-brick phase states continue to exhibit the projected design property even across most of the ``pesudoentangled" cuts.
~\\ ~\\ 
\textbf{Towards pseudorandomness beyond BQP:~}We note that the ensemble constructed in this work may potentially remain pseudorandom and pseudoentangled against a qualitatively different adversary than what is usually considered. Standard computational pseudorandomness is defined against distinguishers equipped with polynomial-time quantum computers, i.e., those in \textsf{BQP}. It is also known that any ensemble becomes distinguishable from Haar when \textsf{BQP} is augmented with a \textsf{PP} oracle ~\cite{kretschmer}, which can postselect on exponentially rare events. This raises a natural question: can pseudorandomness and pseudoentanglement persist in the large complexity landscape between \textsf{BQP} and $\textsf{BQP}^{\textsf{PP}}$?

Our work takes a first step in this direction by considering \textsf{BQP} distinguishers equipped with the power of non-collapsing measurements. The key insight is that projected designs appear to be robust against such adversaries. However, prior constructions of pseudoentangled states, which are not projected designs, such as subset~\cite{giurgicatiron2023pseudorandomnesssubsetstates} and subset-phase states~\cite{aaronson_et_al:LIPIcs.ITCS.2024.2}, are distinguishable by adversaries capable of performing $\textsf{poly}(n)$ non-collapsing measurements. 

In contrast, pseudorandom phase states~\cite{JiLiuSong2018, ananth2023pseudorandomfunctionlikequantumstate, aaronson_et_al:LIPIcs.ITCS.2024.2} can be shown to form projected designs across every cut, and remain indistinguishable from Haar even under such measurements---though they are not pseudoentangled due to high entanglement. Our large-brick phase states (Fig.\ref{fig:largebricks1}) achieve both: they maintain low entanglement while forming projected designs across almost all bipartitions. Thus, an adversary that selects a random cut and performs non-collapsing measurements will, with high probability, observe pseudoentangled residual states (see Sec.~\ref{sec:pseudorandomness-beyond-bqp} of the SM~\cite{supmat}). However, any such robustness claim relies on the cryptographic assumption that pseudorandom functions remain secure against such adversaries, and we leave open the plausibility of this conjecture for
future investigation.

Finally, our model of the adversary, making non-collapsing measurements, is a special case of the complexity class \textsf{PDQP} (Product Dynamical Quantum Polynomial time) \cite{aaronson2016spacejustabovebqp,miloschewsky2025newlowerboundsquantumcomputation},  defined as the class “just above \textsf{BQP}”. Our results suggest that pseudoentanglement may persist even against \textsf{PDQP}-like adversaries, providing a new physical motivation—from the perspective of deep thermalization—for exploring pseudorandomness beyond \textsf{BQP}.
~\\~\\
\textbf{Discussion:~}Experimental and numerical studies have shown that many-body quantum systems thermalize rapidly and with surprisingly low resource overhead \cite{choi2023preparingrandomstates,cotler2023emergent}. Since computation is ultimately a physical process, and nature itself can be viewed as a quantum computer, these observations call for theoretical explanations from the perspective of quantum computation. Our findings provide such a foundation, showing that deep thermalization can indeed be simulated by quantum computers using minimal resources. Unlike prior theoretical results, which require highly idealized conditions (e.g., infinite bath size or constant-copy distinguishers), our construction accommodates finite baths and allows an efficient observer to request polynomially many copies of the post-measurement state.  

The circuits required to prepare our states have depth $\log^2 n$, which exceeds what polynomial-time classical algorithms can efficiently simulate. This implies that classically reproducing computational deep thermalization would require circuits of at least quasi-polynomial depth ($\sim 2^{\log^2 n}$).

From the perspective of computational pseudorandomness, our results provide one of the final pieces in the broader program to derandomize the Haar measure under computational assumptions \cite{JiLiuSong2018, aaronson_et_al:LIPIcs.ITCS.2024.2, schuster2025randomunitariesextremelylow, ma2025constructrandomunitaries}. They also highlight how computational indistinguishability can reconcile low-complexity states with emergent thermal behavior, offering new insights at the intersection of complexity theory and physics. 

A natural next step is to generalize these ideas to deep thermalization at finite temperatures. In this regime, the projected ensemble becomes the so-called Scrooge ensemble—a ``thermally shifted'' Haar distribution~\cite{PhysRevX.14.041051}. Even reproducing its second moments using shallow quantum circuits remains an open problem.
~\\~\\
\textit{Note added:~} Concurrently and independently, Ref.~\cite{cui2025unitarydesignsnearlyoptimal} uses a construction similar to large brick phase states to prepare fast $t$-designs and pseudorandom states. As noted in \cite{cui2025unitarydesignsnearlyoptimal},  if the pseudorandom functions are replaced by $t$-wise independent functions, then the construction in \Cref{fact:LargeBrickGlobalPseudorandom} gives $t$-designs that are optimal in certain parameters. However, Ref.~\cite{cui2025unitarydesignsnearlyoptimal} does not discuss implications to deep thermalization.
~\\~\\
We thank Bill Fefferman, Adam Bouland, Thomas Schuster, Robert Huang, Manuel Endres, Matteo Ippoliti, Tobias Haug, Supartha Podder, and David Miloschewskya for helpful comments. Shantanav Chakraborty acknowledges funding from the Ministry of Electronics and Information Technology (MeitY), Government of India, under Grant No. 4(3)/2024-ITEA. Shantanav Chakraborty also acknowledges support from Fujitsu Ltd, Japan, and IIIT Hyderabad. Soonwon Choi acknowledges funding from the NSF CAREER award 2237244. 

\bibliographystyle{apsrev4-2}
\bibliography{refs.bib,refs_SC}

\cleardoublepage

\cleardoublepage
\onecolumngrid
\setcounter{equation}{0}
\setcounter{footnote}{0}
\setcounter{figure}{0}
\setcounter{section}{0}
\thispagestyle{empty}
\renewcommand{\theequation}{S\arabic{equation}}
\renewcommand{\thetable}{S\arabic{table}}
\renewcommand{\thefigure}{S\arabic{figure}}
\renewcommand{\thetheorem}{S\arabic{theorem}}
\renewcommand{\thelemma}{S\arabic{lemma}}
\renewcommand{\thecorollary}{S\arabic{corollary}}
\renewcommand{\thesection}{S-\Roman{section}}
\theoremstyle{plain}           
\newtheorem{claim}{Claim}
\begin{center}
\textbf{\large Supplemental Material for ''Fast computational deep thermalization''}\\
\vspace{2ex}
Shantanav Chakraborty, Soonwon Choi, Soumik Ghosh, and Tudor Giurgic\u{a}-Tiron
\vspace{2ex}
\end{center}

\renewcommand\thesection{S\arabic{section}}
\renewcommand\thesection{S\arabic{section}}
\renewcommand\thefigure{S\arabic{figure}}

Here, we provide detailed derivations of the results of the main article.

\section{Notations and preliminaries}
\noindent Let us first establish some basic notation:

\begin{definition}
    Given a set $X$, denote by $\powerset{X}$ (alternatively, denoted $2^X$) the power set of $X$, i.e. the set of all subsets of $X$. For our practical uses, let us denote by $\calP_{\geq\polylog}(X)$ the set of subsets of $X$ of size $2^{\Omega(\polylog\log\abs{X})}$.
\end{definition}

\begin{definition}
    Given a subset $S \in \powerset{[N]}$, we define the $(S, t)$-symmetric subspace $\Sym{t}{S}$ as the symmetric subspace of $t$ copies of the space spanned by the vectors $\{\ket{j}\}_{j\in S}$.
\end{definition} 
~\\
\textbf{Higher moments of Haar random states:~}It is well known (see for instance \cite{harrow2013church}) that the $t$-th moment of a Haar random state on a $d$-dimensional Hibert space $\mathcal{H}$ can be written as a projection $\Pi_{\mathrm{Sym}^t(\mathcal{H})}$ onto the symmetric subspace of $\mathcal{H}^{\otimes t}$. That is,
\begin{equation}
\rho^{(t)}_{\mathrm{Haar}}=\mathbf{E}_{\phi\sim\mathrm{Haar}(\mathcal{H})}\dyad{\phi}^{\otimes t}=\int_{\phi\sim\mathrm{Haar}(\mathcal{H})}d\phi~\dyad{\phi}^{\otimes t}=\dfrac{\Pi_{\mathrm{Sym^t}(\mathcal{H})}}{\mathrm{dim}~\mathrm{Sym}^t(\mathcal{H})},
\end{equation}
where,
\[
\mathrm{dim}~\mathrm{Sym}^t(\mathcal{H})=\binom{d+t-1}{t}.
\]
~\\~\\
\textbf{Pseudoentangled states:~}An ensemble $\mathcal{E}$ of $n$-qubit pure states is said to be pseudoentangled \cite{aaronson_et_al:LIPIcs.ITCS.2024.2, bouland_et_al:LIPIcs.CCC.2024.21}, if every state $\ket{\psi}\in\mathcal{E}$ is (i) efficiently preparable, (ii) is computationally pseudorandom, and has (iii) low entanglement. Let us expand on each of these points. Point (i) implies that $\ket{\psi}$ can be prepared by a quantum circuit of depth at most $\poly(n)$. For any $t=\poly(n)$, the ensemble $\mathcal{E}$ is computationally pseudorandom if any polynomial-time quantum algorithm $\mathcal{C}$ with access to $t$ copies of $\ket{\psi}\in\mathcal{E}$ cannot distinguish it from $t$-moments of the Haar ensemble. More concretely,  
\begin{equation}
    \left|\mathcal{C}\left(\mathbb{E}_{\psi\sim \mathcal{E}}\dyad{\psi}^{\otimes t}\right) -\mathcal{C}\left(\mathbb{E}_{\phi\sim\mathrm{Haar}}\dyad{\phi}^{\otimes t}\right)\right|\leq \negl(n),
\end{equation}
where $\negl(n)$ is any function that decays with $n$ (e.g.~$1/\poly(n)$).
For point (iii), the state $\ket{\psi}$ should have low entanglement across spatially local (or geometrically local) bipartitions of qubits, despite being computationally pseudorandom. Typically, the entanglement across these cuts scales as some function of $n$ (say $\Theta(f(n))$), which is required to be low in comparison to another state having high entanglement (say $\Theta(g(n))$). The difference between the two is what is referred to as the \textit{entanglement gap}. In the main article, we compare the entanglement content of the states we construct with Haar random states. For instance, large-brick phase states are pseudoentangled states with entanglement across geometrically local cuts being as low as $\Omega(\log^2 n)$. The entanglement is (sub)-exponentially lower than Haar random states.

\section{Subset phase states are not projected designs}
\label{section: subset states are not projected designs}
Consider the states considered in Ref.~\cite{aaronson_et_al:LIPIcs.ITCS.2024.2}, which demonstrated an example of a PRS ensemble which also presented undetectably low entanglement across all cuts with high probability. The construction relies on pseudorandom subset-phase states:
\begin{definition}[Subset-phase state \cite{aaronson_et_al:LIPIcs.ITCS.2024.2}]
    \begin{equation}
    \label{eq:PRS}
    \ket{\psi_{M,R,f}} \equiv \frac{1}{\sqrt{M}}\sum_{x\in S}(-1)^{f(x)}\ket{x}\quad \in \mathbb C^{2^n}\,,
\end{equation}
where $S \in \binom{[N]}{M}$ is a pseudorandom subset of the computational basis of size $M= \abs{S}=2^{\polylog(n)}$, and $f$ is a classical one-way function.
\end{definition}

We will also work in the hybrid picture, in which pseudorandom quantities such as $S$ and $f$ are treated as genuinely uniformly random, aiming to show that this leads to states with strong statistical properties.

\begin{observation}
By design, a state like in Eq.~\eqref{eq:PRS} cannot build projected designs because it is constructed to have low entanglement across (any) cuts. Birthday statistics tell us that collapsing a subsystem $B$ of size $n-O(n)$ will leave us with a computational basis state in $A$, i.e. a depth-one state, with probability  $1-O(2^{\polylog(n)-\abs{B}}) = 1-\textsf{negl}(n)$. It is well-known that such states cannot form designs, since there is a $\Omega(nt)$ lower bound on the depth of the elements of a $t$-design. The same argument holds for subset states \cite{giurgicatiron2023pseudorandomnesssubsetstates}.
\end{observation}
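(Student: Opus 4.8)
The plan is to establish the Observation---that the subset-phase state of Eq.~\eqref{eq:PRS} is not a projected design---in two stages: a birthday-type collision count showing that the projected ensemble collapses onto computational-basis states with overwhelming probability, and then a moment-operator comparison showing that any ensemble supported on basis states is bounded away from the Haar projected design. Because ``being a projected $t$-design'' is an information-theoretic statement about $t$-th moments, I would work directly in the hybrid picture, treating $S$ and $f$ as genuinely uniform; the phase function $f$ never enters the argument, so the same proof covers the phaseless subset states \cite{giurgicatiron2023pseudorandomnesssubsetstates}. Fix a cut $A\vert B$ with $\abs{B}\geq\polylog(n)+\omega(\log n)$ (in particular, $\abs{B}=\Omega(n)$ as in the statement), and set $N_A\deff 2^{\abs{A}}$, $N_B\deff 2^{\abs{B}}$. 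Writing $x=(x_A,x_B)$ for $x\in S$ and measuring $B$ in the computational basis, outcome $z\in\B^{\abs{B}}$ occurs with probability $p_z=N_z/M$, where $N_z\deff\abs{\{x_A:(x_A,z)\in S\}}$; the residual state on $A$ is supported on exactly those $N_z$ strings, and is a single basis state $\ket{x_A}$ (up to a global sign) precisely when $N_z=1$.

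First I would bound the probability of a non-trivial residual state by a collision count. Since $N_z(N_z-1)$ counts ordered pairs of distinct elements of $S$ sharing their $B$-register, and a uniform subset places such a pair with probability $\approx 1/N_B$,
\begin{align*}
\E_{S}\,\Pr_{z}[\,N_z\geq 2\,] \;=\; \E_{S}\!\left[\frac{1}{M}\!\!\sum_{z:\,N_z\geq 2}\!\!N_z\right] \;\leq\; \frac{1}{M}\,\E_{S}\!\left[\sum_{z} N_z(N_z-1)\right] \;\leq\; \frac{M}{N_B} \;=\; 2^{\polylog(n)-\abs{B}} \;=\; \negl(n).
\end{align*}
Hence, with probability $1-\negl(n)$ over both $S$ and the measurement, the projected state is a depth-one basis state, matching the claimed bound.

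The hard part is the second stage: converting ``the residual states are trivial'' into a quantitative separation from Haar. I would write the projected moment, averaged over the ensemble, as $\rho_A^{(t)}=\E_{S,f}\E_{z}\dyad{\psi_A^{(z)}}^{\otimes t}=(1-\delta)\,\rho_{\mathrm{diag}}+\delta\,\sigma$, where $\rho_{\mathrm{diag}}=\sum_{x_A}q_{x_A}\dyad{x_A}^{\otimes t}$ gathers the collision-free contributions, $\delta=\negl(n)$ by the bound above, and $\sigma$ is an arbitrary residual state. The key point is that $\rho_{\mathrm{diag}}$ has rank at most $N_A$, while the Haar moment $\rho_{\mathrm{Haar}}^{(t)}=\ProjSym/\dim\Sym{t}{[N_A]}$ is the normalized projector onto a space of dimension $\binom{N_A+t-1}{t}$. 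Letting $P$ be the projector onto $\mathrm{supp}(\rho_{\mathrm{diag}})$ and testing both operators against it,
\begin{align*}
\Norm{\rho_A^{(t)}-\rho_{\mathrm{Haar}}^{(t)}}_1 \;\geq\; \mathrm{Tr}\!\left[P\big(\rho_A^{(t)}-\rho_{\mathrm{Haar}}^{(t)}\big)\right] \;\geq\; (1-\delta)-\frac{N_A}{\binom{N_A+t-1}{t}} \;\geq\; 1-\negl(n)
\end{align*}
for every $t\geq 2$, using $\binom{N_A+t-1}{t}\geq\binom{N_A+1}{2}>N_A^2/2$; the bound is vacuous at $t=1$, consistent with a uniform basis-state ensemble being a valid $1$-design (ordinary thermalization). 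The subtlety I would watch most carefully is robustness: the rare ($\delta$-weight) collision events must not be allowed to rescue the design, which is exactly why the test above uses only a lower bound on the basis-state weight and never the detailed form of $\sigma$. This rank obstruction is the quantitative form of the depth lower bound cited in the Observation---a depth-one state carries no entanglement and occupies only a rank-$N_A$ slice of the symmetric subspace, whereas a genuine projected $t$-design must populate all $\binom{N_A+t-1}{t}$ of its dimensions.
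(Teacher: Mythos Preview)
Your argument is correct. The first stage---the birthday/collision estimate showing that the post-measurement state on $A$ is a computational-basis state with probability $1-O(M/N_B)$---is exactly the reasoning the paper sketches in the Observation. Where you genuinely diverge is in the second stage: the paper simply invokes the $\Omega(nt)$ circuit-depth lower bound for $t$-designs to rule out depth-one states, whereas you give a direct rank/dimension argument, testing the projected moment against the projector onto $\mathrm{span}\{\ket{x_A}^{\otimes t}\}$ and using $\dim\Sym{t}{[N_A]}\gg N_A$ for $t\geq 2$. Your route is more elementary and self-contained (no appeal to circuit lower bounds) and also yields an explicit, nearly-maximal trace-distance lower bound rather than a qualitative ``cannot form designs'' conclusion; the paper's citation, on the other hand, places the obstruction in the broader context of complexity lower bounds for designs. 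One small bookkeeping point: as written you compare the \emph{ensemble-averaged} projected moment $\E_{S,f}\E_z[\cdot]$ to Haar, which literally shows the average fails, not that a typical fixed $(S,f)$ fails; a one-line Markov bound on $\delta_S\deff\Pr_z[N_z\geq 2]$ converts your expectation estimate into ``$\delta_S=\negl(n)$ for all but a $\negl(n)$ fraction of $S$,'' after which your rank argument applies state-by-state and matches the per-state phrasing of the Observation.
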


\section{Generic subset distributions for Haar-like state ensembles}

In this section, we aim to relax the uniform subset distribution in the standard subset and subset-phase state constructions. Our goal is to find technical conditions that a distribution $\calD$ over (large enough) subsets of $[N]$ should satisfy such that the corresponding subset-phase and subset state ensembles remain close to Haar. 

First, we study the conditions for subset-phase states, assuming the phase function is globally uniformly random, but allowing a generic distribution over subsets:

\begin{lemma}[Subset distributions for Haar-like subset-phase states]\label{lemma:subsetstatistics}
    For $N=2^n$, consider a distribution $\mathcal D$ over the polylog-sized power set $\calP_{\geq\polylog}([N])$. The information-theoretic condition for phase-subset states induced by $\mathcal D$ to form $t$-designs, for $t=\poly(n)$, is:
    \begin{equation}
        \label{eq:tdspprs}
        \norm{\E_{\phi\sim\text{Haar}(N)}\dyad{\phi}^{\otimes t} - \E_{S\sim\mathcal{D}}\E_{f\sim \B^{[N]}} \dyad{\psi_{f, S}}^{\otimes t}}_1 = O(g(n)) = \negl(n)\,.
    \end{equation}
    Then, an equivalent condition in terms of distributions over subsets of size $t$ is:
    \begin{equation}
        \label{eq:inclusiontest}
       \sum_{T\in\binom{[N]}{t}} \abs{\mathbb{P}_\mathcal{D}(T) - \mathbb{P}_0(T)} = O(g(n))=\negl(n)\,.
    \end{equation}
    The function $g(n)$ in \eqref{eq:tdspprs} and \eqref{eq:inclusiontest} is the same. This reduces the trace distance to the classical total variation distance between the uniform distribution $\mathbb P_0(T)=\binom{N}{t}^{-1}$, and the distribution induced by $\mathcal D$ (i.e. sample a subset $S$ from $\mathcal{D}$, then sample a random subset of size $t$ uniformly):
    \begin{equation}
        \label{eq:inclusionprobability}
        \mathbb P_\mathcal{D}(T) = \mathbb E_{S\sim \mathcal D}\frac{\mathbf{1}[T\subset S]}{\binom{\abs{S}}{t}}\,.
    \end{equation}
    In other words, from the `local' perspective of most small subsets of size $t$, it looks as if they are included uniformly in the overall distribution. In the above, $\mathbf{1}[X]$ is the indicator variable taking value 1 if event $X$ occurs, and 0 otherwise.\\
\end{lemma}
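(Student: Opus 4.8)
The plan is to diagonalize both $t$-th moment operators in one common orthonormal system and thereby turn the trace-norm distance into a classical $\ell_1$ distance between eigenvalue vectors. On the collision-free (``birthday'') block of $\Sym{t}{\C^N}$ I will use the orthonormal family $\{\ket{\chi_T}\}_{T\in\binom{[N]}{t}}$, where $\ket{\chi_T}=\tfrac{1}{\sqrt{t!}}\sum_{\sigma}\ket{x_{\sigma(1)}\cdots x_{\sigma(t)}}$ is the symmetrized product over the $t$ distinct elements of $T=\{x_1,\dots,x_t\}$, with $\sigma$ ranging over permutations of $[t]$. These states are orthonormal (different $T$ have disjoint supports) and, crucially, $\ket{\chi_T}$ depends only on $T$ and not on any ambient set $S\supset T$; this is exactly what will let the average over $S\sim\calD$ pass through to an inclusion probability. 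I will show that both $\E_{S,f}\dyad{\psi_{f,S}}^{\otimes t}$ and the Haar moment are, up to a controlled collision error, diagonal in this family with eigenvalues $\mathbb P_\calD(T)$ and $\mathbb P_0(T)$, after which orthonormality gives $\norm{\cdots}_1=\sum_T\abs{\mathbb P_\calD(T)-\mathbb P_0(T)}$.

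First I would average over the phase for fixed $S$. Expanding $\dyad{\psi_{f,S}}^{\otimes t}$ and using that each $f(z)$ is an independent fair bit, $\E_f(-1)^{\sum_i f(x_i)+\sum_j f(y_j)}$ equals $1$ when every basis string occurs an even number of times across the combined index multiset and $0$ otherwise. On the collision-free block (tuples $x$ with no repeats) this parity constraint forces $y$ to be a permutation of $x$, so the surviving contribution collapses to exactly $\frac{t!}{M^t}\sum_{T\subset S,\,\abs T=t}\dyad{\chi_T}$ with $M=\abs S$. Every remaining term, in which some basis string repeats on either side, lives off this block and will be carried as error.

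Next I would average over $S\sim\calD$. Since the $\ket{\chi_T}$ are $S$-independent, linearity gives the collision-free piece $\sum_{T\in\binom{[N]}{t}}\bigl(\E_{S\sim\calD}\tfrac{t!}{\abs S^t}\mathbf{1}[T\subset S]\bigr)\dyad{\chi_T}$. Writing $\tfrac{t!}{\abs S^t}=\binom{\abs S}{t}^{-1}\prod_{i=1}^{t-1}(1-i/\abs S)=\binom{\abs S}{t}^{-1}\bigl(1-O(t^2/M)\bigr)$ identifies the eigenvalue with $\mathbb P_\calD(T)$ up to a multiplicative $(1\pm O(t^2/M))$. The same bookkeeping on $\E_{\phi\sim\Haar{N}}\dyad{\phi}^{\otimes t}=\Pi_{\Sym{t}{\C^N}}/\binom{N+t-1}{t}$, using $\Pi_{\Sym{t}{\C^N}}=\sum_T\dyad{\chi_T}+(\text{collision part})$ and $\binom{N+t-1}{t}=\binom{N}{t}(1+O(t^2/N))$, yields eigenvalue $\mathbb P_0(T)=\binom{N}{t}^{-1}$ up to $(1\pm O(t^2/N))$. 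Both eigenvalue families sum to $1$ (for instance $\sum_T\mathbb P_\calD(T)=\E_S\binom{\abs S}{t}/\binom{\abs S}{t}=1$), so we are comparing genuine distributions, and orthonormality of the $\ket{\chi_T}$ makes the trace norm of the difference of the two diagonal operators equal to $\sum_T\abs{\mathbb P_\calD(T)-\mathbb P_0(T)}$.

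The main obstacle is the collision error: controlling the off-block parts of both operators together with the normalization slack $\prod_i(1-i/\abs S)$, and showing these are negligible and of the same order as (hence absorbable into) $g(n)$. This is where the hypotheses $\abs S=2^{\Omega(\polylog(n))}$ and $t=\poly(n)$ enter, via a birthday bound: the collision-free block carries all but an $O(t^2/M)$ fraction of the trace of $\E_{S,f}\dyad{\psi_{f,S}}^{\otimes t}$ (and $O(t^2/N)$ for Haar), so the discarded pieces contribute $O(t^2/M)+O(t^2/N)$ in trace norm by the triangle inequality. Collecting terms, $\norm{\E_{S,f}\dyad{\psi_{f,S}}^{\otimes t}-\E_\phi\dyad{\phi}^{\otimes t}}_1=\sum_T\abs{\mathbb P_\calD(T)-\mathbb P_0(T)}+O(t^2/M)+O(t^2/N)$, with the collision corrections negligible in this regime. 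Hence the trace-norm distance of \eqref{eq:tdspprs} and the total-variation distance of \eqref{eq:inclusiontest} coincide up to negligible terms, so the same $g(n)$ governs both, as claimed.
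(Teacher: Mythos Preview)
Your argument is correct and lands on the same reduction as the paper: diagonalize both $t$-moment operators in the ``unique-type'' (collision-free) basis of $\Sym{t}{\C^N}$, read off the eigenvalues as $\mathbb P_\calD(T)$ and $\mathbb P_0(T)$, and pick up negligible birthday corrections. The paper's route differs in one organizational respect: rather than averaging the phase directly, it inserts an intermediate operator $\rho_\calD^{\text{Haar}}=\E_{S\sim\calD}\Pi_{\Sym{t}{S}}/\dim\Sym{t}{S}$ and invokes the known bound $\bigl\|\E_f\dyad{\psi_{f,S}}^{\otimes t}-\Pi_{\Sym{t}{S}}/\dim\Sym{t}{S}\bigr\|_1=O(t^2/\abs{S})$ as a black box, then compares $\rho_\calD^{\text{Haar}}$ to $\rho_0$ in the full type basis. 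Your direct phase average is more self-contained and avoids the intermediate object, at the cost of redoing the parity calculation; the paper's detour through Haar-on-$S$ makes the diagonal structure manifest without that calculation.

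One step in your write-up deserves an explicit sentence. You infer that the off-block piece contributes $O(t^2/M)$ \emph{in trace norm} from the fact that it carries $O(t^2/M)$ of the \emph{trace}. For a generic PSD $\rho$ this inference would only give $O(t/\sqrt{M})$ via a gentle-measurement bound, because of possible cross terms $\Pi_{\text{cf}}\rho\Pi_{\text{cf}}^\perp$. Here the stronger bound is in fact valid, because $\Pi_{\text{cf}}$ \emph{commutes} with $\E_f\dyad{\psi_{f,S}}^{\otimes t}$: if the ket tuple $x$ is collision-free then the parity constraint forces each of the $t$ distinct $x_i$ to appear an odd number of times among the $y_j$, hence exactly once, so $y$ is a permutation of $x$ and is itself collision-free. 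Thus there are no cross terms, the off-block part is PSD, and its trace norm equals its trace. Your line ``every remaining term \dots\ lives off this block'' is asserting exactly this, but the reason why is worth stating.
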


\begin{proof}
We note that Proposition 1 in \cite{feng2025dynamics} proves a similar result. For a subset distribution $\calD$, define the following three matrices which will end up being close:
    \begin{align}
        \rho_0 &\equiv \E_{\phi\sim \Haar{[N]}} \dyad{\phi}^{\otimes t} = \frac{\Pi_{\Sym{t}{[N]}}}{\dim~\Sym{t}{[N]}}\\
        \rho_\calD^{\text{Haar}} &\equiv \E_{S\sim \calD}\E_{\phi_S\sim \Haar{S}} \dyad{\phi_S}^{\otimes t} = \E_{S\sim \calD}\frac{\Pi_{\Sym{t}{S}}}{\dim\Sym{t}{S}}\\
        \rho_\calD^{\text{phase}} &\equiv \E_{S\sim \calD}\E_{f} \dyad{\psi_{f,S}}^{\otimes t}
    \end{align}
    First, let us show that $\rho_\calD^{\text{Haar}}$ and $\rho_\calD^{\text{phase}}$ are close. This follows from a triangle inequality on term-by-term differences, when we use the fact that:
    \begin{equation}
        \norm{\E_f \dyad{\psi_{f,S}}^{\otimes t} - \frac{1}{\dim\Sym{t}{S}}\Pi_{\Sym{t}{S}}}_1 = O\left(\frac{t^2}{\abs{S}}\right)\,.
    \end{equation}
    This is simply the main result of \cite{brakerski2019pseudo} applied to the set $S$ as opposed to $[N]$. Then the triangle inequality gives us that:
    \begin{align}
        \norm{\rho_\calD^{\text{Haar}} - \rho_\calD^{\text{phase}}}_1 &\leq \E_{S\sim \calD} \norm{\E_f \dyad{\psi_{f,S}}^{\otimes t} - \frac{1}{\dim\Sym{t}{S}}\Pi_{\Sym{t}{S}}}_1&\\
        &\leq \E_{S\sim \calD} O\left(t^2/\abs{S}\right)&\\
        &\leq O\left(\frac{t^2}{\min_{S \in \mathop{\text{supp}}(\calD)} \abs{S}}\right)&\\
        &= \negl(n)&\text{ since }\mathop{\text{supp}}(\calD) \subset \powerset{[N]}_\text{eff}\,.
    \end{align}
    Now, what remains is to find the conditions under which $\rho_\calD^{\text{Haar}}$ is close to the full Haar average $\rho_0$. First, notice that $\rho_\calD^{\text{Haar}}$ (and indeed all of the single-subset terms) commute with $\rho_0$ since they share the type basis of the full symmetric subspace. In other words, let the orthonormal basis of $\Sym{t}{S}$ be made of the single-type vectors $\ket{\theta}$ for $\theta \in \mathop{\text{Types}\/}\left(S, t\right)$ --- where a type $\theta$ represents the frequencies of each $s\in S$. Specifically, if $S=\{s_1, \dots, s_m\}$, then $\theta=(k_{s_1}, k_{s_2}, \dots, k_{s_m})$ with $k_{s_1}+\dots+k_{s_m}=t$ nonnegative integers. For such a $\theta$ type, the corresponding basis vector $\ket{\theta}$ is proportional to all type permutations, namely:
    \begin{equation}
    \ket{\theta}=\ket{\left(k_{s_1}, \dots, k_{s_m}\right)} \propto \sum_{\sigma \in \mathfrak{S}_t}\sigma\ket{\underbrace{s_1,\dots,s_1}_{k_{s_1}\text{ times}},\;\underbrace{s_2,\dots,s_2}_{k_{s_2}\text{ times}},\dots\dots,\underbrace{s_m,\dots,s_m}_{k_{s_m}\text{ times}}}\,.
    \end{equation}
    With this choice of orthonormal basis, we have that $\Pi_{\Sym{t}{S}} = \sum_{\theta\in\mathop{\text{Types}\/}\left(S, t\right)}\dyad{\theta}$. Our goal is to have near-equality between $\rho_\calD^{\text{Haar}}$ and $\rho_0$, which means:
    \begin{align}
        \rho_0 &\approx_\text{1-norm} \rho_\calD^{\text{Haar}} \\
        \frac{1}{\binom{N+t-1}{t}}\Pi_{\Sym{t}{[N]}} &\approx_\text{1-norm} \E_{S\sim\calD} \frac{1}{\binom{\abs{S}+t-1}{t}}\Pi_{\Sym{t}{S}} \\
        \frac{1}{\binom{N+t-1}{t}}\sum_{\theta\in\mathop{\text{Types}\/}\left([N], t\right)}\dyad{\theta} &\approx_\text{1-norm} \E_{S\sim\calD} \frac{1}{\binom{\abs{S}+t-1}{t}}\sum_{\varphi\in\mathop{\text{Types}\/}\left(S, t\right)}\dyad{\varphi}\,.
    \end{align}
    Here $A\approx_{\text{1-norm}}B$ implies that operators $A$ and $B$ are within $\negl(n)$ in 1-norm distance. 
    Notice that both sides here are diagonal in the same basis of $[N]$-types, since $\mathop{\text{Types}\/}\left(S, t\right) \subset \mathop{\text{Types}\/}\left([N], t\right)$ for any $S\subset [N]$, so it is enough to evaluate the diagonal elements:
    \begin{align}
        \mel{\theta}{\rho_\calD^{\text{Haar}}}{\theta} &= \E_{S\sim\calD}\frac{\mathop{\mathbf{1}}\left[\theta \in \mathop{\text{Types}\/}\left(S, t\right)\right]}{\binom{\abs{S}+t-1}{t}} & \text{for } \theta\in \mathop{\text{Types}\/}\left([N], t\right)\\
        &= \E_{S\sim\calD}\frac{\mathop{\mathbf{1}}\left[\mathop{\text{supp}\/}(\theta) \subset S\right]}{\binom{\abs{S}+t-1}{t}}\,, &
    \end{align}
    where we naturally denote by $\mathop{\text{supp}\/}(\theta)$ the set of computational basis elements which have nonzero frequency in the type $\theta$. Therefore, the trace distance between the two matrices is:
    \begin{align}
        \frac12\norm{\rho_0 - \rho_\calD^{\text{Haar}}}_1 &= \frac12\sum_{\theta\in\mathop{\text{Types}\/}\left([N], t\right)} \abs{\frac{1}{\binom{N + t - 1}{t}} - \mel{\theta}{\rho_\calD^{\text{Haar}}}{\theta}} \\
        &= \frac12\sum_{\theta\in\mathop{\text{Types}\/}\left([N], t\right)} \abs{\frac{1}{\binom{N + t - 1}{t}} - \E_{S\sim\calD}\frac{\mathop{\mathbf{1}}\left[\mathop{\text{supp}\/}(\theta) \subset S\right]}{\binom{\abs{S}+t-1}{t}}}\,.\label{eq:fullTVdistance}
    \end{align}
    We know that because of birthday statistics, we have that the types $\mathop{\text{Types}\/}\left(S, t\right)$ are dominated by the unique types $\binom{S}{t}$, since the total number of types is $\binom{\abs{S}+t-1}{t} = \binom{\abs{S}}{t}\left(1 + O\left(\frac{t^2}{\abs{S}}\right)\right)$. The correction is negligible because, in our case, $\abs{S}=2^{\polylog(n)}$ while $t=\poly(n)$. Let us show that restricting the above distance to the set of unique types $\binom{[N]}{t}$ is enough to have overall small trace distance. Assume then that we have:
    \begin{equation}
        \label{eq:condition1}
        \sum_{\theta\in\binom{[N]}{t}} \abs{\frac{1}{\binom{N}{t}} - \E_{S\sim\calD}\frac{\mathop{\mathbf{1}}\left[\theta \subset S\right]}{\binom{\abs{S}}{t}}} = \negl(n)\,,
    \end{equation}
    and let us see why this implies that the full sum \eqref{eq:fullTVdistance} is also negligible. This is straightforward: the probability mass in $\rho_0$ outside of the unique types adds up to:
    \begin{equation}
        \label{eq:addition1}
        \sum_{\theta \in \mathop{\text{Types}\/}\left([N], t\right) \setminus \binom{[N]}{t}} \mel{\theta}{\rho_0}{\theta} = 1-\frac{\binom{N}{t}}{\binom{N+t-1}{t}} = O\left(\frac{t^2}{N}\right) = \negl(n)\,.
    \end{equation}
    Similarly, the probability mass in $\rho_\calD^{\text{Haar}}$ on this space of non-unique types adds up to:
    \begin{align}
        \sum_{\theta \in \mathop{\text{Types}\/}\left([N], t\right) \setminus \binom{[N]}{t}} \mel{\theta}{\rho_\calD^{\text{Haar}}}{\theta} &= 1 - \sum_{\theta \in \binom{[N]}{t}} \mel{\theta}{\rho_\calD^{\text{Haar}}}{\theta}& \\
        &=1 - \sum_{\theta \in \binom{[N]}{t}} \mel{\theta}{\rho_0}{\theta} + \negl(n) & \text{ by triangle inequality using \eqref{eq:condition1}} \\
        &= \negl(n) & \text{ because of \eqref{eq:addition1}.}
    \end{align}
    Therefore, both matrices have negligible support outside of the unique types and the contribution to the TV distance can be at most negligible. This concludes the proof.
\end{proof}

Let us now study the case of subset states. While in the simpler case of the subset-phase states above, one can safely argue that the condition in Lemma \ref{lemma:subsetstatistics} is both necessary and sufficient, for the more combinatorially involved case of subset states we restrict ourselves to a sufficient condition:

\begin{lemma}[Subset distributions for Haar-like subset state ensembles]\label{lemma:subsetstatistics2}
    Consider a distribution $\calD$ over the large-sized subsets $\calP_{\geq\polylog}([N])$. Assume two conditions are satisfied:
    \begin{enumerate}
        \item[(a)] For any $T \in \binom{[N]}{t}$ where $t=\poly(n)$, we have that:
        \begin{equation}
            \E_{S \sim \calD} \frac{\mathbf{1}[T \subset S]}{\binom{\abs{S}}{t}} = \frac{1+O\left(g(n)\right)}{\binom{N}{t}}\,.
        \end{equation}
        \item[(b)] For any two $T,T' \in \binom{[N]}{t}$ where $t=\poly(n)$, there is a $\gamma(n)>0$ such that:
        \begin{equation}
            \E_{S \sim \calD} \frac{\mathbf{1}[T \cup T' \subset S]}{\binom{\abs{S}}{t}} \leq \frac{1+O\left(g(n)\right)}{\binom{N}{t}} \gamma(n)^{d(T,T')}\,,
        \end{equation}
        where $d(T,T')=\abs{T \setminus T'}=\abs{T\cup T'}-t=t-\abs{T\cap T'}$ is the set distance between $T$ and $T'$.
    \end{enumerate}
\end{lemma}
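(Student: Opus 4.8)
The plan is to prove that, under hypotheses (a) and (b), the subset-state moment operator $\rho_\calD^{\mathrm{subset}} \deff \E_{S\sim\calD}\dyad{\psi_S}^{\otimes t}$, where $\ket{\psi_S}=\tfrac{1}{\sqrt{\abs S}}\sum_{x\in S}\ket{x}$ is the (phaseless) subset state, is $\negl(n)$-close in trace distance to the Haar moment $\rho_0=\Pi_{\Sym{t}{[N]}}/\dim\Sym{t}{[N]}$. The essential difference from \Cref{lemma:subsetstatistics} is that, with the phase removed, the random averaging no longer annihilates the cross terms: $\dyad{\psi_S}^{\otimes t}$ carries genuine off-diagonal contributions $\dyad{\theta}{\theta'}$ between distinct types $\theta\ne\theta'$, so $\rho_\calD^{\mathrm{subset}}$ is \emph{not} diagonal in the type basis. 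Taming these off-diagonal terms is exactly what condition (b) is for, and it is the reason we claim only a sufficient condition.

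First I would separate the symmetric subspace into its \emph{unique}-type part --- the subsets $T\in\binom{[N]}{t}$ with no repeated index --- and the rest. Writing $P$ and $Q=\Pi_{\Sym{t}{[N]}}-P$ for the two projectors, a direct count gives $\mel{\psi_S^{\otimes t}}{P}{\psi_S^{\otimes t}}=t!\binom{\abs S}{t}/\abs S^{t}=1-O(t^2/\abs S)$, whence $\mathrm{Tr}[Q\rho_\calD^{\mathrm{subset}}Q]=\negl(n)$ since $\abs S=2^{\Omega(\polylog n)}$; the matching estimate $\mathrm{Tr}[Q\rho_0Q]=O(t^2/N)$ is the birthday bound already recorded in \eqref{eq:addition1}. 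As $\rho_0$ is diagonal in the type basis we have $P\rho_0Q=0$, and the operator Cauchy--Schwarz inequality $\norm{P\rho_\calD^{\mathrm{subset}}Q}_1\le\sqrt{\mathrm{Tr}[P\rho_\calD^{\mathrm{subset}}P]\,\mathrm{Tr}[Q\rho_\calD^{\mathrm{subset}}Q]}=\negl(n)$ controls the cross blocks. Together these give $\norm{\rho_\calD^{\mathrm{subset}}-\rho_0}_1\le\norm{P(\rho_\calD^{\mathrm{subset}}-\rho_0)P}_1+\negl(n)$, reducing the problem to the unique-type block.

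On that block the matrix elements are precisely the quantities in the hypotheses: evaluating the type-basis overlaps yields $\mel{T}{\rho_\calD^{\mathrm{subset}}}{T'}=t!\,\E_S\!\big[\mathbf{1}[T\cup T'\subset S]/\abs S^{t}\big]=(1+\negl)\,\E_S\!\big[\mathbf{1}[T\cup T'\subset S]/\binom{\abs S}{t}\big]$. Condition (a) then fixes the diagonal at $\tfrac{1+O(g)}{\binom N t}$, which matches $\rho_0$ up to relative error $O(g)$, so the diagonal part of $P(\rho_\calD^{\mathrm{subset}}-\rho_0)P$ contributes $\sum_T\abs{\Delta_{TT}}=O(g(n))$ by the triangle inequality. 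For the off-diagonal part $\Delta_{\mathrm{off}}$ I would use $\norm{\Delta_{\mathrm{off}}}_1\le\sqrt{\rank\Delta_{\mathrm{off}}}\,\norm{\Delta_{\mathrm{off}}}_2$ with $\rank\Delta_{\mathrm{off}}\le\binom N t$; condition (b) bounds each entry at set distance $d=d(T,T')$ by $\tfrac{1+O(g)}{\binom N t}\gamma^{d}$, and grouping the $\binom{t}{d}\binom{N-t}{d}$ partners at each distance gives
\begin{equation*}
\norm{\Delta_{\mathrm{off}}}_2^2\le\frac{1+O(g)}{\binom N t}\sum_{d\ge1}\binom{t}{d}\binom{N-t}{d}\gamma(n)^{2d},
\end{equation*}
so the dimension factor $\binom N t$ cancels against $\sqrt{\rank}$ and
\begin{equation*}
\norm{\rho_\calD^{\mathrm{subset}}-\rho_0}_1=O(g(n))+O\!\left(\sqrt{\sum_{d\ge1}\binom{t}{d}\binom{N-t}{d}\gamma(n)^{2d}}\right)+\negl(n),
\end{equation*}
which is $\negl(n)$ whenever $g(n)=\negl(n)$ and $tN\gamma(n)^2=\negl(n)$ (the $d=1$ term dominating the sum).

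I expect the crux to be the off-diagonal block: one must check that the exponential suppression $\gamma^{d}$ furnished by condition (b) really beats the polynomial growth $\binom{N-t}{d}\sim N^{d}$ in the number of type-pairs at distance $d$, and that the conversion $\norm{\cdot}_1\le\sqrt{\rank}\,\norm{\cdot}_2$ --- lossy in general --- is harmless here precisely because every unique-type entry is uniformly of order $1/\binom N t$, so that the $\binom N t$ factors cancel. A secondary subtlety is that the hypotheses constrain \emph{single}-subset inclusion probabilities, whereas $\mathrm{Tr}[(\rho_\calD^{\mathrm{subset}})^2]$ naively involves the joint law of two independent subsets; routing every estimate through the type-basis matrix elements, rather than through $\E_{S,S'}\abs{\braket{\psi_S}{\psi_{S'}}}^{2t}$, is what keeps the bound expressible in terms of conditions (a) and (b) alone.
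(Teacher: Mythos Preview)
Your reduction to the unique-type block and your identification of the matrix elements $\mel{T}{\rho_\calD^{\mathrm{subset}}}{T'}=(1+\negl)\,\E_S[\mathbf{1}[T\cup T'\subset S]/\binom{|S|}{t}]$ are correct, and this is exactly the structure conditions (a)--(b) are designed to control. The gap is in the last step: converting the off-diagonal block to trace norm via $\norm{\Delta_{\mathrm{off}}}_1\le\sqrt{\rank}\,\norm{\Delta_{\mathrm{off}}}_2$ is far too lossy and does \emph{not} yield the stated conclusion $O(g(n))+O(t\gamma(n))$. Your own computation gives $\norm{\Delta_{\mathrm{off}}}_1\lesssim\sqrt{\sum_{d\ge1}\binom{t}{d}\binom{N-t}{d}\gamma^{2d}}$, whose $d=1$ term is $\sqrt{t(N-t)}\,\gamma\approx\sqrt{tN}\,\gamma$, not $t\gamma$. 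Your remark that ``the $\binom{N}{t}$ factors cancel'' is misleading: yes, the rank factor $\binom{N}{t}$ cancels against the $1/\binom{N}{t}^2$ in $\norm{\Delta_{\mathrm{off}}}_2^2$, but what survives is the row-count $\binom{N-t}{d}\sim N^d$, and at $d=1$ this costs you $\sqrt{N}$.

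This is not a cosmetic loss. The lemma is invoked downstream for the Schmidt-patch ensemble (Fact~\ref{fact:subset-state-cube}), where the decay parameter is $\gamma=R^{-1/2}$ with $R=2^{\polylog n}$ and $N=2^n$. The paper's bound gives $t\gamma=t/\sqrt{R}=\negl(n)$; your bound gives $\sqrt{tN}\,\gamma=\sqrt{t\cdot 2^n/R}$, which is exponentially large and vacuous. So your argument, as written, does not prove the lemma and would break the chain leading to Lemma~\ref{lem:HPSShaar}. The paper's own proof is only a pointer to the calculation in~\cite{giurgicatiron2023pseudorandomnesssubsetstates}; that calculation avoids the $\sqrt{N}$ loss by exploiting more than the entrywise upper bound of condition~(b) --- in the uniform case the off-diagonal entries at fixed distance are essentially \emph{equal}, so one can work with the Johnson-scheme structure (the matrices $\sum_{d}\gamma^d J_d$ are simultaneously diagonalizable with explicitly summable spectra) rather than crudely bounding the Frobenius norm. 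A generic $\sqrt{\rank}\cdot\norm{\cdot}_2$ argument cannot recover $O(t\gamma)$ from an inequality like~(b) alone.
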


Then we have that the resulting subset state distribution is close to Haar in trace distance:
\begin{equation}
    \norm{\E_{S\sim \calD}\dyad{S}^{\otimes t} - \E_{\phi\sim\mathrm{Haar}([N])} \dyad{\phi}^{\otimes t}}_1 \leq O(g(n)) + O(t\gamma(n)),\quad \text{ for }t=\poly(n).
\end{equation}

\begin{proof}
    The proof follows by a simple modification of the calculation in \cite{giurgicatiron2023pseudorandomnesssubsetstates} which addresses the case of a uniform distribution over $m$-subsets. Notice that in that case the parameter $\gamma$ is $m/N=\negl(n)$.
\end{proof}

\section{Schmidt-patched states}
\label{app:schmidtpatch}
We begin by stating the general definition of a Schmidt patch subset phase state and a Schmidt patch subset state.
~\\
\begin{construction}[Schmidt patch subset phase state] Consider two Hilbert spaces $\calH_{A,B}$ of dimension $N_{A,B}$. For a rank $R$ and two sequences of disjoint $M_A$-subsets $A_1,\dots,A_R \in  \binom{[N_A]}{M_A}$ and disjoint $M_B$-subsets $B_1,\dots, B_R \in  \binom{[N_B]}{M_B}$, construct the {\bf patched phase-subset state} state on $\calH_A\otimes \calH_B$:
    \begin{align}
    \label{eq:PRScut}
    \ket{\Psi_{A_{1:R},B_{1:R},f}} &\equiv \frac{1}{\sqrt{M_AM_BR}} \sum_{j=1}^R \sum_{\substack{x \in A_j\\y \in B_j}} (-1)^{f(x, y)}\ket{x}_A\otimes \ket{y}_B\,.
\end{align}
In the main article, the state defined in Eq.~\eqref{eq:schmidt-patch} is a particular case with $M_A=M_B=M$. When taking $f$ to be the all-zero function (i.e., all phases are one), this becomes simply a Schmidt patch subset state, which Schmidt-patches together $R$ pairs of subset states:
\begin{align}
    \label{eq:patchedsubsetstate}
    \ket{\Psi_{A_{1:R}, B_{1:R}}} &\equiv \frac{1}{\sqrt{M_A M_B R}} \sum_{j=1}^R \ket{A_j}_A\otimes \ket{B_j}_B = \frac{1}{\sqrt{M_A M_B R}} \sum_{j=1}^R \sum_{\substack{x \in A_j\\y \in B_j}} \ket{x}_A\otimes \ket{y}_B\,.
\end{align}
\end{construction}

In this construction, we allow for rank $R$ across a known, fixed cut and is defined by the subsets $\{(A_j, B_j)\}_{j=1}^R$. Next, we show that these states can be prepared efficiently. 

\subsection{Preparing Schmidt-patch states}
\label{app:schmidtpatchprep}

Let $R=2^r$, $M_A=2^{k_A}$, $M_B=2^{k_B}$, $N_A=2^{n_A}$ and $N_B=2^{n_B}$, where $n_A+n_B=n$. Then, there is an efficient circuit (See Fig.~\ref{figapp:schmidt-patch-circuit}) that prepares pseudorandom samples from the uniform distribution over subset phase states from classical one-way functions. The construction is simple and uses known efficient circuits for pseudorandom permutations ($\PRP$) on each subsystem: $A$ (qubits $1$ through $n_A/2$) and $B$ (qubits $n_A/2+1$ through $n$). The steps are outlined below

    \begin{enumerate}
        \item From the $\ket{0^{n_A}}_A\ket{0^{n_B}}_B$ initial state, prepare $r$ Bell states on the qubits pairs $(1,n_A+1),(2, n_A+2),\dots (r, n_A+r)$, thus entangling subsystems $A$ and $B$ with $r$ bits of entanglement. Also apply Hadamards to the next $k_A$ qubits of subsystem $A$ and $k_B$ qubits of subsystem $B$.
        
        \item Apply independent pseudorandom $S_{N_A}$-permutations $\sigma_A=\PRP_{S_{N_A}}(k_A)$ and $S_{N_B}$ permutations $\sigma_B=\PRP_{S_{N_B}}(k_B)$ to subsystem $A$ and $B$ respectively. This results in a Schmidt patch subset state.
    \end{enumerate}

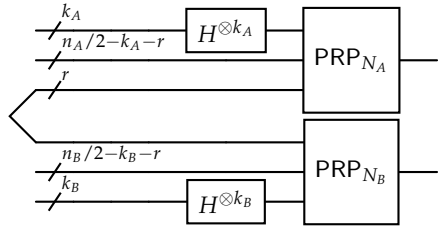
\begin{figure}[h!]
    \centering
\begin{quantikz}[row sep=0.3em]
        & \qwbundle{k_A} & & & \gate{H^{\otimes k_A}} & \gate[3]{\mathsf{PRP}_{N_A}} \\
        & \qwbundle{n_A/2-k_A-r} & & &  &  & \\
        \makeebit{} & \qwbundle{r} & & & & \\
        & &  & & & \gate[3]{\mathsf{PRP}_{N_B}} \\
        & \qwbundle{n_B/2 - k_B - r} & &  & & & \\
        & \qwbundle{k_B} & & & \gate{H^{\otimes k_B}} & \\
    \end{quantikz}
\label{figapp:schmidt-patch-circuit}
\caption{Quantum circuit for preparing a Schmidt patch subset state}
\end{figure}
In order to obtain a Schmidt patch subset phase state, apply a pseudorandom function $f = \PRF_{k_f} : [N]\mapsto \B$ via a phase oracle, as usual. 

In the main article, the state in Eq.~\eqref{eq:schmidt-patch} is a Schmidt patch subset phase state with $n_A=n_B=n/2$ and $k_A=k_B=k$. It is easy to see that in this case, the output of the circuit in Fig.~\ref{figapp:schmidt-patch-circuit} leads to the final state (define $n'\equiv n-k-r$):
    \begin{align}   
        \ket{\psi_{M,R}}&=2^{-r/2-k} \sum_{\substack{x \in\{0,1\}^r \\ a, b \in\{0,1\}^k}}\ket{\sigma_A\left(x.a.0^{n'}\right)}_A\ket{\sigma_B\left(x.b.0^{n'}\right)}_B\,\\
        &=\dfrac{1}{M\sqrt{R}} \sum_{j=1}^{R} \sum_{x\in A_j,~y\in B_j}\ket{x}_A\ket{y}_B.
    \end{align}
By adding the pseudorandom phase function we obtain the state $\ket{\psi_{M,R,f}}$ in Eq.~\eqref{eq:schmidt-patch}. Upon inspection, it is easy to notice that if $\sigma_{A}, \sigma_{B}$ were genuine random samples from $S_{2^{n/2}}$, the above results in a uniform random sample from the corresponding set of patched subset phase states.

\subsection{Schmidt-patch states are close to Haar random states}\label{app:SchmidtPatchHaarProof}

In this section, we prove that both the Schmidt-patch subset state and the Schmidt-patch subset-phase states are statistically close to Haar random states. For simplicity, we consider $M_A=M_B=M$

\begin{lemma}[Proximity to Haar]\label{lem:HPSShaar}
    Consider the construction \eqref{eq:patchedsubsetstate} of $(M,R,N_A, N_B)$-Schmidt-patch subset states across an $A-B$ cut, such that subsystem $A$ is of $\ell$ qubits while subsystem $B$ is of $n-\ell$ qubits. In other words, $N_A=2^{\ell}$ and $N_B=2^{n-\ell}$, with $N=N_A\cdot N_B=2^n$. Also, define $D=\min\{N_A, N_B\}$. Furthermore, let the rank across the cut be $R=2^{\polylog(n)}$, and the patched disjoint subsets be of size $\abs{A_j}=\abs{B_j}=M$. Assume furthermore that $\frac{M^2R^{3/2}}{N}<1$. Then, random subset states of this form are close to Haar under $t=\poly(n)$ number of copies:
    \begin{equation}
    \label{eq:closeness-Haar-schmidt-subset}
\norm{\E_{A_{1:R},B_{1:R}}\dyad{\Psi_{A_{1:R},B_{1:R}}}^{\otimes t} - \E_{\phi\sim\mathrm{Haar}(\calH_A\otimes \calH_B)}\dyad{\phi}^{\otimes t}}_1\leq O\left(\frac{t^3}{M}\right) + O\left(\frac{t^2}{D}\right)+O\left(\frac{t}{\sqrt{R}}\right).
    \end{equation}
    As a corollary, introducing random global phases makes the associated subset-phase ensemble close to Haar irrespective of the value of $R$:
    \begin{equation}
        \label{eq:closeness-Haar-schmidt-subset-phase}
\norm{\E_{A_{1:R},B_{1:R},f}\dyad{\Psi_{A_{1:R},B_{1:R},f}}^{\otimes t} - \E_{\phi\sim\mathrm{Haar}(\calH_A\otimes \calH_B)}\dyad{\phi}^{\otimes t}}_1\leq O\left(\frac{t^3}{M}\right) + O\left(\frac{t^2}{D}\right).
    \end{equation}
\end{lemma}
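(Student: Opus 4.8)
The plan is to recognize the Schmidt-patch subset state of Eq.~\eqref{eq:patchedsubsetstate} as an ordinary subset state living on the full space $\calH_A\otimes\calH_B\cong\C^N$, and then to invoke the generic machinery of \Cref{lemma:subsetstatistics,lemma:subsetstatistics2}. Concretely, identify $[N]$ with the grid $[N_A]\times[N_B]$ and observe that $\ket{\Psi_{A_{1:R},B_{1:R}}}$ is the uniform subset state supported on the disjoint union of rectangles $S=\bigcup_{j=1}^R (A_j\times B_j)$, of fixed size $\abs{S}=M_AM_BR=M^2R$. Sampling the disjoint uniform subsets $\{A_j\}$ on the $A$-side and $\{B_j\}$ on the $B$-side induces a distribution $\calD$ over such ``grid-union'' subsets. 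Thus the whole problem reduces to checking that $\calD$ satisfies the hypotheses of the generic lemmas; closeness to Haar then follows directly, with the subset-phase statement \eqref{eq:closeness-Haar-schmidt-subset-phase} coming from \Cref{lemma:subsetstatistics} (which needs only the first-moment inclusion test) and the subset statement \eqref{eq:closeness-Haar-schmidt-subset} from \Cref{lemma:subsetstatistics2} (which additionally needs the pairwise bound).

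The first task is to verify condition (a), the inclusion test for a generic $t$-subset $T=\{(x_i,y_i)\}_{i=1}^t$ of the grid. A point $(x,y)$ lies in $S$ precisely when its coordinates fall into a common patch, so I would expand $\Pr_\calD[T\subset S]$ over patch assignments: points sharing an $x$- or $y$-coordinate are forced into the same patch, distinct-coordinate (``generic'') points may be assigned freely, and disjointness of the patches makes the assignments mutually exclusive. To leading order each point is included independently with probability $\abs{S}/N=M^2R/N$, so the sum over the $R^t$ assignments reproduces $(\abs{S}/N)^t\approx\binom{\abs{S}}{t}/\binom{N}{t}$, which is exactly the ideal uniform value. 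The corrections come from replacing the naive $(M/N_A)^{a}(M/N_B)^{a}$ factors by the exact hypergeometric inclusion probabilities (contributing $O(t^2/M)$ from within-patch collisions) and from the finite subsystem dimensions (contributing $O(t^2/D)$ with $D=\min\{N_A,N_B\}$), together with the negligible non-generic configurations. This yields the relative error $g(n)=O(t^2/M)+O(t^2/D)$ required by the lemmas, and after aggregating the pointwise error into the total-variation sum of \Cref{lemma:subsetstatistics} one obtains the $O(t^3/M)+O(t^2/D)$ bound of \eqref{eq:closeness-Haar-schmidt-subset-phase}.

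For the harder subset case I would then verify condition (b): for $T,T'$ at set distance $d=d(T,T')$, bound $\Pr_\calD[T\cup T'\subset S]$ by the target times $\gamma(n)^{d}$. The mechanism is that forcing extra coincident points into $S$ beyond the independent baseline requires them to clump into common patches, and each such clumping event is suppressed by the number of available patches; a careful accounting gives $\gamma(n)=O(1/\sqrt{R})$. Feeding $\gamma=O(1/\sqrt{R})$ into the conclusion $O(g(n))+O(t\gamma(n))$ of \Cref{lemma:subsetstatistics2} produces the extra $O(t/\sqrt{R})$ term in \eqref{eq:closeness-Haar-schmidt-subset}, which is absent for subset-phase states precisely because the random phases kill the off-diagonal (clumping) contributions so that only condition (a) is needed. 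Here the density hypothesis $M^2R^{3/2}/N<1$, equivalently $\abs{S}/N<1/\sqrt{R}$, is exactly what keeps all these products of inclusion probabilities below unity so that the geometric bounds converge.

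I expect the verification of condition (b) to be the main obstacle: tracking the correlations induced by patch disjointness and by points that share an $x$- or $y$-coordinate, and isolating the clean power $\gamma^{d}$ with the correct $\sqrt{R}$-scaling, is considerably more delicate than the first-moment computation, and is where the precise form of the density constraint must be used.
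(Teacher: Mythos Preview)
Your proposal is correct and follows essentially the same approach as the paper: reduce the Schmidt-patch state to an ordinary subset state on the union-of-rectangles $S=\bigcup_j A_j\times B_j$, then verify the inclusion conditions of \Cref{lemma:subsetstatistics,lemma:subsetstatistics2}. For the obstacle you flag in condition~(b), the paper's device is to view each element of $T\cup T'$ as an edge in a bipartite graph on $[N_A]\times[N_B]$ and track its \emph{connected components}: the inclusion ratio factorizes as $(M/\sqrt{N})^{u}R^{\ell}$ where $u$ counts unique half-strings and $\ell$ counts components, and a combinatorial bound relating $u,\ell$ to the distance $d(T,T')$ (via a case split on whether components contain few or many edges) is exactly where the hypothesis $M^2R^{3/2}/N<1$ enters to yield $\gamma=R^{-1/2}$.
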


We prove Lemma \ref{lem:HPSShaar} using a series of non-trivial steps. Note that this reduces to a counting problem, as outlined in Lemma \ref{lemma:subsetstatistics} (for Schmidt-patch subset phase states) and Lemma \ref{lemma:subsetstatistics2} (for Schmidt-patch subset states), respectively.  Let us first introduce some definitions which capture the distribution of the subsets:

\begin{definition}[Doubly-unique $t$-subsets]
Assume $n$ is even. The {\bf\em doubly-unique $t$-subsets} of $\B^n$, denoted by $\Sigma_{2,t} \subset \binom{\B^{n}}{t}$, are defined as the $t$-subsets for which both the first halves of the strings are unique, as well as the second halves of the strings:
\begin{equation}
    \Sigma_{2,t} \equiv 
    \left\{
    \begin{aligned}
        &\{ x_i.y_i \}_{i=1}^t \in 
        \binom{\B^{n/2}\times \B^{n/2}}{t} \; : \;\;\forall i \neq j,\;
        & x_i \neq x_j,\; y_i \neq y_j
    \end{aligned}
    \right\}
\end{equation}
\end{definition}

\begin{remark}
    A standard birthday asymptotic tells us that almost all $t$-subsets are doubly-unique:
    \begin{equation}\label{eq:sigma2dim}
        \abs{\Sigma_{2,t}} = \binom{N}{t}\left(1 - O\left(\frac{t^2}{\sqrt{N}}\right)\right)\,.
    \end{equation}
\end{remark}

\begin{figure}[h!]
    \centering
    \begin{tikzpicture}[scale=0.7]

  \fill[rounded corners=2, blue, opacity=0.2] (-0.3, 0.3) rectangle (1.3, -0.3) {};
  \fill[rounded corners=2, fill=magenta, opacity=0.2] (-0.3, 0.7) rectangle (2.3, 1.3) {};
  \fill[rounded corners=2, fill=magenta, opacity=0.2] (4-0.3, 0.7) rectangle (5.3, 1.3) {};
  \fill[rounded corners=2, blue, opacity=0.2] (6-0.3, -0.3) rectangle (6.3, 0.3) {};
  % Draw blue edge
  \draw[black, line width=1pt, bend left=30] (0,0) -- (1,1);
  \draw[black, line width=1pt, bend left=30] (0,0) -- (0,1);
  \draw[black, line width=1pt, bend left=30] (1,0) -- (1,1);
  \draw[black, line width=1pt, bend left=30] (1,0) -- (2,1);
  \draw[black, line width=1pt, bend left=30] (6,0) -- (4,1);
  \draw[black, line width=1pt, bend left=30] (6,0) -- (5,1);
    % Draw vertices
  \foreach \x in {1,...,7}
    \foreach \y in {0,1} {
      \fill (\x-1,\y) circle (0.12);
      \node[black] at (-0.3+\x-1, -0.3+1.6*\y) {\small\bf \x};
    }
  \node[blue] at (0.5,-1) {$X_1$};
  \node[magenta] at (1,2) {$Y_1$};
  \node[blue] at (6,-1) {$X_2$};
  \node[magenta] at (4.5,2) {$Y_2$};

\end{tikzpicture}
    \caption{Example of the notions introduced in Definition \ref{def:bipartitecomponents}. A subset of $[\sqrt{N}]\times [\sqrt{N}]$ can be naturally viewed as a bipartite graph; in the above example, the subset is $T=\{(1,1), (1,2), (2,2), (2,3), (7,5), (7,6)\}$. According to Definition \ref{def:bipartitecomponents}, the bipartite components are $X_1=\{1,2\}, Y_1=\{1,2,3\}$, and $X_2=\{7\},Y_2=\{5,6\}$. The number of unique half-strings would be $u_T=8$.}
    \label{fig:example1}
\end{figure}
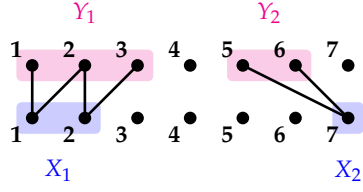

\begin{definition}[Bipartite components of subset]\label{def:bipartitecomponents}
    Consider an arbitrary $p$-subset of the Boolean cube $P=\{x_1.y_1,\dots,x_p.y_p\}\in\binom{\B^{n/2}\times \B^{n/2}}{p}$. We can interpret the elements $x_j.y_j$ as edges in a bipartite graph on $\B^{n/2}\times \B^{n/2}$. Consider the non-trivial (i.e. ignoring components which include a single vertex) connected components of this graph. If subset $P$ has $\ell$ such connected components, then each component is supported on a specific subset of the form $X_j\times Y_j \subset \B^{n/2}\times \B^{n/2}$. Let us denote the collection $(X_1,Y_1),\dots,(X_\ell, Y_\ell)$ the {\bf\em bipartite components} of subset $P$, which is uniquely defined for a specific $P$. Let us also denote:
    \begin{equation}
        u_P \equiv \abs{X_1}+\dots+\abs{X_\ell}+\abs{Y_1}+\dots+\abs{Y_\ell}
    \end{equation}
    as the number of {\bf\em total unique half-strings} in subset $P$.\\

    Formally: $(X_1,Y_1),\dots,(X_\ell, Y_\ell)$ are bipartite components of subset $P$ if $X_1,\dots,X_\ell$ are disjoint (not necessarily equal-sized) subset of $\B^{n/2}$ (and the same goes for $Y_1,\dots,Y_\ell$), and furthermore:
    \begin{equation}
\begin{aligned}
    & \forall\; j \in [\ell]: \\
    & \quad \forall\; x \in X_j,\quad \exists\; y \in Y_j\ \text{such that}\ x.y \in P,
    & \quad \nexists\; y \in \B^{n/2} \setminus Y_j\ \text{such that}\ x.y \in P, \\
    & \quad \forall\; y \in Y_j,\quad \exists\; x \in X_j\ \text{such that}\ x.y \in P,
    & \quad \nexists\; x \in \B^{n/2} \setminus X_j\ \text{such that}\ x.y \in P.
\end{aligned}
\end{equation}
See Figure \ref{fig:example1} for an illustration.
\end{definition}

\begin{definition}[Family of Schmidt-patch subset states]
    Let $\calF_{M,R,N} \in \binom{\B^{n/2}\times \B^{n/2}}{M^2R}$ be the family of $M^2R$-subsets of $[N]$ which patch $R$ pairs of disjoint $M$-subsets on the two half-cubes, as in construction \eqref{eq:patchedsubsetstate}.
\end{definition}

\begin{fact}\label{fact:counting1}
    Consider an arbitrary $p$-subset of the Boolean cube $P=\binom{\B^{n/2}\times \B^{n/2}}{p}$ with $t$ bipartite components and number of total unique half-strings $u_P$. Assuming $p < M$, we have that:
   \begin{equation}
   \label{eq:counting-ratio}
    \frac{\#\left\{ S \in \calF_{M,R,N} \,:\, P \subset S \right\}}{ \#\left\{ S \in \calF_{M,R,N} \right\} } 
    = \left( \frac{M}{\sqrt{N}} \right)^{u_P} R^t\times \left( 1 + O\left( \frac{p^3}{M} \right) + O\left( \frac{p^2}{\sqrt{N}} \right) \right).
\end{equation}
\end{fact}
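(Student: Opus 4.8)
The plan is to read the ratio as the probability $\Pr_{S}[P\subseteq S]$ that a uniformly sampled patched subset contains $P$, and to evaluate it exactly by conditioning on how the bipartite components of $P$ distribute among the $R$ patching blocks $A_j\times B_j$. The starting observation is that in $S=\bigcup_{j=1}^R A_j\times B_j$ the first‑half sets $A_1,\dots,A_R$ are mutually disjoint; hence every connected bipartite component $(X_i,Y_i)$ of $P$ (in the sense of Definition~\ref{def:bipartitecomponents}) is forced entirely into a single block, and once $P\subseteq S$ the block index is uniquely determined. I would therefore introduce the induced assignment $\phi:[t]\to[R]$, where $t$ is the number of bipartite components, and write $\Pr[P\subseteq S]=\sum_{\phi\in[R]^t}\Pr[\phi\text{ realized}]$ as a disjoint sum. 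With $\hat X_j^\phi=\bigcup_{i:\phi(i)=j}X_i$ and $\hat Y_j^\phi=\bigcup_{i:\phi(i)=j}Y_i$, the event ``$\phi$ realized'' is exactly $\{\hat X_j^\phi\subseteq A_j,\ \hat Y_j^\phi\subseteq B_j\ \forall j\}$, which factorizes across the independently sampled $A$‑ and $B$‑sides.

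Next I would compute the single‑side containment probability exactly. Counting ordered tuples of disjoint $M$‑subsets of a half‑cube of size $D\equiv\sqrt N$, the unconstrained total is $D!/[(M!)^R(D-RM)!]$, while prescribing $\hat X_j^\phi\subseteq A_j$ fixes $u^A=\sum_i|X_i|$ elements and leaves a multinomial count for the free ones. The factorials cancel to give, with the falling‑factorial notation $(x)_k:=x(x-1)\cdots(x-k+1)$,
\[
\Pr[\hat X^\phi\subseteq A]=\frac{\prod_{j=1}^R (M)_{|\hat X_j^\phi|}}{(D)_{u^A}},
\]
and symmetrically for the $B$‑side with $u^B=\sum_i|Y_i|$. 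Since $u_P=u^A+u^B$ and none of $u^A$, $u^B$, or the denominators depend on $\phi$,
\[
\Pr[P\subseteq S]=\frac{1}{(D)_{u^A}(D)_{u^B}}\sum_{\phi\in[R]^t}\;\prod_{j=1}^R (M)_{|\hat X_j^\phi|}\,(M)_{|\hat Y_j^\phi|}.
\]

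The leading‑order extraction is the crux, and it rests on the conservation identity $\sum_j|\hat X_j^\phi|=u^A$ for every $\phi$ (the $X_i$ are disjoint). If the falling factorials were pure monomials, the summand would equal $M^{u_P}$ independently of $\phi$, and summing over the $R^t$ assignments would give exactly $R^tM^{u_P}$. The corrections come solely from $(M)_k=M^k(1-O(k^2/M))$ and $(D)_u=D^u(1-O(u^2/D))$: averaging over uniform $\phi$ I would bound the $M$‑side relative error by $\E_\phi\big[\tfrac{1}{2M}\sum_j(|\hat X_j^\phi|^2+|\hat Y_j^\phi|^2)\big]$, which using $\E_\phi\sum_j|\hat X_j^\phi|^2=\sum_i|X_i|^2+R^{-1}\sum_{i\neq i'}|X_i||X_{i'}|\le 2(u^A)^2\le 2p^2$ is $O(p^2/M)$, uniformly over all assignments; the denominator contributes $(D)_{u^A}(D)_{u^B}=D^{u_P}(1-O(p^2/D))=(\sqrt N)^{u_P}(1-O(p^2/\sqrt N))$. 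Combining yields $\Pr[P\subseteq S]=R^t(M/\sqrt N)^{u_P}\big(1+O(p^2/M)+O(p^2/\sqrt N)\big)$, and since $p\ge 1$ the $O(p^2/M)$ term is subsumed by the claimed $O(p^3/M)$. I would note at the outset that feasibility of $\calF_{M,R,N}$ requires $RM\le\sqrt N$ and that the hypothesis $p<M$ gives $|\hat X_j^\phi|\le u^A<M$, so no assignment is infeasible and the falling factorials never vanish.

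The main obstacle is controlling the \emph{collision} assignments—those $\phi$ placing several components in one block, for which some $\hat X_j^\phi$ is a union of several $X_i$ and the falling‑factorial corrections no longer factor component‑by‑component. The conservation identity is precisely what keeps the leading term $\phi$‑independent, and the second‑order control then requires only the expectation bound on $\sum_j|\hat X_j^\phi|^2$ above to certify that the relative error stays $O(p^2/M)$ rather than degrading across the $R^t$ terms. A secondary, purely bookkeeping point is the identification of the counting ratio with $\Pr[P\subseteq S]$ under the natural ensemble in which $A_{1:R},B_{1:R}$ are sampled as independent uniform tuples of disjoint $M$‑subsets—the distribution actually used in Lemma~\ref{lem:HPSShaar} and produced by the preparation circuit—which is the form in which all of the above counts are most transparent.
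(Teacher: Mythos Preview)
Your proof is correct and follows the same conceptual route as the paper: both exploit that each bipartite component of $P$ must land in a single block $A_j\times B_j$, and then sum over allocations of the $t$ components to the $R$ blocks. The bookkeeping differs slightly. The paper groups allocations by the induced integer partition $\mu\vdash t$, carries exact factorials through, and at the end invokes the Stirling identity $\sum_{k}\stirlingII{t}{k}(R)_k=R^t$ to produce the factor $R^t$. You instead sum directly over all $R^t$ functions $\phi:[t]\to[R]$ and obtain $R^t$ immediately from the conservation identity $\sum_j|\hat X_j^\phi|=u^A$, which makes the leading monomial $M^{u_P}$ independent of $\phi$. Your version is a bit more streamlined and in fact yields the slightly sharper relative error $O(p^2/M)$ rather than the paper's $O(p^3/M)$, since $\sum_j|\hat X_j^\phi|^2\le(u^A)^2\le p^2$ holds uniformly over $\phi$; the paper's stated bound simply absorbs this into the weaker $O(p^3/M)$. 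Your remark that the counting ratio equals $\Pr[P\subseteq S]$ under the ordered ensemble is fine: since the $A_j$'s (and $B_j$'s) are disjoint and hence distinct, each unordered $S\in\calF_{M,R,N}$ corresponds to exactly $R!$ ordered tuples, so the ratio is unchanged.
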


\begin{proof}
    The $S$ subsets in the family are composed as:
    \begin{equation}
        S = A_1.B_1 \sqcup \dots \sqcup A_R.B_R\,,
    \end{equation}
    where $P.Q$ denotes all concatenations in the cartesian product, i.e. $P.Q = \{p.q\}_{p\in P,q\in Q}$\,. Here, $A_{1:R}$ are disjoint $M$-subsets of $[\sqrt{N}]$, and the same goes for $B_{1:R}$. We first count the number of ways $R$ pairs of boxes of size $M$ can be filled. The number of ways to fill the $A_{1:R}$ boxes with $M$ elements in each box is simply
    $$
    \binom{\sqrt{N}}{M} \binom{\sqrt{N}-M}{M}\cdots \binom{\sqrt{N}-RM}{M}=\dfrac{\sqrt{N}!}{M!^R(\sqrt{N}-MR)!}.
    $$
    The ordering of the $A_i$ boxes are not important and so the accounting for this fact we have that the number of (unordered) ways to fill the $A_{1:R}$ boxes is 
    $$
    \dfrac{\sqrt{N}!}{M!^R(\sqrt{N}-MR)!}\times \dfrac{1}{R!}\,.
    $$
    By the same argument, we can fill up the $B_{1:R}$ boxes with $M$ elements. So overall, the $A_{1:R}$ and $B_{1:R}$ boxes can be filled up in 
     $$
    \left(\dfrac{\sqrt{N}!}{M!^R(\sqrt{N}-MR)!}\times \dfrac{1}{R!}\right)^2
    $$
    ways.
    Finally, note that in order to count the number of $S$-subsets, we need to pair up the $A$ and $B$ boxes. This matching can be done in $R!$ ways. Thus the total number of $S$ subsets is:
    \begin{align}
    \label{eq:denominator-ratio}
        \# \{S \in \calF_{M,R,N}\} = \frac{1}{R!}\left(\frac{\sqrt{N} !}{M!^R(\sqrt{N}-MR)!}\right)^2\,.
    \end{align}

    Having obtained the denominator of the ratio in the LHS of Eq.~\eqref{eq:counting-ratio}, we now focus on the numerator.

    Let the bipartite set components making up the set $P$ be $(X_1,Y_1),\dots,(X_t,Y_t)$. In order to calculate the denominator, i.e.\ $\#\left\{ S \in \calF_{M,R,N} \right\}$, we need to enumerate the  ways in which the elements of $P$ can be allocated into the $A_j.B_j$ blocks of $S$. For this, we see that each bipartite component pair must belong to the same $A_j.B_j$ block, i.e.,\ $X_k \in A_j$ iff $Y_k \in B_j$. Therefore, we can restrict the problem to enumerating the number of ways in which the $t$ component pairs are allocated across the blocks. Let $\mu=1^{\mu_1}2^{\mu_2}\dots t^{\mu_t}$ be an integer partition of $t$. Denote by $\abs{\mu}=\mu_1+\dots+\mu_t$ the number of parts in $\mu$. For a specific integer partition $\mu$, consider all set partitions of $\{(X_1,Y_1),\dots,(X_t,Y_t)\}$ that follow the pattern of $\mu$, and consider allocations of these parts into the $A_j.B_j$ blocks that make up $S$. In other words, we would like to count all subsets $S$ such that $\mu_1$ blocks contain a single $(X_k,Y_k)$ component, $\mu_2$ blocks contain two components, etc. 

    In order to enumerate this quantity, let us first look at the number of ways to partition the set of $t$ components according to the partition type $\mu$. For this, notice the number of ways by which we can choose $\mu_1$ (the number of boxes holding a single component) out of the $t$ components is $\binom{t}{\mu_1}$. Then, we can choose $\mu_2$ (boxes holding two components) out of $\binom{t-\mu_1}{\mu_2}$ ways. Continuing in this way, we obtain that the total number of ways in which $t$ components can be distributed according to the partition $\mu$ is simply the multinomial coefficient
    $$
    \binom{t}{\mu_1,\mu_2,\cdots, \mu_t}=\equiv\dfrac{t!}{\mu_1!\mu_2!\dots \mu_t!}
    $$
    Furthermore, the $j$ components in each of the $\mu_j$ boxes need not be ordered. We need to account for this $1/j!^{\mu_j}$ ways. So overall, we can allocated $t$ components according to $\mu$ in
    $$
    \stirlingII{t}{\mu}=\dfrac{t!}{1!^{\mu_1}\dots t!^{\mu_t}\mu_1!\dots \mu_t!}.
    $$
 Here, we define $\stirlingII{t}{\mu}$ is the number of ways of partitioning the set of $t$ components according to partition type $\mu$, defined analogously to the Stirling number of the second kind ($\stirlingII{t}{k}$ is the Stirling number of the second kind, denoting the number of set partitions of $[t]$ into $k$ non-empty parts).

Next, for a particular partition $\mu$, we count the number of ways the remaining elements of the $A_{1:R}$ and $B_{1:R}$ boxes can be filled so that we make sure each box has exactly $M$ elements. Observe that we have $A_i\cdot B_i$ boxes that are already \textit{occupied} by the bipartite components. Given $\mu$, there are $|\mu|$ such \textit{occupied} pairs of boxes in all. The rest of the $R-|\mu|$ boxes are \textit{unoccupied}: they do not contain any of these components. Both these categories of boxes (\textit{occupied} and \textit{unoccupied}) need to be filled with $M$ elements each. This leads us to two cases.

Let us first consider the $A_{1:R}$ boxes that have been \textit{occupied} by the bipartite components. Each of the $A_i$ boxes need to be filled until they have $M$ elements each.  The number of ways the \textit{occupied} $A_{1:R}$ boxes can be filled up is given by
\begin{align*}
\binom{\sqrt{N}-\sum_{j=1}^{t}|X_j|}{M-\sum_{X_i\in \text{part }1}|X_i|,~M-\sum_{X_i\in \text{part }2}|X_i|,\cdots,~M-\sum_{X_i\in \text{part }|\mu|}|X_i|}
=
\dfrac{\left(\sqrt{N}-\sum_{j=1}^{t}|X_j|\right)!}{\prod_{j=1}^{|\mu|}\left(\sqrt{N}-\sum_{j=1}^{t}|X_j|\right)!\times \left(N-M|\mu|\right)!}.
\end{align*}
Similarly, in the same number of ways, we can fill up the \textit{occupied} $B_{1:R}$ boxes. So, the number of ways to fill up the \textit{occupied} $A_{1:R}B_{1:R}$ boxes is given by
\begin{equation}
\label{eq:occupied-boxes}
    \dfrac{\left(\sqrt{N}-\sum_{j=1}^{t}|X_j|\right)!\times \left(\sqrt{N}-\sum_{j=1}^{t}|Y_j|\right)!}{\prod_{j=1}^{|\mu|}\left(\sqrt{N}-\sum_{j=1}^{t}|X_j|\right)!\times \prod_{j=1}^{|\mu|}\left(\sqrt{N}-\sum_{j=1}^{t}|Y_j|\right)!\times \left(N-M|\mu|\right)!^2}.
\end{equation}

Next, we deal with the remaining $R-|\mu|$  boxes that did not contain any bipartite components, which we referred previously to as \textit{unoccupied} boxes. Since these need to be filled with exactly $M$ elements, the enumeration problem is similar to the arguments that went into the derivation of Eq.~\eqref{eq:denominator-ratio}. Let us deal with allocating strings to the \textit{unoccupied} $A$ boxes first. This can be done in
$$
\dfrac{\left(\sqrt{N}-M|\mu|\right)!}{M!^{R-|\mu|}\times \left(\sqrt{N}-MR\right)!\times (R-|\mu|)!},
$$
ways. The $B$ boxes can also be filled up in the same number of ways. Overall, the number of ways to allocate elements to these remaining $R-|\mu|$ boxes (also taking into account that these boxes can be matched up in $(R-|\mu|)!$ ways) is given by 
\begin{equation}
\label{eq:unoccupied-boxes}
    \dfrac{\left(\sqrt{N}-M|\mu|\right)!^2}{M!^{2(R-|\mu|)}\times (\sqrt{N}-MR)!^{2}\times (R-|\mu|)!}
\end{equation}
The product of Eq.~\eqref{eq:occupied-boxes} and Eq.~\eqref{eq:unoccupied-boxes} gives us the required number of ways $t$ components can be allocated to the pairs of boxes for a specific partition $\mu$. Then, overall, we have that:
%  \begin{equation}
% \begin{aligned}
%     \#\left\{ S \in \calF_{M,R,N}\,:\, P \subset S \right\} &= \sum_{\mu} \stirlingII{t}{\mu} \\
%     &\quad \times \left( \sqrt{N} - |X_1| - \dots - |X_t| \right)! \\
%     &\quad \times \left( \sqrt{N} - |Y_1| - \dots - |Y_t| \right)! \\
%     &\quad \times \dfrac{1}{ (R - |\mu|)! } \\
%     &\quad \times \dfrac{1}{ M!^{2(R - |\mu|)} } \\
%     &\quad \times \dfrac{1}{ \left( \sqrt{N} - MR \right)!^2 } \\
%     &\quad \times \prod_{j=1}^{|\mu|} \dfrac{1}{ \left( M - \sum\limits_{X_i \in \text{part } j} |X_i| \right)! } \\
%     &\quad \times \prod_{j=1}^{|\mu|} \dfrac{1}{ \left( M - \sum\limits_{Y_i \in \text{part } j} |Y_i| \right)! }.
% \end{aligned}
% \end{equation}
\begin{align}
\label{eq:numerator-ratio}
    &\#\left\{ S \in \calF_{M,R,N}\,:\, P \subset S \right\} \nonumber \\
    &=\mathlarger{\mathlarger{\sum}}_{\mu} \stirlingII{t}{\mu}\times\dfrac{\left( \sqrt{N} - \sum_{i=1}^t|X_i|\right)!\times\left( \sqrt{N} - \sum_{i=1}^t|Y_i| \right)!} { (R - |\mu|)!\times M!^{2(R - |\mu|)}\times \left(\sqrt{N} - MR \right)!^2} \times \prod_{j=1}^{|\mu|}\dfrac{1}{( M - \sum\limits_{X_i \in \text{part } j}|X_i|)!}\times \prod_{j=1}^{|\mu|}\dfrac{1}{( M - \sum\limits_{Y_i \in \text{part } j}|Y_i|)!}.
\end{align}

    The above seems complicated but in the birthday regime it simplifies considerably. In particular, notice that the number of unique half-strings decouples from the way they are allocated across components inside $P$. For instance, while taking the ratio between Eq.~\eqref{eq:numerator-ratio} and Eq.~\eqref{eq:denominator-ratio}, we can use birthday approximations such as
$$
\dfrac{(\sqrt{N}-\sum_{j=1}^{t}|X_j|)!}{\sqrt{N}!}=\dfrac{1}{\left(\sqrt{N}\right)^{\sum_{j=1}^{t}|X_j|}}\left(1+O\left(\dfrac{\left(\sum_{j=1}^{t}|X_j|\right)^2}{\sqrt{N}}\right)\right)
$$
Also,
\begin{align*}
\prod_{j=1}^{|\mu|}\left(\dfrac{M!}{ (M-\sum_{X_i\in \text{part } j}|X_i|)!}\right)&=\prod_{j=1}^{|\mu|} M^{\sum_{X_i\in \text{part }j}|X_i|}\left(1+O\left(\dfrac{\left(\sum_{X_i\in \text{part }j}|X_i|\right)^2}{M}\right)\right)\\
            &=M^{\sum_{j=1}^{t}|X_j|}\left(1+O\left(\dfrac{\sum_{j=1}^{|\mu|} \left(\sum_{X_i\in \text{part }j}|X_i|\right)^2}{M}\right)\right)
\end{align*}

    When the smoke clears, we have that the desired ratio:
     \begin{align}
        \frac{\#\{S \in \calF_{M,R,N}\,:\,P\subset S\}}{\# \{S \in \calF_{M,R,N}\}} &= \left(\frac{M}{\sqrt{N}}\right)^{\abs{X_1}+\dots+\abs{X_t}+\abs{Y_1}+\dots+\abs{Y_t}}\left(1 + O\left(\frac{p^3}{M}\right)+O\left(\frac{p^2}{\sqrt{N}}\right)\right)\sum_{k=0}^t\stirlingII{t}{k}\frac{R!}{(R-k)!}\\
        &= \left(\frac{M}{\sqrt{N}}\right)^{u_P}R^t\left(1 + O\left(\frac{p^3}{M}\right)+O\left(\frac{p^2}{\sqrt{N}}\right)\right)\,.
    \end{align}
    where we used the standard combinatorial identity: $\sum_{k=0}^n\stirlingII{n}{k} x(x-1)\dots(x-k+1) = x^n$.
\end{proof}

This proof can be easily generalized to the asymmetric case where $A_{1:R}$ are disjoint $M$-subsets of $[2^{\ell}]$, while $B_{1:R}$ are disjoint $M$-subsets of $[2^{n-\ell}]$. This allows us to incorporate asymmetric Schmidt-patch states where subsystem $A$ is of $\ell$ qubits while $B$ is of $n-\ell$ qubits. So, $N_A=2^{\ell}$ and $N_B=2^{n-\ell}$. In this case consider (i)~$\calF_{M,R,N, \ell}\in \binom{\{0,1\}^{\ell}\times\{0,1\}^{n-\ell}}{M^2R}$, and (ii)~for $p<M$, $p$-subset of the Boolean cube $P=\binom{\{0,1\}^{\ell}\times\{0,1\}^{n-\ell}}{p}$. Then, it is easy to see that
   \begin{align}
        \# \{S \in \calF_{M,R,N, \ell}\} = \frac{1}{R!}\left(\frac{(2^{\ell}) !}{M!^R(2^{\ell}-MR)!}\right)\cdot \left(\frac{(2^{n-\ell}) !}{M!^R(2^{n-\ell}-MR)!}\right).
    \end{align}
On the other hand,
\begin{align}
    &\#\left\{ S \in \calF_{M,R,N}\,:\, P \subset S \right\} \nonumber \\
    &=\mathlarger{\mathlarger{\sum}}_{\mu} \stirlingII{t}{\mu}\times\dfrac{\left( N_A - \sum_{i=1}^t|X_i|\right)!\times\left( N_B - \sum_{i=1}^t|Y_i| \right)!} { (R - |\mu|)!\times M!^{2(R - |\mu|)}\times \left(N_A - MR \right)!\times\left(N_B - MR \right)!} \times \prod_{j=1}^{|\mu|}\dfrac{1}{( M - \sum\limits_{X_i \in \text{part } j}|X_i|)!}\times \prod_{j=1}^{|\mu|}\dfrac{1}{( M - \sum\limits_{Y_i \in \text{part } j}|Y_i|)!}.
\end{align}
Proceeding as before, the ratio of these two expressions simplify owing to birthday approximations. In the end, we have the following:
\begin{align}
        \frac{\#\{S \in \calF_{M,R,N,\ell}\,:\,P\subset S\}}{\# \{S \in \calF_{M,R,N,\ell}\}}
        &= \left(\dfrac{M^{u_p}}{N_A^{\sum_{i}|X_i|}\times N_B^{\sum_{i}|Y_i|}}\right)R^t\left(1 + O\left(\frac{p^3}{M}\right)+O\left(\frac{p^2}{2^{\min\{\ell,n-\ell\}}}\right)\right)\\
        &\leq \left(\dfrac{M}{D}\right)^{u_p}R^t\left(1 + O\left(\frac{p^3}{M}\right)+O\left(\frac{p^2}{D}\right)\right),
    \end{align}
where $D=\min\{N_A, N_B\}$.

\begin{fact}\label{fact:components}
    Consider two arbitrary $t$-subsets $T,T' \in \Sigma_{2,t}$ at distance $d(T,T')=d \geq 1$, where $d(T,T') \equiv \abs{T\cup T'}-t$ denotes the Hamming distance between two $t$-subsets. Denote by $u_{T\cup T'}$ the number of total unique half-strings in $T\cup T'$. Let the number of bipartite components in $T\cup T'$ be $\ell=t-d+k$. Then we have that:
    \begin{enumerate}
        \item[(a)] $1\leq k\leq 2d$.
        \item[(b)] $2t + \lceil\frac{4k-2d}{3}\rceil \leq u_{T\cup T'}\leq 2t + k$\,.
    \end{enumerate}
\end{fact}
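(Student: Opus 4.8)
The plan is to translate the statement entirely into graph theory. Since $T,T'\in\Sigma_{2,t}$ are doubly-unique, each of them — viewed as a set of edges $x.y$ in the bipartite graph whose left vertices are the first halves $x\in\B^{n/2}$ and whose right vertices are the second halves $y\in\B^{n/2}$ — is a \emph{matching}: every left and every right vertex meets at most one of its edges. Hence $G:=T\cup T'$ is a union of two matchings, so every vertex has degree at most $2$ and $G$ is a disjoint union of simple paths and (necessarily even) cycles. In this language the bipartite components of $T\cup T'$ are exactly the connected components of $G$, $u_{T\cup T'}$ is its vertex count, and $\ell$ its component count; the bookkeeping $\abs{T\cap T'}=t-d$ and $\abs{T\cup T'}=t+d$ fixes the edge budget at $t+d$.

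The first reduction I would make is to peel off the shared edges. If $e=uv\in T\cap T'$, then $e$ is simultaneously the unique $T$-edge and the unique $T'$-edge at both $u$ and $v$, so $u,v$ have degree $1$ in $G$ and $e$ is an isolated-edge component touching nothing else. Thus the $t-d$ shared edges contribute exactly $t-d$ isolated components, and all remaining structure lives in the symmetric-difference graph $G_{\mathrm{sym}}:=T\triangle T'$, which carries exactly $2d$ edges (a matching $T\setminus T'$ of size $d$ and a matching $T'\setminus T$ of size $d$). This identifies $k=\ell-(t-d)$ as precisely the number of connected components of $G_{\mathrm{sym}}$. Part (a) is then immediate: $G_{\mathrm{sym}}$ is nonempty since $d\ge1$, so $k\ge1$, while each component uses at least one of the $2d$ edges, so $k\le 2d$.

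For part (b) I would count vertices through the cyclomatic relation $\#\{\text{cycles}\}=E-V+C$ applied to $G$. With $E=t+d$ and $C=\ell=t-d+k$ this rearranges to
\begin{equation*}
u_{T\cup T'}=2t+k-\mathrm{cyc},
\end{equation*}
where $\mathrm{cyc}\ge0$ counts the cycle components. This instantly yields the upper bound $u_{T\cup T'}\le 2t+k$, saturated exactly when $G$ is acyclic. For the lower bound I must cap $\mathrm{cyc}$. Since the shared edges are edgewise isolated, every cycle sits inside $G_{\mathrm{sym}}$; and because $G_{\mathrm{sym}}$ is simple and bipartite, each cycle has length at least $4$ while each path needs only one edge. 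Distributing the $2d$ edges over the $k$ components therefore forces $2d\ge 4\,\mathrm{cyc}+(k-\mathrm{cyc})=3\,\mathrm{cyc}+k$, i.e. $\mathrm{cyc}\le(2d-k)/3$. Substituting into the vertex identity and using integrality of $u_{T\cup T'}$ gives $u_{T\cup T'}\ge 2t+\lceil(4k-2d)/3\rceil$, as claimed.

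The routine part is this combinatorial bookkeeping; the crux — and the only place the precise constants enter — is the cycle bound in part (b). The factor $4$, the minimum cycle length in a simple bipartite graph, is what produces $3\,\mathrm{cyc}+k\le 2d$ and hence the $\tfrac{4k-2d}{3}$ form. I would therefore be careful to justify the two extremal facts that make the bound tight: that $G_{\mathrm{sym}}$ has no length-$2$ ``double edge'' (a red and a blue edge sharing both endpoints would be a common edge lying in $T\cap T'$, a contradiction, so $G_{\mathrm{sym}}$ is genuinely simple), and that each path component is charged at least one edge. Once these extremal cases are nailed down, everything else is linear algebra on the edge, vertex, and component counts.
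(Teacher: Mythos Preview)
Your proof is correct and rests on the same graph-theoretic setup as the paper: both view $T\cup T'$ as a union of two matchings, peel off the $t-d$ shared edges as isolated components, and work inside the symmetric-difference graph $G_{\mathrm{sym}}$ with its $2d$ edges and $k$ components, using that every cycle has length $\ge 4$.

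The presentations differ in how the vertex bound is extracted. The paper classifies the $k$ components of $G_{\mathrm{sym}}$ by edge count (patterns ``I'', ``V'', ``N'', ``$\bowtie$'', ``W'') and argues that at least $\lceil(4k-2d)/3\rceil$ of them have $\le 3$ edges and hence are forced to be paths contributing an extra vertex each. You instead invoke the cyclomatic identity $V=E+C-\mathrm{cyc}$ directly to get $u_{T\cup T'}=2t+k-\mathrm{cyc}$ and then bound $\mathrm{cyc}\le(2d-k)/3$ from the edge budget. These are dual counts (bounding the number of guaranteed paths from below versus the number of cycles from above) and yield the identical inequality; your route is tidier and avoids the case analysis, while the paper's pattern enumeration makes the extremal configurations more visually explicit. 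Your justification that $G_{\mathrm{sym}}$ is simple (no red--blue digon, since that would force the edge into $T\cap T'$) is exactly the point the paper leaves implicit.
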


\begin{proof}
    Let $T,T'\in \Sigma_{2,t}$ be two doubly unique $t$-subsets, such that $T=\{x_1.y_1, \dots, x_t.y_t\}$ and $T'=\{x_1'.y_1',\dots,x_t'.y_t'\}$. Denote disjoint sets $X_1,\dots,X_\l \subset \B^{n/2}$ and $Y_1,\dots,Y_\l\subset \B^{n/2}$ such that $(X_1,Y_1), \dots, (X_\l,Y_\l)$ form the bipartite components (cf. Definition \ref{def:bipartitecomponents}) of $T\cup T'$, i.e. the connected components of the corresponding bipartite graph with the elements of $T$ and $T'$ as the edges. Imagine coloring the $T$ and $T'$ vertices in two different colors:\\

\begin{figure}[H]
    \centering
    \begin{tikzpicture}[scale=0.7]

      % \fill[rounded corners=2, blue, opacity=0.2] (-0.3, 0.3) rectangle (1.3, -0.3) {};
      % \fill[rounded corners=2, fill=magenta, opacity=0.2] (-0.3, 0.7) rectangle (2.3, 1.3) {};
      % \fill[rounded corners=2, fill=magenta, opacity=0.2] (4-0.3, 0.7) rectangle (5.3, 1.3) {};
      % \fill[rounded corners=2, blue, opacity=0.2] (6-0.3, -0.3) rectangle (6.3, 0.3) {};
      % Draw blue edge
      \draw[blue, line width=1pt] (0,0) to[out=60,in=-60] (0,1);
      \draw[red, line width=1pt] (0,0) to[out=120,in=240] (0,1);
      \draw[red, line width=1pt] (1,0) to[out=120,in=240] (1,1);
      \draw[blue, line width=1pt] (1,0) to[out=60,in=-60] (1,1);
      
      \draw[blue, line width=1pt] (2,0) to (2,1);
      
      \draw[blue, line width=1pt] (3,0) to (3,1);
      \draw[red, line width=1pt] (3,0) to (4,1);
      
      \draw[blue, line width=1pt] (5,0) to (5,1);
      \draw[red, line width=1pt] (5,1) to (6,0);
      \draw[blue, line width=1pt] (6,0) to (6,1);
      
      \draw[blue, line width=1pt] (7,0) to (7,1);
      \draw[red, line width=1pt] (7,0) to (8,1);
      \draw[blue, line width=1pt] (8,0) to (8,1);
      \draw[red, line width=1pt] (8,0) to (7,1);

      \draw[blue, line width=1pt] (9,0) to (9,1);
      \draw[red, line width=1pt] (9,0) to (10,1);
      \draw[blue, line width=1pt] (10,0) to (10,1);
      \draw[red, line width=1pt] (10,0) to (11,1);
        % Draw vertices
      \foreach \x in {1,...,12}
        \foreach \y in {0,1} {
          \fill (\x-1,\y) circle (0.12);
          %\node[black] at (-0.3+\x-1, -0.3+1.6*\y) {\small\bf \x};
        }
      % \node[blue] at (0.5,-1) {$X_1$};
      % \node[magenta] at (1,2) {$Y_1$};
      % \node[blue] at (6,-1) {$X_2$};
      % \node[magenta] at (4.5,2) {$Y_2$};
      \draw [decorate,decoration={brace,amplitude=3pt,mirror,raise=4ex}]
  (-0.2,0) -- (1.2,0) node[midway,yshift=-3em, scale=0.75]{$t-d$ components};
      \draw [decorate,decoration={brace,amplitude=3pt,mirror,raise=4ex}]
  (2-0.2,0) -- (11.2,0) node[midway,yshift=-3em, scale=0.75]{$k$ components, $\textcolor{blue}{d}+\textcolor{red}{d}$ edges};
      \node[scale=0.75] at (2,1.5) {``I'' pattern};
      \node[scale=0.75] at (3.5,-.5) {``V'' pattern};
      \node[scale=0.75] at (5.5,1.5) {``N'' pattern};
      \node[scale=0.75] at (7.5,-.5) {``$\bowtie$'' pattern};
      \node[scale=0.75] at (10,1.5) {``W'' pattern};
    
    \end{tikzpicture}
\end{figure}

The fact that $T,T'$ are doubly-unique $t$-subsets imposes some constraints. First, if the Hamming distance between the two $t$-subsets is $d(T,T')=d$, then there are $t-d$ elements in common, i.e. there are $t-d$ pairs $x_j.y_j \in T$ which are also pairs $x_i'.y_i' \in T'$. In terms of our graph picture, this means there are $t-d$ components $X_j.Y_j$ of size $\abs{X_j}=\abs{Y_j}=1$, represented by double-edged two-vertices sub-graphs. The other $k$ components must be made of non-doubled edges, since they belong to pairs $x_j.y_j \in T$ which do not exist also in $T'$. However, we must account for the combinatorics of the possible ways in which these mismatched elements can still share half-strings. The fact that $T,T'$ are doubly-unique subsets means that we cannot have two edges of the same color (i.e. belonging to the same subset) sharing a vertex.\\

Notice that the remaining $k$ bipartite components are made of a number of edges equal to $d$ edges coming from $T$ (depicted in blue), plus another $d$ edges belonging to $T'$ (denoted in red). This immediately implies the first claim: $1 \leq k \leq 2d$. The lower bound is achieved if all edges form one component, and the upper bound is achieved when all of the edges are in a component of their own.\\

Next, because of the double-uniqueness of $T$ and $T'$, these $k$ components can be of the following types:
\begin{enumerate}
    \item[(a)] Components containing only one edge (an ``I'' pattern), two edges sharing a vertex (a ``V'' pattern), or three edges in an ``N'' pattern. This means the number of vertices in a components is equal to the number of edges plus one.
    \item[(b)] If made of four or more edges, the edges can either form a closed loop (an ``bowtie'' $\bowtie$ pattern), or an open line (a ``W'' pattern). This means that the number of vertices in a component is either equal to the number of edges, or is equal to the number of edges plus one.
\end{enumerate}

When allocating $d$ $T$-edges and $d$ $T'$-edges across $k$ components, we must make use of at least $\max\left(0, \lceil\frac{4k-2d}{3}\rceil\right)$ components of size 1, 2, or 3. For such components, we know that the number of vertices is forced to be equal to the number of edges. As noted above, for components with four or more edges, the number of vertices can be equal to the number of edges, or equal to the number of edges plus one. Therefore, to count the number of unique half-strings $u_{T\cup T'}$ we must add $2(t-d)$ (coming from the components which correspond to $T\cap T'$) and the number of vertices coming from the above counting argument. This leads to the stated bound.

\end{proof}

\begin{fact}
    Consider two arbitrary $t$-subsets $T,T' \in \Sigma_{2,t}$, and assume $2t\leq M$ and also $\frac{M^2R^{3/2}}{N}<1$. Then we have that:
 \begin{equation}
\begin{aligned}
    \frac{\#\left\{\, S \in \calF_{M,R,N} \; : \; T \cup T' \subset S \,\right\}}{ \#\left\{\, S \in \calF_{M,R,N} \,\right\} }
    &=
    \begin{cases}
        \left( \dfrac{M^2 R}{N} \right)^t \\
        \quad \times \left( 1 + O\left( \dfrac{t^3}{M} \right) + O\left( \dfrac{t^2}{\sqrt{N}} \right) \right), \\
        \hfill \text{if } d(T, T') = 0, \\[2ex]
        \leq \left( \dfrac{M^2 R}{N} \right)^t \\
        \quad \times \left( \dfrac{1}{R} \right)^{d/2} \\
        \quad \times \left( 1 + O\left( \dfrac{t^3}{M} \right) + O\left( \dfrac{t^2}{\sqrt{N}} \right) \right), \\
        \hfill \text{if } d(T, T') = d \geq 1.
    \end{cases}
\end{aligned}
\end{equation}
\end{fact}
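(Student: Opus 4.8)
The plan is to treat the union $P \equiv T\cup T'$ as a single subset and feed it directly into the counting estimate of Fact~\ref{fact:counting1}. Since $T,T'\in\Sigma_{2,t}$ with $d(T,T')=d$, the union has size $p=\abs{T\cup T'}=t+d\le 2t\le M$, so the hypothesis $p<M$ of Fact~\ref{fact:counting1} is satisfied. That fact then gives, with multiplicative error $\big(1+O(p^3/M)+O(p^2/\sqrt N)\big)=\big(1+O(t^3/M)+O(t^2/\sqrt N)\big)$ since $p=O(t)$,
\[
\frac{\#\{S\in\calF_{M,R,N}: P\subset S\}}{\#\{S\in\calF_{M,R,N}\}}
=\Big(\tfrac{M}{\sqrt N}\Big)^{u_{P}} R^{\,c_P}\big(1+O(t^3/M)+O(t^2/\sqrt N)\big),
\]
where $u_P=u_{T\cup T'}$ is the number of unique half-strings and $c_P$ is the number of bipartite components of $P$. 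The entire statement therefore reduces to controlling the two combinatorial exponents $u_P$ and $c_P$.

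For $d=0$ we have $T=T'$, so $P=T$ is itself a doubly-unique $t$-subset: each element forms its own bipartite component, giving $c_P=t$ and $u_P=2t$. Substituting yields exactly $(M/\sqrt N)^{2t}R^t=(M^2R/N)^t$ up to the stated error, which is the first case. For $d\ge 1$ I would invoke Fact~\ref{fact:components}, which says $c_P=t-d+k$ with $1\le k\le 2d$ and $u_P\ge 2t+\lceil(4k-2d)/3\rceil$ (and always $u_P\ge 2t$, since $T$ alone already contributes $2t$ unique half-strings). Writing $w\equiv u_P-2t\ge 0$ and comparing against the target prefactor $(M^2R/N)^t(1/R)^{d/2}=(M/\sqrt N)^{2t}R^{\,t-d/2}$, the claim becomes the single inequality
\[
\Big(\tfrac{M}{\sqrt N}\Big)^{w} R^{\,k-d/2}\le 1.
\]

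This is exactly where the hypothesis $M^2R^{3/2}/N<1$ is used: it is equivalent to $M/\sqrt N<R^{-3/4}$, so the left-hand side is bounded by $R^{\,k-d/2-3w/4}$. The lower bound $w\ge(4k-2d)/3$ from Fact~\ref{fact:components} makes the exponent $k-d/2-3w/4\le 0$, and since $R\ge 1$ the product is $\le 1$, as required. In the degenerate regime $4k-2d<0$ the trivial bound $w\ge 0$ already forces $k-d/2<0$, so the inequality holds a fortiori; the same remark covers the boundary $w=0$, which by Fact~\ref{fact:components} forces $k\le d/2$ and hence $R^{\,k-d/2}\le 1$ directly.

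The one genuinely delicate point—the step I would be most careful about—is that the factor $R^{\,k-d}$ gained from the reduced component count must be paid for by the penalty $(M/\sqrt N)^{w}$ incurred by the enlarged number of unique half-strings. The constant $3/4$ in $M/\sqrt N<R^{-3/4}$ is tuned precisely so that the worst case $w=(4k-2d)/3$ of Fact~\ref{fact:components} cancels the $R^{k}$ growth down to exactly the advertised $R^{-d/2}$ suppression; a weaker assumption on $M^2R^{\alpha}/N$ would not close the bound. Everything else is bookkeeping: confirming $p=O(t)$ so the error terms collapse to $O(t^3/M)+O(t^2/\sqrt N)$, and noting that the suppression factor $(1/R)^{d/2}$ for $d\ge 1$ is what ultimately drives the off-diagonal ($T\neq T'$) contributions to be negligible in the subsequent Haar-proximity argument.
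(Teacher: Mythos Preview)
Your proof is correct and follows essentially the same route as the paper: apply Fact~\ref{fact:counting1} to $P=T\cup T'$, handle $d=0$ trivially, and for $d\ge 1$ combine the component/half-string bounds of Fact~\ref{fact:components} with the hypothesis $M^2R^{3/2}/N<1$ via the same two-regime analysis ($k\le d/2$ versus $k\ge d/2$). Your packaging is slightly cleaner---rewriting the hypothesis as $M/\sqrt N<R^{-3/4}$ and reducing everything to the single exponent inequality $k-d/2-3w/4\le 0$---but the substance is identical to the paper's case split.
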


\begin{proof}
    Letting $(X_1,Y_1),\dots,(X_\l,Y_\l)$ be the bipartite components of $T\cup T'$, Fact \ref{fact:counting1} gives us that:
    \begin{equation}
        \frac{\#\left\{\, S \in \calF_{M,R,N} \; : \; T \cup T' \subset S \,\right\}}{ \#\left\{\, S \in \calF_{M,R,N} \,\right\} } = \left(\frac{M}{\sqrt{N}}\right)^{u_{T\cup T'}}R^\l\left(1 + O\left(\frac{t^3}{M}\right) + O\left(\frac{t^2}{\sqrt{N}}\right)\right).
    \end{equation}
    The case $d(T,T')=0$ is trivial since it implies $T=T'$, meaning the number of unique half-strings is $u_{T\cup T'}=2t$ and the number of bipartite components is $\l=t$.\\

    For the more generic case $d(T,T')\equiv d \geq 1$, let $k=\l+d-t$ where $\l$ once again denotes the number of bipartite components. Note that $1 \leq k \leq 2d$. Using Fact \ref{fact:components} we have that:
    \begin{align}
        1 \leq k \leq \frac{d}{2} &\quad \Rightarrow \quad u_{T\cup T'}\geq 2t \\
        \frac{d}{2}\leq k \leq 2d &\quad \Rightarrow \quad u_{T\cup T'}\geq 2t + \frac{4k-2d}{3}.
    \end{align}
    This implies that:
    \begin{equation}
    \begin{aligned}
        \left(\frac{M}{\sqrt{N}}\right)^{u_{T\cup T'}}R^\l &\leq\left\{
        \begin{array}{lcr}
            \left(\frac{M^2R}{N}\right)^{t}R^{-d+k} &\leq \left(\frac{M^2R}{N}\right)^{t}R^{-d/2} &\text{ if }1 \leq k \leq \frac{d}{2}\\
            \left(\frac{M^2R}{N}\right)^{t}\left(\frac{M^{2/3}R^{1/2}}{N^{1/3}}\right)^{2k}\left(\frac{M^{2/3}R}{N^{1/3}}\right)^{-d} &\leq \left(\frac{M^2R}{N}\right)^{t}R^{-d/2} &\;\text{ if }\frac{d}{2}\leq k \leq 2d \;\text{ and }\; \frac{M^2R^{3/2}}{N}<1.
        \end{array}\right.
    \end{aligned}
    \end{equation}
    Therefore, as long as $\frac{M^2R^{3/2}}{N}<1$, we have that in all cases we have the upper bound $\left(\frac{M}{\sqrt{N}}\right)^{u_{T\cup T'}}R^\l\leq \left(\frac{M^2R}{N}\right)^{t}R^{-d/2}$, which leads to the conclusion.
\end{proof}
So, from Lemma \ref{lemma:subsetstatistics2}, we obtain the following fact:
\begin{fact}
\label{fact:subset-state-cube}
    Assume $\frac{M^2R^{3/2}}{N}<1$. Then:
    \begin{equation}
        \norm{\E_{S\sim \calF_{M,R,N}}\dyad{S}^{\otimes t} - \E_{\phi\sim\mathrm{Haar}\left(\C^N\right)}\dyad{\phi}^{\otimes t}}_1 \leq O\left(\frac{t^3}{M}\right) + O\left(\frac{t^2}{\sqrt{N}}\right) + O\left(\frac{t}{\sqrt{R}}\right).
    \end{equation}
\end{fact}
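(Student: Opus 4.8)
The plan is to instantiate Lemma~\ref{lemma:subsetstatistics2} with the distribution $\calD$ taken to be uniform over the family $\calF_{M,R,N}$. Since every member of this family has the same cardinality $\abs{S}=M^2R$, each expectation $\E_{S\sim\calD}\tfrac{\mathbf{1}[\,\cdot\,\subset S]}{\binom{\abs{S}}{t}}$ collapses to $\binom{M^2R}{t}^{-1}$ times an inclusion ratio of the form $\#\{S:\,\cdot\,\subset S\}/\#\{S\}$, both of which have already been computed in Fact~\ref{fact:counting1} and in the preceding Fact. It then remains only to read off the two parameters $g(n)$ and $\gamma(n)$ demanded by the hypotheses of Lemma~\ref{lemma:subsetstatistics2} and substitute them into its conclusion $O(g(n))+O(t\gamma(n))$.

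First I would verify hypothesis~(a). For a doubly-unique $T\in\Sigma_{2,t}$ the associated bipartite graph has $t$ singleton components and $u_T=2t$ unique half-strings, so Fact~\ref{fact:counting1} gives $\#\{S:T\subset S\}/\#\{S\}=(M^2R/N)^t\,(1+O(t^3/M)+O(t^2/\sqrt N))$. Dividing by $\binom{M^2R}{t}$ and expanding both $\binom{M^2R}{t}$ and $\binom{N}{t}$ via the birthday asymptotic $\binom{L}{t}=\tfrac{L^t}{t!}(1+O(t^2/L))$, the factor $(M^2R)^t$ cancels against $N^t$, leaving $\E_S\tfrac{\mathbf{1}[T\subset S]}{\binom{M^2R}{t}}=\binom{N}{t}^{-1}(1+O(t^3/M)+O(t^2/\sqrt N))$, since the residual correction $O(t^2/(M^2R))$ is dominated by $O(t^3/M)$. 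This fixes $g(n)=O(t^3/M)+O(t^2/\sqrt N)$. Subsets $T$ that fail double-uniqueness require separate treatment, but by Eq.~\eqref{eq:sigma2dim} they comprise only an $O(t^2/\sqrt N)$ fraction of $\binom{[N]}{t}$, which is already inside $g(n)$; this is the exact analog of the reduction to unique types in the proof of Lemma~\ref{lemma:subsetstatistics}.

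Next I would verify hypothesis~(b). For $T,T'\in\Sigma_{2,t}$ with $d(T,T')=d$, the preceding Fact yields $\#\{S:T\cup T'\subset S\}/\#\{S\}\le(M^2R/N)^t\,R^{-d/2}\,(1+O(t^3/M)+O(t^2/\sqrt N))$, whose decisive ingredient is the component case analysis of Fact~\ref{fact:components} on the number of mixed components $k$, together with the hypothesis $M^2R^{3/2}/N<1$ that is precisely what forces the per-unit-distance suppression $R^{-1/2}$. Dividing by $\binom{M^2R}{t}$ exactly as in the previous paragraph gives $\E_S\tfrac{\mathbf{1}[T\cup T'\subset S]}{\binom{M^2R}{t}}\le\binom{N}{t}^{-1}(1+O(g(n)))\,R^{-d/2}$, so I may take $\gamma(n)=1/\sqrt R$.

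Feeding $g(n)=O(t^3/M)+O(t^2/\sqrt N)$ and $\gamma(n)=1/\sqrt R$ into Lemma~\ref{lemma:subsetstatistics2} produces the claimed bound $O(t^3/M)+O(t^2/\sqrt N)+O(t/\sqrt R)$. The main obstacle I anticipate is not any single estimate but the bookkeeping: carrying out the binomial birthday expansions consistently so that the leading $(M^2R/N)^t$ cancels cleanly, and carefully confining the non-doubly-unique subsets to the $O(t^2/\sqrt N)$ error budget, so that hypotheses~(a) and~(b)---stated for all of $\binom{[N]}{t}$---genuinely follow from the estimates proved only on $\Sigma_{2,t}$.
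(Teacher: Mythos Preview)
Your proposal is correct and follows essentially the same approach as the paper: the paper's own proof is a one-line invocation of Lemma~\ref{lemma:subsetstatistics2}, and you have accurately spelled out how Fact~\ref{fact:counting1} and the preceding Fact supply the two hypotheses with $g(n)=O(t^3/M)+O(t^2/\sqrt N)$ and $\gamma(n)=1/\sqrt R$. Your flagged bookkeeping concern about non-doubly-unique subsets is real but routine, and the paper handles it implicitly in the same way you outline, by analogy with the reduction to unique types in Lemma~\ref{lemma:subsetstatistics}.
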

~\\~\\
Similarly, from Lemma \ref{lemma:subsetstatistics}, we obtain the following.
~\\
\begin{fact}
\label{fact:subset-phase-state-cube}
    Assume $\frac{M^2R^{3/2}}{N}<1$. Then:
    \begin{equation}
        \norm{\E_{S\sim \calF_{M,R,N}, f\sim\{0,1\}^N}\dyad{\psi_{f,S}}^{\otimes t} - \E_{\phi\sim\mathrm{Haar}\left(\C^N\right)}\dyad{\phi}^{\otimes t}}_1 \leq O\left(\frac{t^3}{M}\right) + O\left(\frac{t^2}{\sqrt{N}}\right).
    \end{equation}
\end{fact}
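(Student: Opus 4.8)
The plan is to invoke Lemma~\ref{lemma:subsetstatistics} with $\calD = \calF_{M,R,N}$, which converts the claimed trace-distance bound into a purely classical total-variation estimate. By that lemma it suffices to show
\[
\sum_{T\in\binom{[N]}{t}}\abs{\mathbb{P}_{\calF}(T) - \binom{N}{t}^{-1}} = O\!\left(\frac{t^3}{M}\right) + O\!\left(\frac{t^2}{\sqrt N}\right),
\qquad \mathbb{P}_{\calF}(T) \equiv \E_{S\sim\calF_{M,R,N}}\frac{\mathbf{1}[T\subset S]}{\binom{\abs{S}}{t}}.
\]
The crucial simplification, compared with the subset-state Fact~\ref{fact:subset-state-cube}, is that the phase case needs \emph{only} the single-inclusion probabilities $\mathbb{P}_\calF(T)$ and not the pairwise $T\cup T'$ counting of Lemma~\ref{lemma:subsetstatistics2}; this is precisely why the $O(t/\sqrt R)$ term is absent here. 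Moreover, every member of the family has fixed size $\abs{S}=M^2R$, so $\mathbb{P}_{\calF}(T) = \binom{M^2R}{t}^{-1}\cdot \#\{S: T\subset S\}/\#\{S\}$, reducing the whole problem to the ratio already computed in Fact~\ref{fact:counting1}.

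First I would restrict the sum to the doubly-unique subsets $\Sigma_{2,t}$. For $T\in\Sigma_{2,t}$ the bipartite graph of $T$ is a disjoint union of $t$ single edges (no shared half-strings), so $T$ has exactly $t$ bipartite components and $u_T = 2t$ unique half-strings. Specializing Fact~\ref{fact:counting1} to $p=t$, $u_P=2t$, and $t$ components gives
\[
\frac{\#\{S:T\subset S\}}{\#\{S\}} = \left(\frac{M}{\sqrt N}\right)^{2t}R^{t}\left(1 + O(t^3/M) + O(t^2/\sqrt N)\right) = \left(\frac{M^2R}{N}\right)^{t}\left(1 + O(t^3/M) + O(t^2/\sqrt N)\right).
\]

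Next I would convert the binomial prefactor with the birthday asymptotics $\binom{M^2R}{t} = \frac{(M^2R)^t}{t!}\bigl(1+O(t^2/(M^2R))\bigr)$ and $\binom{N}{t} = \frac{N^t}{t!}\bigl(1+O(t^2/N)\bigr)$, both valid since $t<M$ and $M^2R<N$ (the latter following from the hypothesis $M^2R^{3/2}/N<1$). The factor $R^t$ cancels exactly against the $R^t$ in $\binom{M^2R}{t}$, leaving, for every $T\in\Sigma_{2,t}$,
\[
\mathbb{P}_\calF(T) = \binom{N}{t}^{-1}\left(1 + O(t^3/M) + O(t^2/\sqrt N)\right),
\]
where the residual $O(t^2/(M^2R))$ and $O(t^2/N)$ are dominated by $O(t^3/M)$ and $O(t^2/\sqrt N)$ respectively. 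Since the per-subset error is multiplicative against $\mathbb{P}_0(T)$ and $\sum_T \mathbb{P}_0(T)\le 1$, summing $\abs{\mathbb{P}_\calF(T)-\mathbb{P}_0(T)}$ over $\Sigma_{2,t}$ yields $O(t^3/M)+O(t^2/\sqrt N)$. Adding back the complementary subsets, whose total mass under both $\mathbb{P}_0$ and $\mathbb{P}_\calF$ is $O(t^2/\sqrt N)$ by the birthday estimate \eqref{eq:sigma2dim} and the normalization argument already used in Lemma~\ref{lemma:subsetstatistics}, completes the bound.

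The main obstacle is purely the bookkeeping of multiplicative error terms: one must verify that the $R^t$ factors cancel cleanly against the falling-factorial expansion of $\binom{M^2R}{t}$ and that the accumulated relative error $O(t^3/M)+O(t^2/\sqrt N)$ survives the sum over $\binom{N}{t}$ subsets without acquiring a spurious factor of $\binom{N}{t}$ — which it does precisely because the error multiplies the normalized $\mathbb{P}_0(T)$. Once Fact~\ref{fact:counting1} is granted, it already packages all the genuinely heavy combinatorics (the sum over integer partitions $\mu$ and the Stirling-number identity), so essentially nothing beyond the specialization $u_T=2t$, $p=t$ and the birthday conversions remains.
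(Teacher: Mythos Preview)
Your proposal is correct and follows exactly the route the paper intends: the paper's own proof is the single line ``Similarly, from Lemma~\ref{lemma:subsetstatistics}, we obtain the following,'' and you have faithfully unpacked that invocation by specializing Fact~\ref{fact:counting1} to doubly-unique $t$-subsets ($u_T=2t$, $t$ components), cancelling the $(M^2R)^t$ against $\binom{M^2R}{t}$, and closing with the normalization/birthday argument on $\Sigma_{2,t}^c$. The only microscopic imprecision is that the residual $\mathbb{P}_\calF$-mass off $\Sigma_{2,t}$ is $O(t^3/M)+O(t^2/\sqrt N)$ rather than just $O(t^2/\sqrt N)$, but this is already absorbed in the final bound and does not affect the argument.
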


Finally, we arrive at Eq.~\eqref{eq:closeness-Haar-schmidt-subset} and Eq.~\eqref{eq:closeness-Haar-schmidt-subset-phase} from Fact \ref{fact:subset-state-cube} and Fact \ref{fact:subset-phase-state-cube}, respectively. This completes the proof of Lemma \ref{lem:HPSShaar}.

\section{Large-brick phase states}\label{sec:LargeBrickProofs}
In this section, we formally prove our main results with respect to the properties of large-brick phase states. We begin with the proof of \cref{fact:LargeBrickGlobalPseudorandom}, which claims that these states are close to Haar random states.
\subsection{Proof of \cref{fact:LargeBrickGlobalPseudorandom}}
\label{subsec-app:lbps-close-to-haar}
Define the ``birthday-subspace'' (i.e. the collision-free subspace) in the symmetric subspace $\Sym{t}{\C^N}$ as:
\begin{equation}
    \Sigma_{\mathrm{bday}} = \left\{\frac{1}{\sqrt{t!}}\sum_{\sigma\in \mathbb{S}_t}R_\sigma\ket{x_1^{(1)}\cdots x_{2m}^{(1)}\;\;\cdots\;\;x_1^{(t)}\cdots x_{2m}^{(t)}}\;:\;\left(x_{j}^{(k)}\right)_{\substack{j\in[2m]\\k\in[t]}}\in \left(\{0,1\}^b\right)^{2mt} \text{all distinct}\right\}\,,
\end{equation}
where the symmetric group acts by permuting the $t$ sites, as usual:
\begin{equation}
    R_\sigma\ket{x_1^{(1)}\cdots x_{2m}^{(1)}\;\;\cdots\;\;x_1^{(t)}\cdots x_{2m}^{(t)}} = \ket{x_1^{(\sigma^{-1}(1))}\cdots x_{2m}^{(\sigma^{-1}(1))}\;\;\cdots\;\;x_1^{(\sigma^{-1}(t))}\cdots x_{2m}^{(\sigma^{-1}(t))}}.
\end{equation}
Note that the birthday subspace has dimension:
\begin{align*}
    \mathrm{dim}\;\Sigma_{\mathrm{bday}} &= \frac{2^b(2^b-1)\cdots (2^b-2mt+1)}{t!}\\
    &= \frac{2^{nt}}{t!}\left(1-O\left(\frac{m^2t^2}{2^b}\right)\right)\\
    &= \mathrm{dim}\;\Sym{t}{\C^N}\left(1-O\left(\frac{n^2t^2}{2^b}\right)\right)\,,\label{eq:DimBdaySubspace}
\end{align*}
since $2bm=n$ and $m<n$, and using the fact that $\mathrm{dim}\;\Sym{t}{\C^N}=\binom{2^n+t-1}{t}=\frac{2^{nt}}{t!}\left(1+O\left(\frac{t^2}{2^n}\right)\right)$.

Let $\Pi_{\mathrm{bday}}$ be the projector on the above birthday subspace $\Sigma_{\mathrm{bday}}$. Let the two density matrices of interest be:
\begin{align}
    \rho_{\mathrm{Haar}}^{(t)} &= \E_{\ket{\phi} \sim \mathrm{Haar}(\C^N)} \dyad{\phi}^{\otimes t} \\
    \rho_{\mathrm{LBPS}}^{(t)} &= \E_{f_{1:m},g_{1:m}}\dyad{\psi_{f_{1:m},g_{1:m}}}^{\otimes t}.
\end{align}
By their manifest symmetry, both density matrices belong to the symmetric subspace $\Sym{t}{\C^N}$. Moreover,
\begin{equation}\label{eq:HaarMomentProjector}
     \rho_{\mathrm{Haar}}^{(t)} = \frac{\Pi_{\Sym{t}{\C^N}}}{\mathrm{dim}\,\Sym{t}{\C^N}}.
\end{equation}
We will show that the two density matrices are close on the birthday subspace. Begin by evaluating the large-block phase-state ensemble explicitly:
\begin{align}
    \rho_{\mathrm{LBPS}}^{(t)} &= \frac{1}{N^t}\sum_{\substack{\left(x_j^{(k)}\right)_{j\in[2m],k\in[t]} \in \left(\B^b\right)^{2mt} \\ \left(y_j^{(k)}\right)_{j\in[2m],k\in[t]} \in \left(\B^b\right)^{2mt}}} \E_{f_{1:m},g_{1:m}} (-1)^{\varphi(x,y)} \dyad{x_1^{(1)}\cdots x_{2m}^{(1)}\;\;\cdots\;\;x_1^{(t)}\cdots x_{2m}^{(t)}}{y_1^{(1)}\cdots y_{2m}^{(1)}\;\;\cdots\;\;y_1^{(t)}\cdots y_{2m}^{(t)}}\,,
\end{align}
where the phase function is:
\begin{equation}
    \varphi(x,y) = \sum_{\substack{j\in[m]\\k\in[t]}} f_j(x_{2j-1}^{(k)}.x_{2j}^{(k)}) + g_j(x_{2j-2}^{(k)}.x_{2j-1}^{(k)}) + f_j(y_{2j-1}^{(k)}.y_{2j}^{(k)}) + g_j(y_{2j-2}^{(k)}.y_{2j-1}^{(k)})\,.
\end{equation}
Restricting from $\rho_{\mathrm{LBPS}}^{(t)}$ to the birthday subspace $\Pi_{\mathrm{bday}}\rho_{\mathrm{LBPS}}^{(t)}\Pi_{\mathrm{bday}}$ makes it easier to evaluate the phase averages under random functions $f_j$ and $g_j$. Notice that the average over $f_j$ is zero unless the same argument of $f_j$ appears an even number of times in the sum $\varphi(x,y)$. Since on the birthday subspace any sub-string $x_j^{(k)}$ of length $b$ can only occur once, so the above average over $f_1,\dots,f_{m}$ is one if and only if for every pair $x_{2j-1}^{(a)}.x_{2j}^{(a)}$ there is a single equal pair $y_{2j-1}^{(b)}.y_{2j}^{(b)}$ (matchings with more than two equivalent terms are also forbidden by the unique sub-string condition on $\Sigma_{\mathrm{bday}}$). Similarly, averaging over the functions $g_1,\dots,g_{m}$ is one if and only if for any pair $x_{2j-2}^{(a)}.x_{2j-1}^{(a)}$ there exists a single equal pair $y_{2j-2}^{(b)}.y_{2j-1}^{(b)}$.

Combining the two constraints only leaves configurations such that for each $n$-bit site $x_1^{(a)}\cdots x_{2m}^{(a)}$ there exists a single identical $n$-bit site $y_1^{(b)}\cdots y_{2m}^{(b)}$. In other words, only $t$-permutations survive:
\begin{align}
    \Pi_{\mathrm{bday}}\rho_{\mathrm{LBPS}}^{(t)}\Pi_{\mathrm{bday}} &= \frac{1}{N^t}\sum_{\substack{(x^{(1)},\dots,x^{(t)})\in\Sigma_{\mathrm{bday}}\\\sigma\in\mathbb{S}_t}}\dyad{x^{(1)},\dots,x^{(t)}}{y^{(\sigma(1))},\dots,y^{(\sigma(t))}} \\
    &= \frac{t!}{N^t}\sum_{\mathbf{x}\in\Sigma_{\mathrm{bday}}}\dyad{\mathbf{x}}\\
    &= \frac{1 + O\left(\frac{t^2}{2^n}\right)}{\binom{2^n+t-1}{t}}\Pi_{\mathrm{bday}}\,.
\end{align}
Using \eqref{eq:HaarMomentProjector} and \eqref{eq:DimBdaySubspace}, we have that the two density matrices are close on the birthday subspace:
\begin{equation}
    \left\|\Pi_{\mathrm{bday}}\rho_{\mathrm{LBPS}}^{(t)}\Pi_{\mathrm{bday}} - \Pi_{\mathrm{bday}}\rho_{\mathrm{Haar}}^{(t)}\Pi_{\mathrm{bday}}\right\|_1 = O\left(\frac{t^2n^2}{2^b}\right).
\end{equation}
Using Lemma 1 from \cite{giurgicatiron2023pseudorandomnesssubsetstates}, we can lift this to the full symmetric subspace to get the desired result:
\begin{equation}
\left\|\rho_{\mathrm{LBPS}}^{(t)} - \rho_{\mathrm{Haar}}^{(t)}\right\|_1 = O\left(\frac{t^2n^2}{2^b}\right).
\end{equation}
Pseudorandomness follows from replacing the random functions with quantum-secure pseudorandom functions (which admit efficient circuits). The circuit architecture immediately implies a bound of $O(b)$ for the entanglement entropy across geometric cuts, implying that this ensemble is pseudoentangled along a 1D geometry.\qed 

\subsection{Large-brick phase states form projected designs across almost all cuts}
\label{subsec-app:lbps-projected-design}
Here we prove that for the large-brick phase state,  by measuring a random biparition of qubits results in a projected ensemble that is pseudoentangled. That is, these states lead to a pseudoentangled projected design when a measurement is made across almost all cuts.

Given a $n$-bit string $x\in\B^n$, partition it into substrings corresponding to the bricks of the two-layer brickwork circuit preparing the phase states: $x=x_1.x_2.\cdots.x_{2m}$, where $x_1,\cdots,x_{2m} \in \B^b$. Then, any cut further partitions these bricks into segments which belong to one side or the other of the cut: $x_j=a_j.b_j$ where $a_j\in\B^{k_j}$ and $b_j\in\B^{b-k_j}$, for $j\in[2m]$ and some $k_1,\dots,k_{2m}\in\{0,\dots,b\}$.

We can treat a random cut (i.e. bipartition) of $n$ qubits as $n$ i.i.d. draws from a $\Ber(1/2)$ distribution, with the $j$'th draw determining whether qubit $j$ is on one side or the other of the cut. Then each $k_j$ follows the standard binomial statistics, in particular with a tail bound:
\begin{equation}
    \Pr[k_j\leq k]\leq \exp[-\frac{(b-2k)^2}{4b}].
\end{equation}
A similar tail applies for the upper tail of this quantity. Then, by a union bound over these $2m$ events, we know that the typical cut segmentation cannot be too uneven on even a single brick:
\begin{equation}\label{eq:BinomialConcentration}
    \Pr\left[\exists j\in[2m]\,:\,k_j\notin \left[\frac{b}{4},\frac{3b}{4}\right]\right] \leq 4ne^{-b/16}.
\end{equation}
This is negligible in $n$ if $b\geq\Omega(\log^2n)$, as assumed in our construction.

To study the projected ensemble properties across a random cut of this kind, assume we are measuring in the computational basis on the ``b'' side of the cut with measurement outcome $b=(b_1,\dots,b_{2m})\in\B^{n - \sum_{j\in[2m]}k_j}$. All outcomes occur with equal probability $2^{-n+\sum_{j\in[2m]}k_j}$. The corresponding post-measurement state on the ``a'' side of the cut is:
\begin{equation}
    \ket{\psi_{f_{1:m},g_{1:m}}^{a|b}} = \frac{1}{2^{\sum_{j\in[2m]}k_j/2}}\sum_{\substack{a_1\in\B^{k_1}\\\vdots\\a_{2m}\in\B^{k_{2m}}}}(-1)^{\sum_{j\in[m]}f_j(a_{2j-1}.b_{2j-1}.a_{2j}.b_{2j})+g_j(a_{2j-2}.b_{2j-2}.a_{2j-1}.b_{2j-1})}\ket{a_1,\cdots,a_{2m}}.
\end{equation}
Notice that we can write this state as a two-layer brickwork phase-state on the ``a'' subsystem:
\begin{equation}
    \ket{\psi_{\tilde{f}_{1:m}^{(b)},\tilde{g}_{1:m}^{(b)}}^{a}} = \frac{1}{2^{\sum_{j\in[2m]}k_j/2}}\sum_{\substack{a_1\in\B^{k_1}\\\vdots\\a_{2m}\in\B^{k_{2m}}}}(-1)^{\sum_{j\in[m]}\tilde{f}^{(b)}_j(a_{2j-1}.a_{2j})+\tilde{g}^{(b)}_j(a_{2j-2}.a_{2j-1})}\ket{a_1,\cdots,a_{2m}},
\end{equation}
where we defined the $b$-outcome-conditioned ``post-measurement functions'' $\tilde{f}_j^{(b)}(a_{2j-1}.a_{2j})\equiv f_j(a_{2j-1}.b_{2j-1}.a_{2j}.b_{2j})$ and similarly for $\tilde{g}_j^{(b)}$. This resulting state is a ``large-brick'' phase-state if the bricks are large enough, e.g., $k_1,\dots,k_{2m}\geq \Omega(\log^2 n)$, which holds with high probability by the concentration result of \eqref{eq:BinomialConcentration}.

\section{Towards Pseudorandomness beyond BQP}
\label{sec:pseudorandomness-beyond-bqp}
\noindent In this section, we want to motivate pseudorandomness for distinguishers beyond $\mathsf{BQP}$. We will consider two different ensembles of states $\mathcal{B}_1$ and $\mathcal{B}_2$. The first ensemble $\mathcal{B}_1$ is an ensemble of pseudorandom phase states, given by
\[
\frac{1}{\sqrt{2^n}} \sum_{x \in \{0, 1\}^n} (-1)^{f(x)} \ket{x},
\]
where $f$ is a pseudorandom function. The second ensemble $\mathcal{B}_2$ is an ensemble of large brick phase states, as in Figure \ref{fig:largebricks1}, where the phases are picked pseudorandomly. The details of these pseudorandom functions will be explained later.

Now, let $\mathcal{D}$ be a polynomial time quantum distinguisher, given $t$ copies of a state $|\psi\rangle$---which is promised to be either from $\mathcal{B}_1$ (or $\mathcal{B}_2$) or a Haar random state---with the goal of $\mathcal{D}$ being to distinguish which is the case. $\mathcal{D}$ has access to a oracle $\mathcal{Q}_P$, which it can query only once. $\mathcal{D}$ interfaces with the oracle as follows.
\begin{itemize}
\item The input is the state $\ket{\psi}$ and a classical string describing
\[
C \;=\; \bigl(U_{1},M_{1},\dots,U_{P},M_{P}\bigr),
\]
where $U_i$ is a unitary operator on $n$ qubits, and $M_i$ is a \emph{collapsing} measurement operator on $s$ qubits with
  $0 \le s \le n$. 

Define the intermediate states
\[
\ket{\phi_{0}} \;=\; |\psi\rangle,
\qquad
\ket{\phi_{i}} \;=\; \frac{M_i U_i \ket{\phi_{i-1}}}{\|M_i U_i \ket{\phi_{i-1}}\|}.
\]

\item The oracle $\mathcal{Q}_{P}$, on input $C$, outputs the sequence
$\bigl\{\ket{\phi_{i}}\bigr\}_{i=0}^{P}$.
\end{itemize}

\noindent The setting just described is, more or less, captured by the class $\mathsf{PDQP}$, which is one of the classes ''just above`` $\mathsf{BQP}$, see for e.g. \cite{aaronson2016spacejustabovebqp, miloschewsky2025newlowerboundsquantumcomputation}. 

To instantiate the states in $\mathcal{B}_1$ and $\mathcal{B}_2$ efficiently, we conjecture the existence of $\mathsf{PDQP}$-secure pseudorandom functions. We leave open whether such functions exist. Note that $\mathsf{PDQP}$ contains $\mathsf{SZK}$ where a variety of lattice problems, otherwise hard for $\mathsf{BQP}$, become easy \cite{PeikertVaikuntanathan2008}. Hence, certain constructions based on the hardness of lattice problems, like Learning With Errors (LWE), may not be good candidates. However, other hardness conjectures, like Learning Parity with Noise (LPN) \cite{blum2000noisetolerantlearningparityproblem} can still potentially work, although a rigorous security proof is open and we leave it for future work.

Now, let us restrict the power of the distinguisher $\mathcal{D}$ and that of the oracle $\mathcal{Q}_P$. Firstly, let us take each $U_i$ to be identity and let us take each $M_i$ to be a (randomly sampled) standard basis projector on any cut. 

\begin{lemma}[Pseudorandomness beyond \textsf{BQP}]
Consider an ensemble $\mathcal{B}_1$ of pseudorandom phase states, given by
\[
\frac{1}{\sqrt{2^n}} \sum_{x \in \{0, 1\}^n} (-1)^{f(x)} \ket{x},
\]
\noindent where $f$ is a \textsf{PDQP}-secure pseudorandom function. Then, for every distinguisher $\mathcal{D}$, which can query an oracle $\mathcal{Q}_P$ of the form just described, we have that with high probability over the choice of the input state,
\begin{equation}
\left|\underset{{\psi} \sim \mathcal{B}_1}{\mathsf{Pr}}[\mathcal{D}(\ket{\psi}^{\otimes t}) = 1] -    \underset{{\psi} \sim \mathsf{Haar}}{\mathsf{Pr}}[\mathcal{D}(\ket{\psi}^{\otimes t}) = 1]     \right|
 = \negl(n),
\end{equation}
for all $t = \mathsf{poly}(n)$.
\end{lemma}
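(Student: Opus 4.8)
The plan is to first discharge the cryptographic assumption, reducing to a purely information-theoretic statement, and then to establish the real content: that pseudorandom phase states form a \emph{projected design across every cut}, a property which survives the non-collapsing access granted by $\mathcal{Q}_P$ once we fix each $U_i=I$ and each $M_i$ to a computational-basis projector on a random cut. For the reduction, note that the whole experiment---$\mathcal{D}$ together with its single query to the restricted oracle---is by construction a $\mathsf{PDQP}$-type computation, so by the assumed $\mathsf{PDQP}$-security of $f$, replacing the pseudorandom $f$ with a uniformly random Boolean function alters the acceptance probability by at most $\negl(n)$. It therefore suffices to bound the statistical distinguishing advantage when $f$ is genuinely random.

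Next I would characterize the oracle output explicitly. Since every $U_i$ is the identity and every $M_i$ projects a block of qubits onto a computational-basis string, the intermediate states $\ket{\phi_0},\dots,\ket{\phi_P}$ are phase states carrying an ever-larger set of frozen qubits: conditioned on an outcome $z$ on a measured subsystem $B$, the residual state on the complementary subsystem $A$ is $2^{-|A|/2}\sum_{a}(-1)^{f(a.z)}\ket{a}$, whose phase function is the restriction $f_z(a)\equiv f(a.z)$. Two consequences are immediate and important. First, every outcome $z$ is \emph{exactly} equiprobable, because all amplitudes have unit modulus. Second, for a truly random $f$ the restrictions to disjoint slices are independent uniform functions, so the residual phase states attached to distinct outcomes are mutually independent fresh phase states. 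Both features precisely mirror the Haar case, where measuring a subsystem yields near-uniform outcomes and independent Haar-random residual states on $A$---the distinct ``columns'' of a Gaussian amplitude matrix.

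The technical core is to show that the density operator describing $\mathcal{D}$'s full view---the copies of $\ket{\psi}$ together with the sequence $\ket{\phi_1},\dots,\ket{\phi_P}$---is $\negl(n)$-close in trace distance to the corresponding Haar object. Expanding the relevant moment operator and averaging over $f$, the random signs cancel unless all arguments of $f$ pair up, exactly as in the proof of \Cref{fact:LargeBrickGlobalPseudorandom}. On the collision-free (``birthday'') subspace the surviving configurations are the internal pairings within each block of copies, which reproduce the product of Haar symmetric-subspace projectors, plus cross-pairings linking a global index $x=a.z$ to a residual index $a$ at outcome $z$. Each such cross-pairing forces a global index into a single measured slice and therefore costs a factor $2^{-|B|}$; summing over the $\poly(n)$ possible coincidences shows the cross-contributions are negligible, so the view factorizes into independent per-block moments. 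Each block is a phase-state $t$-design close to Haar by \cite{brakerski2019pseudo} applied on its own subsystem, and distinct residual blocks match Haar because both are independent across outcomes. Lifting from the birthday subspace to the full symmetric subspace via Lemma 1 of \cite{giurgicatiron2023pseudorandomnesssubsetstates} then yields the bound, after which undoing the reduction to a random function completes the argument.

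I expect the main obstacle to be this last step: controlling the correlations between the global state and its own projections uniformly over all $P=\poly(n)$, necessarily nested, randomly chosen cuts. The cross-term suppression $2^{-|B|}$ and the per-block design error are both governed by the \emph{smallest} residual subsystem, so one must argue that the random cuts stay balanced enough---by binomial concentration and a union bound, cf.~\eqref{eq:BinomialConcentration}---that every residual subsystem appearing in the sequence remains of size $\omega(\log n)$, and one must pin down how the error accumulates across the $P$ nested measurements. This is also exactly where the contrast with subset-phase states enters, since for those the analogous residuals collapse to single basis states and the argument fails.
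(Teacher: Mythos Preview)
Your proposal follows essentially the same skeleton as the paper's proof: reduce from the pseudorandom $f$ to a truly random one using the $\mathsf{PDQP}$-security assumption, observe that computational-basis measurement on any subset of qubits sends a phase state to a phase state on the residual qubits (with phase function given by restriction), note that random phase states are statistically close to Haar, and finish with a hybrid argument. The paper's own proof is a three-line sketch that does exactly this and nothing more.

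Where you diverge is in the amount of work you propose. The paper does \emph{not} carry out the explicit trace-distance analysis of the distinguisher's full view, the cross-term suppression at cost $2^{-|B|}$, or the birthday-subspace lifting; it simply asserts that each $\ket{\phi_i}$ in the oracle's output sequence is itself a pseudorandom state and appeals to ``a series of hybrid arguments.'' Your concerns about correlations between the global copies and the nested projections are legitimate for a fully rigorous argument, but they go well beyond the level of detail the paper provides.

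One small correction: your invocation of the binomial concentration bound \eqref{eq:BinomialConcentration} is misplaced here. That bound is needed for the large-brick construction $\mathcal{B}_2$, where one must ensure a random cut splits each brick roughly evenly. For full phase states $\mathcal{B}_1$ there is no brick structure, and the residual is a phase state on the remaining qubits for \emph{every} cut, balanced or not. The only residual-size issue for $\mathcal{B}_1$ is whether the nested measurements leave enough qubits for the phase-state-versus-Haar bound $O(t^2/2^{|A|})$ to stay negligible; that is a different concern, and the paper does not address it either.
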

\noindent \begin{proof}
Random phase states are statistically pseudorandom, with exponentially close trace distance to a Haar random state; see for e.g. \cite{ananth2023pseudorandomfunctionlikequantumstate, aaronson_et_al:LIPIcs.ITCS.2024.2}. By a similar calculation to \Cref{subsec-app:lbps-projected-design}, it is easy to see that across any cut, the projected ensemble corresponding to a pseudorandom phase state is \emph{also} a pseudorandom phase state of smaller size. Hence, in the output sequence produced by the oracle $\mathcal{Q}_P$,
\[
\bigl\{\ket{\phi_{i}}\bigr\}_{i=0}^{P},
\]
every $\ket{\phi_{i}}$ is a pseudorandom state with high probability. The proof now follows from a series of hybrid arguments.
\end{proof}

Now, let us take each $U_i$ to be identity, $M_1$ to be a (randomly sampled) standard basis projector on a \emph{randomly chosen} cut, and all other collapsing measurements to be empty.

\begin{lemma}[Pseudoentanglement beyond \textsf{BQP}]
Consider an ensemble $\mathcal{B}$ of large-brick phase states as described in Figure \ref{fig:largebricks1}, which have entanglement scaling as $\sim \log^2 n$ across geometric cuts. Then, for every distinguisher $\mathcal{D}$, which can query an oracle $\mathcal{Q}_P$ of the form just described, we have that with high probability over the choice of the input state,
\begin{equation}
\left|\underset{{\psi} \sim \mathcal{B}_2}{\mathsf{Pr}}[\mathcal{D}(\ket{\psi}^{\otimes t}) = 1] -    \underset{{\psi} \sim \mathsf{Haar}}{\mathsf{Pr}}[\mathcal{D}(\ket{\psi}^{\otimes t}) = 1]     \right|
 = \negl(n),
\end{equation}
for all $t = \mathsf{poly}(n)$.
\end{lemma}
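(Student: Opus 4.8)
The plan is to mirror the proof of the preceding lemma for $\mathcal{B}_1$, with the binomial brick-size concentration of \eqref{eq:BinomialConcentration} supplying the one genuinely new ingredient. First I would simplify the oracle output: since every $U_i$ is the identity and only $M_1$ (a random standard-basis projector on a randomly chosen cut) acts nontrivially while $M_2,\dots,M_P$ are empty, the intermediate states collapse to just two distinct objects, the pre-measurement state $\ket{\phi_0}=\ket{\psi}$ and the post-measurement state $\ket{\phi_1}=\dots=\ket{\phi_P}$ obtained by projecting a randomly chosen set of qubits onto a random computational-basis outcome. Thus $\mathcal{D}$ effectively receives copies of $\ket{\psi}$ together with copies of its projection across a random cut, and the whole procedure is a $\mathsf{PDQP}$-type computation.

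Next I would invoke the projected-design analysis of Section~\ref{subsec-app:lbps-projected-design}: measuring a large-brick phase state in the computational basis across any cut leaves a large-brick phase state on the unmeasured qubits, whose effective brick functions $\tilde f_j^{(b)},\tilde g_j^{(b)}$ are obtained by fixing the measured bits. The binomial tail bound \eqref{eq:BinomialConcentration} then guarantees that, with probability at least $1-4ne^{-b/16}=1-\negl(n)$ over the random cut, every brick is split so that each remaining segment has length in $[b/4,3b/4]=\Omega(\log^2 n)$. Conditioned on this high-probability event, $\ket{\phi_1}$ is a genuine large-brick phase state with effective half-brick size $b'\geq b/4 = \Omega(\log^2 n)$, so \Cref{fact:LargeBrickGlobalPseudorandom} applies to it exactly as it applies to $\ket{\phi_0}$, giving $O(t^2 n^2 / 2^{b'})$ trace-distance closeness to Haar for both states.

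I would then close the argument with a hybrid chain. Because $\mathcal{D}$ together with $\mathcal{Q}_P$ is a $\mathsf{PDQP}$-type algorithm and the brick functions are assumed $\mathsf{PDQP}$-secure pseudorandom functions, replacing all PRFs by truly random functions changes the acceptance probability by at most $\negl(n)$. With truly random functions the pre- and post-measurement states are statistically close to Haar by \Cref{fact:LargeBrickGlobalPseudorandom}, so the output is indistinguishable from what the oracle produces on a genuine Haar state, whose projection across a random cut is itself a projected design. Summing the bad-cut probability from the concentration bound, the PRF-replacement error, and the two Haar-closeness bounds yields the claimed $\negl(n)$ advantage for all $t=\poly(n)$.

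The main obstacle I anticipate is that $\mathcal{D}$ sees the pre-measurement copies and the correlated post-measurement copies \emph{simultaneously}, so individual closeness of each to Haar is not by itself enough — one must control the \emph{joint} object $\mathbb{E}[\dyad{\psi}^{\otimes t}\otimes\dyad{\psi^{(b)}}^{\otimes t'}]$ and compare it to the analogous Haar joint moment. I expect this to follow from the same birthday / collision-free bookkeeping used in the proof of \Cref{fact:LargeBrickGlobalPseudorandom}, now applied to the enlarged index set labelling both the unmeasured and measured registers, together with a union bound ensuring the $\negl(n)$ bad-cut probability is not amplified by the $\poly(n)$ copies. Care is also needed to take the concentration event and the hybrid replacement in the correct order, since the effective functions $\tilde f_j^{(b)},\tilde g_j^{(b)}$ themselves depend on the realized cut and measurement outcome.
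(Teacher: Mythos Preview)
Your proposal follows essentially the same approach as the paper: reduce the oracle output to the pre-measurement state together with polynomially many copies of the projected state, invoke the projected-design analysis of Section~\ref{subsec-app:lbps-projected-design} and the binomial concentration \eqref{eq:BinomialConcentration} to conclude the residual ensemble is pseudorandom with high probability via \Cref{fact:LargeBrickGlobalPseudorandom}, and close with a hybrid replacing the $\mathsf{PDQP}$-secure PRFs by truly random functions. The paper's own proof is a two-sentence sketch invoking exactly these ingredients and ``standard hybrid arguments,'' and in particular does not explicitly handle the joint pre/post-measurement moment issue you flag --- so your treatment is strictly more careful than the paper's.
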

\noindent \begin{proof}
Since all but the first collapsing measurement is empty, the output sequence of $\mathcal{Q}_P$ is now polynomially many copies of a state from a projected ensemble which is pseudorandom with high probability, according to \Cref{fact:LargeBrickGlobalPseudorandom} and the argument in \Cref{subsec-app:lbps-projected-design}. The proof then follows from standard hybrid arguments.
\end{proof}
\end{document}